\newtheorem{theorem}{Theorem}
\newtheorem{lemma}[theorem]{Lemma}
\newtheorem{corollary}[theorem]{Corollary}
\newtheorem{proposition}[theorem]{Proposition}
\newtheorem{fact}[theorem]{Fact}
\newtheorem{claim}[theorem]{Claim}
\newtheorem{definition}[theorem]{Definition}
\newtheorem{remark}[theorem]{Remark}
\def\appendices{\par
    \setcounter{section}{0}\setcounter{subsection}{0}
    \def\thesection{Appendix \Roman{section}}
}
\newcommand{\an}{{n_{00}}}
\renewcommand{\S}{\CalS}
\newcommand{\badk}{{k^*}}
\newcommand{\pr}{\prime}
\newcommand{\E}{{\bf E}}
\newcommand{\al}{n}
\newcommand{\sig}{\sigma}
\newcommand{\eps}{\epsilon}
\newcommand{\R}{\mathbb R}
\newcommand{\C}{\mathbb C}
\newcommand{\F}{\mathbb F}
\newcommand{\CalCN}{\mathcal{CN}}
\newcommand{\CalN}{\mathcal{N}}
\newcommand{\CalC}{{\mathcal{C}}}
\newcommand{\CalS}{{\mathcal{S}}}
\renewcommand{\b}{\text{solid}}
\renewcommand{\r}{\text{dashed}}
\newcommand{\snr}{\mathtt{SNR}}
\newcommand{\inr}{\mathtt{INR}}
\newcommand{\n}{N_0}
\newcommand{\CN}{\mathcal{CN}}
\newcommand{\underC}{\underline{\CalC}}
\newcommand{\overC}{\overline{\CalC}}
\newcommand{\nt}{{\tilde n}}
\renewcommand{\phi}{\varphi}
\newcommand{\lf}{\lfloor}
\newcommand{\rf}{\rfloor}
\newcommand{\y}{\tilde{y}}
\begin{document}

\title{The Approximate Capacity of the Many-to-One and One-to-Many Gaussian Interference Channels}
\author{ Guy Bresler \and
Abhay Parekh \and David N. C. Tse \thanks{This work appeared without full proofs in an extended abstract presented at the Allerton Conference on Communication, Control, and Computing \cite{ManyToOneAllerton}, September, 2007. This work was supported by a Vodafone-US Foundation Graduate Fellowship, an NSF Graduate Research Fellowship, and by the National Science Foundation under an ITR grant: the 3 Rs of Spectrum Management: Reuse, Reduce and Recycle.} \thanks{The authors are with Wireless Foundations, Department of EECS, UC Berkeley, Berkeley, California, USA. Email: {\sffamily \{gbresler, parekh, dtse\}@eecs.berkeley.edu} }
}

\maketitle

\begin{abstract}
Recently, Etkin, Tse, and Wang found the capacity region of the two-user Gaussian interference channel to within one bit/s/Hz. A natural goal is to apply this approach to the Gaussian interference channel with an arbitrary number of users. 
We make progress towards this goal by finding the capacity region of the many-to-one and one-to-many Gaussian interference channels to within a constant number of bits. The result makes use of a deterministic model to provide insight into the Gaussian channel. The deterministic model makes explicit the dimension of signal scale. A central theme emerges: the use of lattice codes for alignment of interfering signals on the signal scale. 
\end{abstract}


\section{Introduction}
Finding the capacity region of the two user Gaussian interference channel is a long-standing open problem. 
Recently, Etkin, Tse, and Wang \cite{OneBit} made progress on this problem by finding the capacity
region to within one
bit/s/Hz. In light of the difficulty in finding the exact capacity regions of most Gaussian channels, their result introduces a fresh approach towards understanding multiuser Gaussian channels.
A natural goal is to apply their approach to the Gaussian
interference channel with an arbitrary number of users. This paper
makes progress towards this goal by considering two special cases---the
many-to-one and one-to-many interference channels (IC)---where interference is experienced, or is caused, by only one user. The capacity regions of the many-to-one and one-to-many Gaussian ICs are
determined to within a constant gap, independent of the channel
gains. For the many-to-one IC, the size of the gap is less than $(2 K+5) \log K$ bits per user,
where $K$ is the number of users. For the one-to-many IC, the gap is $2K+1$ bits for user 0 and 1 bit for each of the other users.
This result establishes, as a
byproduct, the generalized degrees-of-freedom regions of these channels,
as defined in \cite{OneBit}.

Despite interference occurring only at one user, the capacity regions of the many-to-one and one-to-many ICs
exhibit an interesting combinatorial structure, and new outer
bounds are required. To elucidate this structure, we make use of a
particular deterministic channel model, first introduced in
\cite{DeterministicRelay}; this model retains the essential features
of the Gaussian channel, yet is significantly simpler. We show that
the capacity regions of the deterministic and Gaussian channels are
closely related to one another, and in fact, the generalized degrees of freedom
region of the Gaussian channel is \emph{equal} to the capacity
region of an appropriate deterministic channel.

While the derivation of the outer bound for the many-to-one Gaussian IC parallels that
of the deterministic case, the achievable strategy for the Gaussian
channel is noteworthy. In order to successfully emulate the strategy
for the deterministic channel in the Gaussian setting, it is necessary
to use lattice codes. The idea is that since there are multiple
interferers, they should align their interference so as to localize
the aggregate effect; the impact of the interference is practically
as though from one user only.  The idea of interference alignment was introduced in a different setting for the MIMO X channel by Maddah-Ali, \emph{et. al.} \cite{MMK08} and for the many-user interference channel by Cadambe and Jafar \cite{Jafar}. In those works alignment is achieved in the signal space; in this paper, alignment is achieved on the signal \emph{scale}. 
Lattice codes, rather than
random codes, are used to achieve this localization. In Section~\ref{s:ExampleManytoOne} we consider an example using a simple generalization (to many users) of the Han-Kobayashi scheme with Gaussian codebooks. We show that this random coding strategy cannot achieve the degrees-of-freedom of the many-to-one Gaussian IC.

Lattice strategies are a natural solution to certain multiuser
problems, and several examples have recently been found for which
lattice strategies achieve strictly better performance than any
known random codes, including: the work of Nazer and Gastpar on computation over multiple access
channels \cite{Bobak}, and Philosoph, \emph{et. al.}'s dirty paper coding for multiple access
channels \cite{DirtyMAC}.

In contrast to the many-to-one IC, the one-to-many IC is simpler, requiring only Gaussian random codebooks. In particular, a generalized Han-Kobayashi scheme with Gaussian random codebooks is essentially optimal. As in the many-to-one IC, a deterministic channel model guides the development. Moreover, the deterministic channel model reveals the relationship between the two channels:
the capacity regions of 
the deterministic many-to-one IC and one-to-many IC, obtained by reversing the role of transmitters and receivers, are identical, i.e. the channels are reciprocal. This relationship is veiled in the Gaussian setting, where the statement holds only in an approximate sense. 

While the many-to-one IC is more theoretically interesting, requiring a new achievable scheme using lattices to align interference, 
the one-to-many IC seems more practically relevant. One easily imagines a scenario with one powerful long-range transmit-receive link and many weak short-range links sharing the wireless medium. Here, to a good approximation, there is no interference except from the single powerful transmitter.

Using the deterministic model and the framework developed in this work, Cadambe et al. \cite{CJS08} found a sequence of symmetric $K$-user Gaussian interference channels with arbitrarily close to $K/2$ total degrees of freedom, and
Jafar and Vishwanath \cite{JV08} show that the generalized degrees-of-freedom region of the fully symmetric many-user interference channel (with all the signal-to-noise ratios equal to $\snr$ and all interference-to-noise ratios equal to $\inr=\snr^\alpha$) is independent of the number of users and is identical to the 2-user case except for a singularity at $\alpha=1$ where the degrees of freedom per user is $\frac1K$.

Independently, Jovicic, Wang, and Viswanath \cite{ITW_JWV} considered the many-to-one and one-to-many interference channels. They found the capacity to within a constant gap for the special case where the direct gains are greater than the cross gains. 
In this case, Gaussian random codebooks and pairwise constraints for the outer bound are sufficient. As mentioned above, for the many-to-one IC with arbitrary gains, Gaussian random codebooks are suboptimal; also, in general for both the many-to-one and one-to-many channels a sum-rate constraint is required for each subset of users. 

The paper is organized as follows. Section~\ref{s:MotivatingExample} introduces the Gaussian many-to-one IC and studies a simple example channel that motivates the entire paper. 
Section~\ref{s:DetModel} presents the deterministic channel model. Then, in Section~\ref{s:DeterministicManytoOneCapacity}, the
capacity region of the deterministic many-to-one IC is established. Section~\ref{s:GaussianManytoOne} focuses on the Gaussian many-to-one IC and
finds the capacity to within a constant gap. Finally, sections~\ref{s:OneToManyDet} and~\ref{s:OneToManyGaussian} consider the one-to-many interference channel, show that the corresponding deterministic model is reciprocal to the many-to-one channel, and approximate the capacity of the Gaussian channel to within a constant gap.

\section{Gaussian Interference Channel and Motivating Example}\label{s:MotivatingExample}
\subsection{Gaussian Interference Channel Model}
We first introduce the  multi-user Gaussian interference channel. For notational simplicity in the sequel, we assume there are $K+1$ users, labeled $0,1,\dots,K$. The channel outputs are related to the inputs by
\begin{equation}y_i = \sum_{j = 0} ^K h_{ij}x_j+z_i,\quad 0\leq i\leq K \end{equation} where for $0\leq i\leq K$, $x_i\in \C$ is subject to a unit average power constraint $\frac{1}{N}E||x_i^N||^2\leq P_i$ and the noise processes $z_i\sim \CalCN(0,N_0)$ are i.i.d. over time. 
The channel gain between input $i$ and output $j$ is denoted by $h_{ji}\in\C$. 
The signal-to-noise and interference-to-noise ratios are defined as $\snr_i=|h_{ii}|^2 P_i/N_0$ for $0\leq i\leq N$, and $\inr_{ij}=|h_{ji}|^2 P_i/N_0$ for $0\leq i,j\leq N$, $i\neq j$. Each receiver attempts to decode the message from its corresponding receiver. The remaining (standard) definitions can be found in \cite{OneBit}, and generalize naturally to an arbitrary number users.

\begin{figure}
\begin{centering}
\psset{unit=.8mm,arrowlength=1,arrowinset=0}
\begin{center}
\begin{pspicture}(-16,-32)(20,44)
    \rput(-16,35){%
            \psline(0,0)(2,3)(-2,3)(0,0)
            \psline(0,0)(0,-3)(-1.4,-3)
            \psline{->}(2.5,1)(28.5,1)
            \pscircle(30,1){1.5}
            \psline{-}(30,.25)(30,1.75) \psline{-}(29.25,1)(30.75,1)
            \psline{->}(31.5,1)(37,1)
            \pscircle(39,1){2}
            \psline{-}(39,-1)(39,-4)
            \uput[d](0,-3){$\text{Tx}_0$}
            \uput[l](-2,0){$x_0$}
            \uput[d](39,-3){$\text{Rx}_0$}
            \uput{2pt}[u](12,1){\small $h_{00}$}
            \uput{3pt}[u](34,1){$y_0$}
            \psline{->}(30,6.25)(30,2.5)
            \uput[u](30,5){$z_0$}
            }

    \rput(-16,15){%
            \psline(0,0)(2,3)(-2,3)(0,0)
            \psline(0,0)(0,-3)(-1.4,-3)
            \psline{->}(2.5,1)(28.5,1)
            \pscircle(30,1){1.5}
            \psline{-}(30,.25)(30,1.75) \psline{-}(29.25,1)(30.75,1)
            \psline{->}(31.5,1)(37,1)
            \pscircle(39,1){2}
            \psline{-}(39,-1)(39,-4)
            \uput[d](0,-3){$\text{Tx}_1$}
            \uput[l](-2,0){$x_1$}
            \uput[d](39,-3){$\text{Rx}_1$}
            \uput{2pt}[u](12,1){\small $h_{11}$}
            \uput{3pt}[u](34,1){$y_1$}
            \psline{->}(30,6.25)(30,2.5)
            \uput[u](30,5){$z_1$}
            }
    \psline{->}(-13.5,16)(12.5,35.6)
    \rput(-2,28.3){\small $h_{01}$}

    \rput(-16,-5){%
            \psline(0,0)(2,3)(-2,3)(0,0)
            \psline(0,0)(0,-3)(-1.4,-3)
            \psline{->}(2.5,1)(28.5,1)
            \pscircle(30,1){1.5}
            \psline{-}(30,.25)(30,1.75) \psline{-}(29.25,1)(30.75,1)
            \psline{->}(31.5,1)(37,1)
            \pscircle(39,1){2}
            \psline{-}(39,-1)(39,-4)
            \uput[d](0,-3){$\text{Tx}_2$}
            \uput[l](-2,0){$x_2$}
            \uput[d](39,-3){$\text{Rx}_2$}
            \uput{2pt}[u](22,1){\small $h_{22}$}
            \uput{3pt}[u](34,1){$y_2$}
            \psline{->}(30,6.25)(30,2.5)
            \uput[u](30,5){$z_2$}
            }
    \psline{->}(-13.5,-4)(13,35)
    \rput(-4,4.2){\small $h_{02}$}

    \rput(-16,-18){\Large{$\vdots$}}
    \rput(23,-18){\Large{$\vdots$}}

    \rput(-16,-30){%
            \psline(0,0)(2,3)(-2,3)(0,0)
            \psline(0,0)(0,-3)(-1.4,-3)
            \psline{->}(2.5,1)(28.5,1)
            \pscircle(30,1){1.5}
            \psline{-}(30,.25)(30,1.75) \psline{-}(29.25,1)(30.75,1)
            \psline{->}(31.5,1)(37,1)
            \pscircle(39,1){2}
            \psline{-}(39,-1)(39,-4)
            \uput[d](0,-3){$\text{Tx}_K$}
            \uput[l](-2,0){$x_K$}
            \uput[d](39,-3){$\text{Rx}_K$}
            \uput{2pt}[u](22,1){\small $h_{KK}$}
            \uput{3pt}[u](34,1){$y_K$}
            \psline{->}(30,6.25)(30,2.5)
            \uput[u](30,5){$z_K$}
            }
    \psline{->}(-13.5,-29)(13.8,34.7)
    \rput(-4,-18){\small $h_{0K}$}

\end{pspicture}
\end{center}
\caption{The Gaussian many-to-one IC: $K$ users all causing interference at receiver 0.}
\label{ZChannelFigure}
\end{centering}
\end{figure}
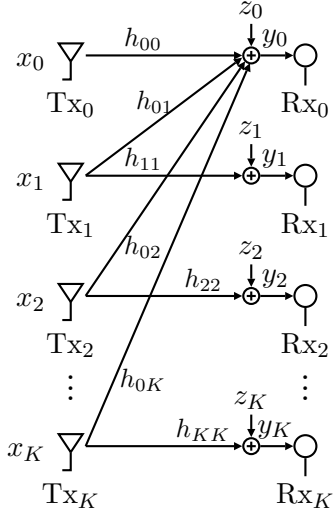

In this paper we consider two special cases of the Gaussian IC. In the first half we study the Gaussian \emph{many-to-one} IC, where all gains are zero
except $h_ {ii} $, $0\leq i\leq K$, and $h_ {i0} $, $ 1\leq i\leq K
$. The channel is depicted in Figure~\ref{ZChannelFigure}. In the second half of the paper we treat the Gaussian \emph{one-to-many} IC, which is obtained from the many-to-one IC by reversing the roles of transmitters and receivers. The one-to-many IC has all gains equal to zero except for $h_ {ii} $, $0\leq i\leq K$, and $h_ {0i} $, $ 1\leq i\leq K
$.

\subsection{Motivating Example}

\label{s:ExampleManytoOne}
In the two-user Gaussian IC a simple Han-Kobayashi scheme with Gaussian codebooks was shown to be nearly optimal \cite{OneBit}. A natural question: is the same type of scheme, a (generalized) Han and Kobayashi scheme with Gaussian codebooks, nearly optimal with more than 2 users?
We answer this question by way of an example 3-user Gaussian many-to-one channel. This example goes to the heart of the problem and captures the salient features of the many-to-one channel. In particular, the approach used for the two-user interference channel is demonstrated to be inadequate for three or more users, while a simple strategy that aligns interference on the signal scale is shown to be asymptotically optimal.  

The example channel is depicted in Figure~\ref{fig:HKexample}. The power constraints are $P_0=P_1=P_2=1$ and the gains are $h_{00}=h_{01}=h_{02}=\beta$ and $h_{11}=h_{22}=\sqrt{\beta}$. 

We first describe the Han-Kobayashi scheme with Gaussian codebooks for the three-to-one channel. In the many-to-one channel each user's signal causes interference only at receiver 0, so the signals from users 1 and 2 are each split into common and private parts as in the two-user scheme (see \cite{OneBit}, \cite{HanKobayashi}, for details on the two-user scheme). 
Each user $i=1,2$ employs a superposition of Gaussian codebooks
$$X_i=U_i+W_i\,,$$ with power $P_{U_i}+P_{W_i}=1$ and rates $R_{U_i},R_{W_i}$. The ``private" signal $U_i$ is treated as noise by receiver 0, while the ``common" signal $W_i$ is decoded by receiver 0. User 0 selects a single random Gaussian codebook with power $P_0$ and rate $R_0$.

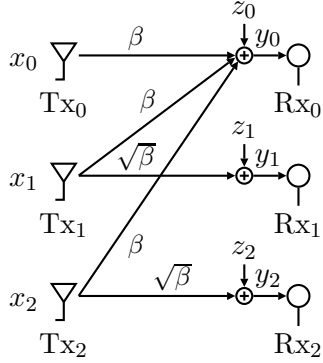
\begin{figure}
\centering
\psset{unit=0.8mm,arrowlength=1,arrowinset=0}
\begin{center}
\begin{pspicture}(-18,-10)(20,44)
    \rput(-16,35){%
            \psline(0,0)(2,3)(-2,3)(0,0)
            \psline(0,0)(0,-3)(-1.4,-3)
            \psline{->}(2.5,1)(28.5,1)
            \pscircle(30,1){1.5}
            \psline{-}(30,.25)(30,1.75) \psline{-}(29.25,1)(30.75,1)
            \psline{->}(31.5,1)(37,1)
            \pscircle(39,1){2}
            \psline{-}(39,-1)(39,-4)
            \uput[d](0,-3){$\text{Tx}_0$}
            \uput[l](-2,0){$x_0$}
            \uput[d](39,-3){$\text{Rx}_0$}
            \uput{2pt}[u](12,1){\small $\beta$}
            \uput{3pt}[u](34,1){$y_0$}
            \psline{->}(30,6.25)(30,2.5)
            \uput[u](30,5){$z_0$}
            }

    \rput(-16,15){%
            \psline(0,0)(2,3)(-2,3)(0,0)
            \psline(0,0)(0,-3)(-1.4,-3)
            \psline{->}(2.5,1)(28.5,1)
            \pscircle(30,1){1.5}
            \psline{-}(30,.25)(30,1.75) \psline{-}(29.25,1)(30.75,1)
            \psline{->}(31.5,1)(37,1)
            \pscircle(39,1){2}
            \psline{-}(39,-1)(39,-4)
            \uput[d](0,-3){$\text{Tx}_1$}
            \uput[l](-2,0){$x_1$}
            \uput[d](39,-3){$\text{Rx}_1$}
            \uput{2pt}[u](12,1){\small $\sqrt \beta$}
            \uput{3pt}[u](34,1){$y_1$}
            \psline{->}(30,6.25)(30,2.5)
            \uput[u](30,5){$z_1$}
            }
    \psline{->}(-13.5,16)(12.5,35.6)
    \rput(-2,28.3){\small $\beta$}

    \rput(-16,-5){%
            \psline(0,0)(2,3)(-2,3)(0,0)
            \psline(0,0)(0,-3)(-1.4,-3)
            \psline{->}(2.5,1)(28.5,1)
            \pscircle(30,1){1.5}
            \psline{-}(30,.25)(30,1.75) \psline{-}(29.25,1)(30.75,1)
            \psline{->}(31.5,1)(37,1)
            \pscircle(39,1){2}
            \psline{-}(39,-1)(39,-4)
            \uput[d](0,-3){$\text{Tx}_2$}
            \uput[l](-2,0){$x_2$}
            \uput[d](39,-3){$\text{Rx}_2$}
            \uput{2pt}[u](18,1){\small $\sqrt \beta$}
            \uput{3pt}[u](34,1){$y_2$}
            \psline{->}(30,6.25)(30,2.5)
            \uput[u](30,5){$z_2$}
            }
    \psline{->}(-13.5,-4)(13,35)
    \rput(-4,4.2){\small $\beta$}
\end{pspicture}
\end{center}
\caption{The example Gaussian channel, with $\snr_1=\snr_2=\beta$ and $\snr_0=\inr_1=\inr_2=\beta^2$, for some $\beta>1$.}%
\label{fig:HKexample}%
\end{figure}

There are two equivalent ways of interpreting the effect of the interference to receiver 0 from users 1 and 2: the first asks how ``noisy" is the interference, while the second asks how much of the information content within the interference is available to receiver 0. Both viewpoints are related by the entropy $h(\beta x_1+\beta x_2+z_0)$. For the first, we may expand the mutual information relevant to user 0 as $I(x_0;y_0)=h(y_0)-h(\beta x_1+\beta x_2+z_0)$. If the second entropy term is large, i.e. the interference is ``noisy", then the rate of user 0 must be small.

Consider now the second viewpoint. Note that the interference to noise ratios $\inr_1=\inr_2=\beta$ are much larger than the signal-to-noise ratios $\snr_1=\snr_2=\sqrt \beta$. Let us momentarily recall  a similar situation in the context of the two-user channel: the so-called strong interference regime occurs when the cross-gains are stronger than the direct gains. Hence, after decoding the intended signal, each receiver can decode the interfering signal as well, and thus each signal consists entirely of common information. This means the interference is quite damaging, since it contains the full information content of the intended signal.

Returning to our example channel, let us examine the output at receiver 0, ignoring the intended signal from transmitter 0. The information on $x_1$ and $x_2$ at receiver 0 is then $I(y_0;x_1,x_2|x_0)=h(\beta x_1+\beta x_2+z_0)-h(z_0)$. For our example channel it turns out that when $x_1$ and $x_2$ are Gaussian distributed, the entropy $h(\beta x_1+\beta x_2+z_0)$ is large enough to allow user 0 to decode both of the signals $x_1$ and $x_2$.
Thus, the signals from users 1 and 2 are entirely common information. When all of the signals are common information, it is easy to bound the sum-rate $r_\text{sum}^{HK}=r_0+r_1+r_2$, since the rates must lie within the MAC capacity region formed by receiver 0 and the three transmitters. This reasoning yields the following claim.

\begin{claim} \label{l:HKsuboptimal}A Han and Kobayashi type scheme, with codebooks drawn from the Gaussian distribution, and each of users $1$ and $2$ splitting their signal into independent private and common information, attains a sum-rate $r_\text{sum}^{HK}$ of at most $\log(1+3\beta^2)$. That is, with this strategy 
\begin{equation}\label{e:ExampleRHKsum}
r_\text{sum}^{HK}=r_0+r_1+r_2\leq \log(1+3\beta^2)\approx 2\log \beta.
\end{equation}
\end{claim}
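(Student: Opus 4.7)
I would fix Gaussian codebooks as prescribed and write $x_i = U_i + W_i$ for $i = 1, 2$ with independent complex Gaussian components of powers $P_{W_i}, P_{U_i} \geq 0$ summing to $P_i = 1$, carrying rates $r_{W_i}$ and $r_{U_i}$ with $r_i = r_{W_i} + r_{U_i}$; user 0 uses a single Gaussian codebook of power $1$ and rate $r_0$. Receivers 1 and 2 see no interference, so the MAC at receiver $i$ for the virtual users $W_i$ and $U_i$ forces
\begin{equation*}
r_{U_i} \;\leq\; I(U_i; y_i \mid W_i) \;=\; \log(1 + \beta P_{U_i}).
\end{equation*}
Receiver 0 must jointly decode $(x_0, W_1, W_2)$ while treating $U_1 + U_2$ as additional noise, and the MAC sum-rate bound at that receiver gives
\begin{equation*}
r_0 + r_{W_1} + r_{W_2} \;\leq\; I(x_0, W_1, W_2; y_0).
\end{equation*}

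The plan is to add these three inequalities and collapse the right-hand side into the single MAC sum-rate bound $I(x_0, x_1, x_2; y_0) \leq \log(1 + 3\beta^2)$ at receiver 0. Using the chain rule,
\begin{equation*}
I(x_0, x_1, x_2; y_0) \;=\; I(x_0, W_1, W_2; y_0) + I(U_1, U_2; y_0 \mid x_0, W_1, W_2),
\end{equation*}
so the only thing that remains to check is
\begin{equation*}
I(U_1; y_1 \mid W_1) + I(U_2; y_2 \mid W_2) \;\leq\; I(U_1, U_2; y_0 \mid x_0, W_1, W_2).
\end{equation*}

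For independent Gaussian codebooks, evaluating both sides reduces this to the algebraic inequality $(1 + \beta P_{U_1})(1 + \beta P_{U_2}) \leq 1 + \beta^2(P_{U_1} + P_{U_2})$, equivalently $\beta P_{U_1} P_{U_2} \leq (\beta - 1)(P_{U_1} + P_{U_2})$, which follows from $P_{U_1} P_{U_2} \leq \min(P_{U_1}, P_{U_2}) \leq \tfrac{1}{2}(P_{U_1} + P_{U_2})$ as soon as $\beta \geq 2$ --- precisely the asymptotic regime in which the claim is nontrivial. This last step is where the whole argument lives and where the main intuition sits: because $\inr_i = \beta^2$ dominates $\snr_i = \beta$, receiver 0 gathers at least as much information about each private $U_i$ as the intended receiver does, so no HK split with independent Gaussian codebooks can beat the MAC sum-rate bound $\log(1 + 3\beta^2)$ at receiver 0.
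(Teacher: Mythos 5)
Your proof is correct, and it takes a genuinely different route from the paper's. The paper argues in the style of Sato's strong-interference converse: it verifies that the four rates $(r_{U_1},r_{W_1},r_{U_2},r_{W_2})$, already constrained by the MAC regions at receivers $1$ and $2$, satisfy every subset constraint of the four-message Gaussian MAC at receiver $0$ (after $x_0$ has been decoded and stripped off), concludes that receiver $0$ can decode \emph{all} messages including the private ones, and only then invokes the three-user MAC sum-rate bound $\log(1+3\beta^2)$. You never establish decodability of $U_1,U_2$ at receiver $0$: you simply add three constraints that are necessary in any case --- $r_{U_i}\le I(U_i;y_i\mid W_i)$ at receivers $1,2$ and the sum constraint $r_0+r_{W_1}+r_{W_2}\le I(x_0,W_1,W_2;y_0)$ at receiver $0$ --- and absorb the first two into $I(U_1,U_2;y_0\mid x_0,W_1,W_2)$ via the chain rule, so that everything collapses to $I(x_0,x_1,x_2;y_0)=\log(1+3\beta^2)$. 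The one algebraic inequality you are left with, $(1+\beta P_{U_1})(1+\beta P_{U_2})\le 1+\beta^2(P_{U_1}+P_{U_2})$, is essentially the paper's first computation (its bound on $r_1^a+r_2^a$), but you need only this single instance rather than the full list of subset checks. Both arguments require $\beta\ge 2$ (the paper assumes this explicitly; it is harmless given the standing assumption $\beta=2^{2n}$) and both use $P_{U_i}\le 1$; your version is shorter and makes the mechanism more transparent --- receiver $0$ learns at least as much about the private parts as their intended receivers do because $\inr_i=\beta^2\gg\snr_i=\beta$. One cosmetic remark: the bound $r_0+r_{W_1}+r_{W_2}\le I(x_0,W_1,W_2;y_0)$ is the information-theoretic necessary condition for reliably decoding $(x_0,W_1,W_2)$ from $y_0$ with these codebooks and does not depend on the decoder ``treating $U_1+U_2$ as noise''; stating it that way is harmless but slightly undersells its generality.
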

\begin{proof}
The argument bears some resemblance to that of Sato \cite{Sato} in his treatment of the two-user channel under strong interference.
However, here we must show that each of the private and common messages from users 1 and 2 can be decoded by receiver 0. We quickly summarize the argument. Note first that for an achievable rate point, we may assume that each of receivers 0, 1, and 2 is able to decode their intended signal. Upon decoding signal 0, receiver 0 can subtract it off. The rate tuple $(R_{U_1},R_{W_1},R_{U_2},R_{W_2})$ is then shown to lie within the four-user MAC region (evaluated with Gaussian inputs) at receiver 0 formed by common and private signals from transmitters 1 and 2. It follows that receiver 0 can decode all the signals $x_0,x_1,x_2$ when using Gaussian inputs, hence the 3-user MAC constraint applies. The calculations are deferred to the appendix.
\end{proof}

We now propose a different scheme that achieves a rate point within a constant of the optimal sum-rate of approximately $3\log \beta$ for any $\beta=2^{2n}$, where $n$ is a positive integer. The restriction of $\beta$ to even powers of two allows to simplify the analysis of the scheme; the scheme itself, as well as the general scheme presented in Section~\ref{s:GaussianManytoOne}, works for arbitrary real-valued channel gains. Consider first only the real-valued channel (assume that $z_i\sim \CalN(0,1)$ and inputs are real-valued). Each user generates a random codebook from a \emph{discrete} distribution 
\begin{equation}\label{e:example_discreteinput}
x_i=\sum_{k=1}^{ \log  \sqrt \beta } x_i(k) 2^{-k},\quad i=0,1,2\,,\end{equation} where the bits $x_i(k)\sim \text{Bernoulli}\left(\frac12\right)$ are i.i.d. over time. In order to show an achievable rate we calculate the single time-step mutual information between input and output for each user,
$$I(x_i;y_i),\quad i=0,1,2\,.$$ Let $\y_i$ denote the noiseless output 
\begin{align*}
  \y_0&=\sqrt \beta x_0+\beta x_1 + \beta x_2
  \\
  \y_1&=\sqrt \beta x_1
  \\ 
  \y_2&=\sqrt \beta x_2\,,
\end{align*}
It is shown in Appendix A.1 of \cite{BT08} that when using inputs such that the outputs $\y_i$ are integer-valued, the additive Gaussian noise $z_i$ causes a loss in mutual information of at most 1.5 bits, i.e. 
\begin{equation}\label{e:noiseloss}I(x_i;\y_i)-1.5\leq I(x_i;\y_i+z_i)=I(x_i;y_i)\,.\end{equation}
Intuitively, this is because $\y_i$ can be recovered from $y_i$ by knowing the value of $[z_i]$, where $[\,\cdot\,]$ is the nearest integer function;  the estimate $H([z_i])\leq 1.5$ allows to show the inequality~\eqref{e:noiseloss}.
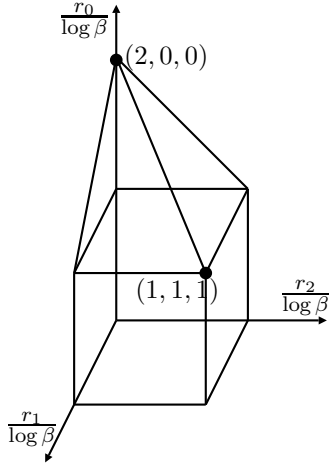
\begin{figure}
\centering
\psset{unit=.7mm,arrowlength=1,arrowinset=0,labelsep=2pt}
\begin{center}
\begin{pspicture}(-14.5,-24)(41,61)
\psline{->}(0,0)(40,0)
\psline{->}(0,0)(0,60)
\psline{->}(0,0)(-13.5,-27)

\psline(25,25)(0,25)
\psline(-8,9)(0,25)


\psline(25,0)(25,25)
\psline(17,9)(25,25)

\psline(25,0)(17,-16)
\psline(17,-16)(-8,-16)
\psline(-8,-16)(-8,9)
\psline(-8,9)(17,9)
\psline(17,9)(17,-16)

\psline(17,9)(0,50)

\psline(-8,9)(0,50)
\psline(25,25)(0,50)

\uput[u](36,0){$\frac{r_2}{\log \beta}$}
\uput[u](-16,-25){$\frac{r_1}{\log \beta}$}
\uput[l](0,57){$\frac{r_0}{\log \beta}$}

\uput[r](0.5,50){\small $(2,0,0)$}
\uput[d](11.6,9){\small $(1,1,1)$}

\pscircle[fillstyle=solid, fillcolor=black](0,49.5){1.2}
\pscircle[fillstyle=solid, fillcolor=black](17,9){1.2}

\end{pspicture}
\end{center}
\caption{The approximate capacity region of the example channel considered in this section. The two dominant corner points are emphasized.}%
\label{fig:ExampleChannelRegion}%
\end{figure}

Note the following key observation: it is possible to perfectly recover the signal $x_0$ from $\y_0$. This follows from writing
$$\y_0=\sqrt\beta x_0+\sqrt\beta[\sqrt\beta(x_1+x_2)]\in \sqrt\beta x_0+\sqrt\beta\mathbb{Z}\,,$$ and the fact that $\sqrt\beta x_0<\sqrt\beta$\,.
Hence $$\sqrt \beta x_0=\y_0\ (\text{mod}\sqrt \beta)\,,$$ and for $i=0,1,2$,
$$I(x_i;y_i)+1.5\geq I(x_i;\y_i)=H(x_i)=\frac12\log\beta\,.$$ 

For the complex-valued channel the same strategy works independently in the complex and real dimension, giving an achievable rate of
$$r_i\geq \log \beta -3,\quad i=0,1,2\,.$$
The sum-rate achieved,
$$r^{\text{lattice}}_{\text{sum}}=3\log\beta -9\approx 3\log \beta \,,$$ is therefore arbitrarily larger than the approximately $2\log\beta$ achieved by the strategy employing Gaussian codebooks. The achievable region for large $\beta$, normalized by $\log \beta$, is depicted in Figure~\ref{fig:ExampleChannelRegion}.

Before proceeding we reflect on why random Gaussian codebooks are suboptimal for the many-to-one channel. Note that the aggregate interference at receiver 0 has support equal to the sumset of the supports of codebooks 1 and 2. As illustrated in Figure~\ref{fig:randomMess}, the sumset of two random (continuously distributed) codebooks fills the space, leaving no room for user 0 to communicate. If each of codebooks 1 and 2 have $m$ points, the sumset can have up to $m^2$ points. In contrast, as illustrated in Figure~\ref{fig:latticeCode}, the sum of two codebooks that are subsets of a lattice looks essentially like one of the original codebooks (and in particular has cardinality $Cm$, where $C$ is a constant independent of $m$). Thus, the cost to user 0 is the same as though due to only one interferer, i.e. the interference is aligned on the signal scale. This theme will reappear throughout the paper.

\begin{figure}
\centering
\includegraphics[width=3in]{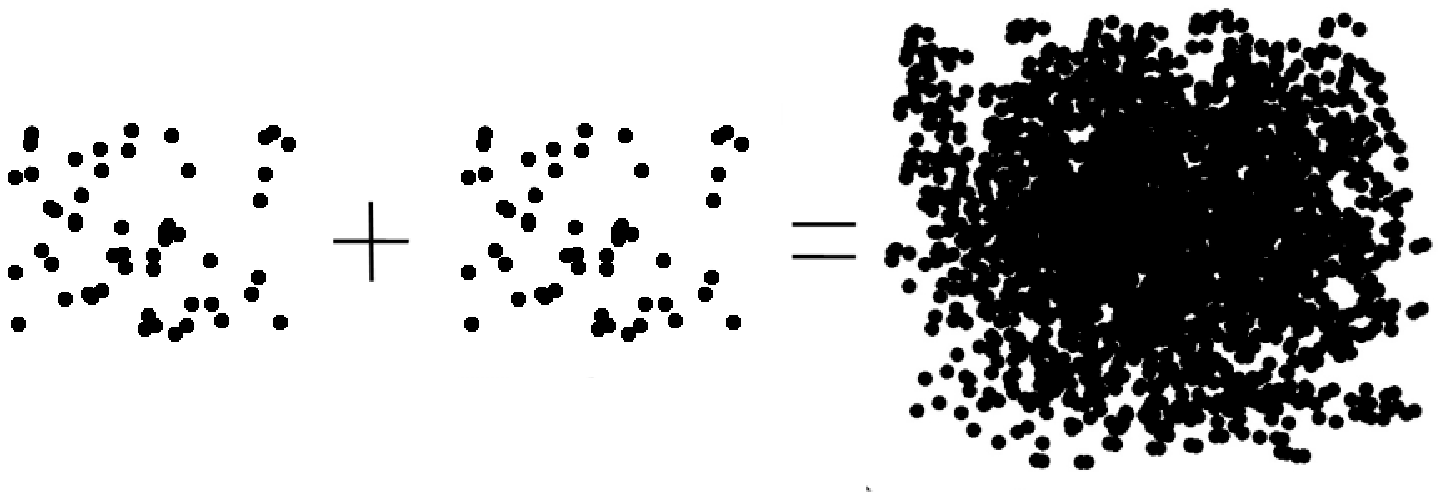}
\caption{The sum of two identical random codebooks with 50 points each. The resulting interference covers the entire space, preventing receiver 0 from decoding.}%
\label{fig:randomMess}%
\includegraphics[width=3in]{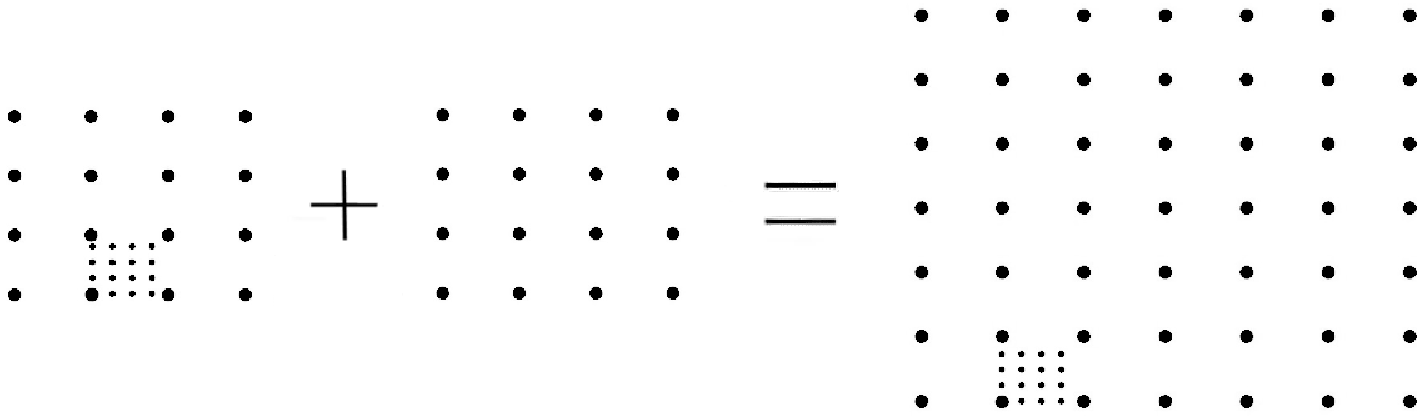}
\caption{User 0 can decode the fine signal in the presence of interference from users 1 and 2. The sum of the interference from users 1 and 2 imposes essentially the same cost as from a single interferer.}%
\label{fig:latticeCode}%
\end{figure}

In order to generalize the intuition gained from this example and provide the framework for finding the capacity of the many-to-one channel to within a constant gap we make use of a deterministic channel model, described in the next section.

%

\section{Deterministic channel model}\label{s:DetModel}

We now present a deterministic channel model analogous to the Gaussian channel. This channel was first introduced in \cite{DeterministicRelay}. 
We begin by describing the deterministic channel model for the point-to-point AWGN channel, and then the two-user multiple-access channel. After understanding these examples, we present the deterministic interference channel.

Consider the model for the point-to-point channel (see Figure~\ref{fig:p2pDeterministic}). The real-valued channel input is written in base 2; the signal---a vector of bits---is interpreted as occupying a succession of levels:
\begin{equation}\label{e:p2p_deterministicBits}
x=0.b_1 b_2 b_3 b_4 b_5\dots\,.
\end{equation}
The most significant bit coincides with the highest level, the least significant bit with the lowest level. The levels attempt to capture the notion of \emph{signal scale}; a level corresponds to a unit of power in the Gaussian channel, measured on the dB scale. Noise is modeled in the deterministic channel by truncation. Bits of smaller order than the noise are lost. The channel may be written as $$y=\lf 2^n x\rf\,,$$ with the correspondence $n=\lf \log \snr \rf$.

Note the similarity of the binary expansion underlying the deterministic model \eqref{e:p2p_deterministicBits} to the discrete inputs \eqref{e:example_discreteinput} in the example channel of the previous section. The discrete inputs \eqref{e:example_discreteinput} are precisely a binary expansion, truncated so that the signal---after scaling by the channel---is integer-valued. Evidently, the achievable scheme for the example channel emulates the deterministic model.

\begin{figure}
\begin{centering}
\psset{unit=1.2mm,arrowlength=1,arrowinset=0,linewidth=.5pt}
\begin{center}
\begin{pspicture}(-10,3)(40,28)
\psframe[linecolor=white](-10,3)(40,28)

\pscircle(0,4){1.5}
\pscircle(0,8){1.5}
\pscircle(0,12){1.5}
\pscircle(0,16){1.5}
\pscircle(0,20){1.5}

\rput(0,4){\footnotesize $b_5$}
\rput(0,8){\footnotesize $b_4$}
\rput(0,12){\footnotesize $b_3$}
\rput(0,16){\footnotesize $b_2$}
\rput(0,20){\footnotesize $b_1$}

\psline{->}(1.5,16)(28.5,12)
\psline{->}(1.5,8)(28.5,4)
\psline{->}(1.5,12)(28.5,8)
\psline{->}(1.5,20)(28.5,16)

\pspolygon(-2,1)(-6,1)(-6,23)(-2,23)

\rput(30,0){
\pscircle(0,4){1.5}
\pscircle(0,8){1.5}
\pscircle(0,12){1.5}
\pscircle(0,16){1.5}
\pscircle(0,20){1.5}
\pspolygon(2,1)(6,1)(6,23)(2,23)
\rput(0,-4){\rput(0,8){\footnotesize $b_4$}
\rput(0,12){\footnotesize $b_3$}
\rput(0,16){\footnotesize $b_2$}
\rput(0,20){\footnotesize $b_1$}}
}

\psline[linestyle=dashed](0,1)(30,1)
\psline(1.5,4)(20.25,1)

\uput[l](-6,12){Tx}
\uput[r](36,12){Rx}
\uput[d](15,0){noise}
\end{pspicture}
\end{center}
\caption{The deterministic model for the point-to-point Gaussian channel. Each bit of the input occupies a signal level. Bits of lower significance are lost due to noise.}
\label{fig:p2pDeterministic}
\end{centering}
\end{figure}
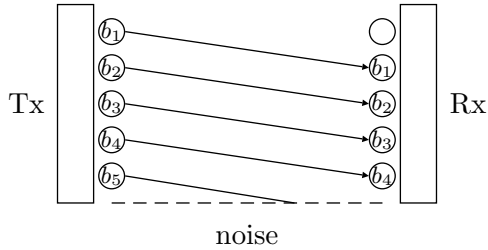

The deterministic multiple-access channel is constructed similarly to the point-to-point channel (Figure~\ref{fig:deterministicMAC}), with $n_1$ and $n_2$ bits received above the noise level from users $1$ and $2$, respectively. To model the superposition of signals at the receiver, the bits received on each level are added {\em modulo two}. Addition modulo two, rather than normal integer addition, is chosen to make the model more tractable. As a result, the levels do not interact with one another.

If the inputs $x_i(t)$ are written in binary, the channel output can be written as
\begin{equation}\label{e:deterministicMACintegerpart}
  y=\lfloor 2^{n_{1}}x_1\rf \oplus \lf 2^{n_{2}}x_2\rfloor 
  \, ,
\end{equation}
where addition is performed on each bit (modulo two) and $\lfloor\, \cdot \, \rfloor$ is the integer-part function.
The channel can also be written in an alternative form, which we will not use in the present paper but leads to a slightly different interpretation. The input and output are $x_1,x_2,y \in \F_2^q$, where $q=\max(n_1,n_2)$. The signal from transmitter $i $ is scaled by a nonnegative integer gain $2^{n_{i}} $ (equivalently, the input column vector is shifted up by $n_{i}$). The channel output is given by \begin{equation}y= \mathbf{S}^{q-n_{1}}x_1+ \mathbf{S}^{q-n_{2}}x_2,\end{equation} where summation and multiplication are in $\F_2$ and $\mathbf{S}$ is a $q\times q$ shift matrix,
\begin{equation}\label{e:shiftMatrix}
\mathbf{S}=\left(
\begin{matrix} 0 & 0 & 0 & \cdots & 0 \cr 1 & 0 & 0 & \cdots & 0 \cr 0 & 1 & 0 & \cdots & 0 \cr \vdots & & & \ddots & \vdots \cr 0 & \cdots & 0 & 1 & 0 \end{matrix}
\right).
\end{equation}

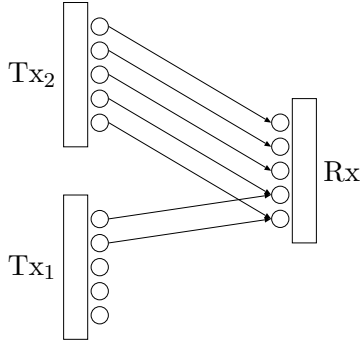
\begin{figure}
\begin{centering}
\psset{unit=.8mm,linewidth=.3pt,arrowlength=1.2,arrowinset=0,labelsep=5pt}
\begin{center}
\begin{pspicture}(-1,12)(37,76)

\rput(0,14){\pscircle(0,0){1.5}
\pscircle(0,4){1.5}
\pscircle(0,8){1.5}
\pscircle(0,12){1.5}
\pscircle(0,16){1.5}
\pspolygon(-2,-4)(-6,-4)(-6,20)(-2,20)

\uput[l](-5,8){$ {\text{Tx}_1}$}
}

\psline{->}(1.5,30)(28.5,34)
\psline{->}(1.5,26)(28.5,30)

\uput[r](35,38){$ \text{Rx}$}


\rput(0,46){
\pscircle(0,0){1.5}
\pscircle(0,4){1.5}
\pscircle(0,8){1.5}
\pscircle(0,12){1.5}
\pscircle(0,16){1.5}
\pspolygon(-2,-4)(-6,-4)(-6,20)(-2,20)
}
\psline{->}(1.5,46)(28.5,30)
\psline{->}(1.5,50)(28.5,34)
\psline{->}(1.5,54)(28.5,38)
\psline{->}(1.5,58)(28.5,42)
\psline{->}(1.5,62)(28.5,46)

\uput[l](-5,54){$ \text{Tx}_2$}

\rput(30,30){
\pscircle(0,0){1.5}
\pscircle(0,4){1.5}
\pscircle(0,8){1.5}
\pscircle(0,12){1.5}
\pscircle(0,16){1.5}
\pspolygon(2,-4)(6,-4)(6,20)(2,20)
}

\end{pspicture}
\end{center} 
\caption{The deterministic model for the Gaussian multiple-access channel. Incoming bits on the same level are added modulo two at the receiver.}
\label{fig:deterministicMAC}
\end{centering}
\end{figure}

An easy calculation shows that the capacity region of the deterministic MAC is 
\begin{equation}
  \begin{split}
    r_1 &\leq n_1 \\
    r_2 &\leq n_2 \\
    r_1 +r_2 &\leq \max(n_1,n_2)\,.
  \end{split}
\end{equation}
Comparing with the capacity region of the Gaussian MAC,\begin{equation}\begin{split} \label{e:MACapprox}
  R_1 &\leq \log (1+\snr_1)\approx \log\snr_1 \\
  R_2 &\leq \log (1+\snr_2) \approx \log \snr_2\\
  R_1+R_2 &\leq \log (1+\snr_1+\snr_2)\approx \max(\log \snr_1,\log \snr_2)\,,
\end{split}\end{equation} we make the correspondence $$n_1=\lf \log \snr_1\rf \quad \text{and} \quad n_2=\lf \log \snr_2\rf\,.$$

\subsection{Deterministic interference channel}\label{sec:deterministicIC}

We proceed with the deterministic interference channel model. Note that the model is completely determined by the model for the MAC.
There are $K +1$ transmitter-receiver pairs (links), and as in the Gaussian case, each transmitter wants to communicate only with its corresponding receiver.  The signal from transmitter $j $, as observed at receiver $i$, is scaled by a nonnegative integer gain $n_{ij} $. 
The channel may be written as
$$y_i=\lf 2^{n_{i0}} x_0\rf \oplus \cdots \oplus \lf 2^{n_{iK}} x_K\rf\,,$$
where, as before, addition is performed on each bit (modulo two) and $\lf\,\cdot\,\rf$ is the integer-part function.

Alternatively, at each time $t $, we may view the input and output, respectively, at link $i$ to be $x_i(t),y_i(t)\in \F_2^q$, where $q=\max_{ij}n_{ij}$.
The channel output at receiver $i$, $0\leq i\leq K$, is given by \begin{equation*}y_i(t)=\sum_{j=0}^K \mathbf{S}^{q-n_{ij}}x_j(t),\end{equation*} where summation and multiplication are in $\F_2$ and $\mathbf{S}$ is a $q\times q$ shift matrix \eqref{e:shiftMatrix}.
The standard definitions of achievable rates and the associated notions are omitted.

The deterministic interference channel is relatively simple, yet
retains two essential features of the Gaussian interference channel:
the loss of information due to noise, and the superposition of
transmitted signals at each receiver. The modeling of noise can be understood through the point-to-point channel above. The superposition of transmitted signals at each receiver is captured by taking the modulo 2 sum of the incoming signals at each level. 

The relevance of the
deterministic model is greatest in the high-$\snr$ regime, where communication is interference---rather than noise---limited; however, we shall see that
even for finite signal-to-noise ratios the deterministic channel
model provides significant insight towards the more complicated
Gaussian model.

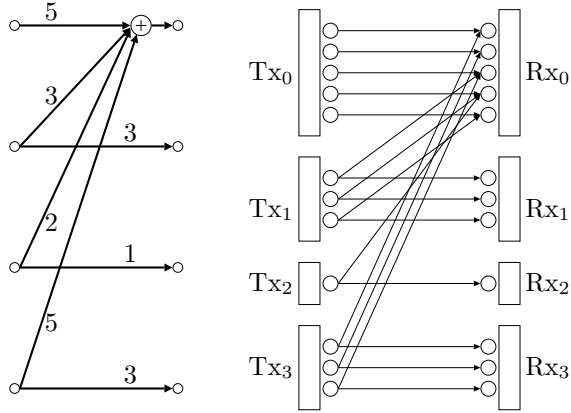
\begin{figure}
\begin{centering}
\psset{unit=.7mm,linewidth=.3pt,arrowlength=1.2,arrowinset=0,labelsep=2pt}
\begin{center}
\begin{pspicture}(-1,2)(90,70)
\rput(56,0){
\rput(0,-8){

\pscircle(0,8){1.5}
\pscircle(0,12){1.5}
\pscircle(0,16){1.5}

\psline{->}(1.5,16)(28.5,16)
\psline{->}(1.5,8)(28.5,8)
\psline{->}(1.5,12)(28.5,12)
\pspolygon(-6,4)(-6,20)(-2,20)(-2,4)}

\uput[l](-6,4){\small${\text{Tx}_3}$}
\uput[r](36,4){\small${\text{Rx}_3}$}

\rput(0,20){
\pscircle(0,0){1.5}
\psline{->}(1.5,0)(28.5,0)
\pspolygon(-6,4)(-6,-4)(-2,-4)(-2,4)
}

\psline{->}(1.5,20)(28.5,56)

\rput(0,24){
\pscircle(0,8){1.5}
\pscircle(0,12){1.5}
\pscircle(0,16){1.5}
\psline{->}(1.5,16)(28.5,16)
\psline{->}(1.5,8)(28.5,8)
\psline{->}(1.5,12)(28.5,12)
\pspolygon(-6,4)(-6,20)(-2,20)(-2,4)}

\uput[l](-6,34){\small$\text{Tx}_1$}
\uput[r](36,34){\small$\text{Rx}_1$}

\psline{->}(1.5,8)(28.5,68)
\psline{->}(1.5,4)(28.5,64)
\psline{->}(1.5,0)(28.5,60)

\psline{->}(1.5,32)(28.5,52)
\psline{->}(1.5,36)(28.5,56)
\psline{->}(1.5,40)(28.5,60)

\rput(0,52){\pscircle(0,0){1.5}
\pscircle(0,4){1.5}
\pscircle(0,8){1.5}
\pscircle(0,12){1.5}
\pscircle(0,16){1.5}
\psline{->}(1.5,4)(28.5,4)
\psline{->}(1.5,16)(28.5,16)
\psline{->}(1.5,8)(28.5,8)
\psline{->}(1.5,12)(28.5,12)
\psline{->}(1.5,0)(28.5,0)
\pspolygon(-6,-4)(-6,20)(-2,20)(-2,-4)
}

\uput[l](-6,60){\small$ \text{Tx}_0$}
\uput[r](36,60){\small$ \text{Rx}_0$}

\rput(30,-8){
\pscircle(0,8){1.5}
\pscircle(0,12){1.5}
\pscircle(0,16){1.5}
\pspolygon(6,4)(6,20)(2,20)(2,4)}

\rput(30,20){
\pscircle(0,0){1.5}
\pspolygon(6,4)(6,-4)(2,-4)(2,4)
}
\uput[l](-6,20){\small$ \text{Tx}_2$}
\uput[r](36,20){\small$ \text{Rx}_2$}

\rput(30,24){
\pscircle(0,8){1.5}
\pscircle(0,12){1.5}
\pscircle(0,16){1.5}
\pspolygon(6,4)(6,20)(2,20)(2,4)}

\rput(30,52){\pscircle(0,0){1.5}
\pscircle(0,4){1.5}
\pscircle(0,8){1.5}
\pscircle(0,12){1.5}
\pscircle(0,16){1.5}
\pspolygon(6,-4)(6,20)(2,20)(2,-4)
}}


\rput(-8,0){
\psline[arrowlength=1,linewidth=.4]{->}(5,0)(34,0)
\psline[arrowlength=1,linewidth=.4]{->}(5,23)(34,23)
\psline[arrowlength=1,linewidth=.4]{->}(5,46)(34,46)
\psline[arrowlength=1,linewidth=.4]{->}(5,69)(26,69)

\psline[arrowlength=1,linewidth=.4]{->}(5,0)(27.2,67.3)
\psline[arrowlength=1,linewidth=.4]{->}(5,23)(26,68.3)
\psline[arrowlength=1,linewidth=.4]{->}(5,46)(26,68.6)

\psline[arrowlength=1,linewidth=.4]{->}(30,69)(34,69)

\pscircle(35,69){1}
\pscircle(4,0){1}
\pscircle(35,46){1}
\pscircle(4,46){1}
\pscircle(4,23){1}
\pscircle(4,69){1}
\pscircle(35,23){1}
\pscircle(35,0){1}
\pscircle(28,69){2}
\psline{-}(27,69)(29,69)
\psline{-}(28,68)(28,70)

\uput{2pt}[u](11,10){\small 5}
\uput{2pt}[u](11,29){\small 2}
\uput{2pt}[u](11,53){\small 3}
\uput{2pt}[u](11,69){\small 5}
\uput{2pt}[u](26,23){\small 1}
\uput{2pt}[u](26,46){\small 3}
\uput{2pt}[u](26,0){\small 3}
}
\end{pspicture}
\end{center} 
\caption{Both figures depict the same channel. On the left is an example of a deterministic many-to-one interference channel with 4 users. The right-hand figure shows how the inputs are shifted and added together (modulo $2$) at each receiver. Each circle on the left side represents an element of the input vector; each circle on the right represents the received signal at a certain level.}
\label{DeterministicZChannelFigure}
\end{centering}
\end{figure}

As in the approach for the Gaussian interference channel, we consider only special cases of the deterministic interference channel: the many-to-one and one-to-many ICs. In the many-to-one IC interference occurs only at receiver 0 (see Figure~\ref{DeterministicZChannelFigure} for an example), and in the one-to-many IC interference is caused by only one user.

\section{Deterministic Many-to-One Interference Channel}\label{s:DeterministicManytoOneCapacity}
In this section we find the capacity region of the deterministic many-to-one IC. By separately considering each level at receiver 0 together with those signals causing interference to the level, the many-to-one channel is seen to be a \emph{parallel} channel, one sub-channel per level at receiver 0. This begs the question: is the capacity of the many-to-one channel equal to the sum of the capacities of the sub-channels? Theorem~\ref{t:DeterministicCapacity} below answer this question in the affirmative.

Some notation is required. First, we can assume without loss of generality that each input $x_i$ is restricted to the elements that appear in the output $y_i$, i.e. $x_i\in \F_2^{n_{ii}}$. Denote by $U_k \subseteq \{1,\ldots,K\}$, $1\leq k\leq n_{00}$, the set of users
potentially causing interference at receiver 0 on level $k$: $U_k=\{i:1\leq i\leq K, n_{0i}-n_{ii}< k\leq n_{0i}\} $.
For a set of users $A\subseteq \{0,1,\ldots,K\} $ and a level $k,1\leq k\leq n_{00}$, denote by
$x_{A|k}$ the vector of signals of users in $A$, restricted to
level $k$ \emph{as observed at receiver 0}. See Figure~\ref{BipartiteRectanglesFigure} for an illustration of these definitions.

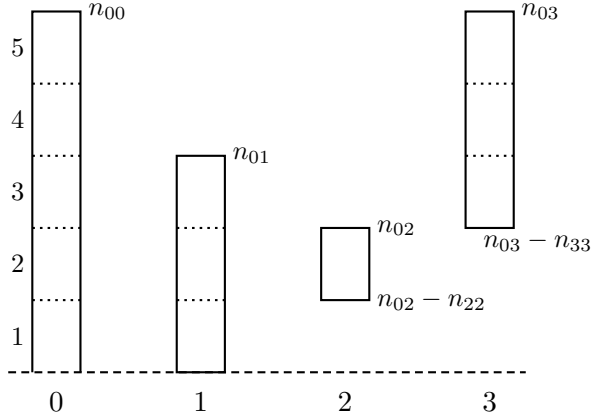
\begin{figure}
\psset{unit=.8mm,arrowlength=1,arrowinset=0,labelsep=3pt}
\begin{center}
\begin{pspicture}(0,2)(87,62)
\psline[linestyle=dashed,dash=3pt 2pt](0,0)(86,0)
\psline(4,0)(4,60)(12,60)(12,0) \rput(8,-5){0}
\pspolygon(28,0)(28,36)(36,36)(36,0)\rput(32,-5){1}
\pspolygon(52,12)(52,24)(60,24)(60,12)\rput(56,-5){2}
\pspolygon(76,24)(76,60)(84,60)(84,24)\rput(80,-5){3}

\psline[linestyle=dashed,dash=1pt 2pt](28,24)(36,24)
\psline[linestyle=dashed,dash=1pt 2pt](28,12)(36,12)
\psline[linestyle=dashed,dash=1pt 2pt](76,36)(84,36)
\psline[linestyle=dashed,dash=1pt 2pt](76,48)(84,48)
\psline[linestyle=dashed,dash=1pt 2pt](4,48)(12,48)
\psline[linestyle=dashed,dash=1pt 2pt](4,36)(12,36)
\psline[linestyle=dashed,dash=1pt 2pt](4,24)(12,24)
\psline[linestyle=dashed,dash=1pt 2pt](4,12)(12,12)

\uput[r](12,60){\small $n_{00}$} \uput[r](36,36){\small
    $n_{01}$}
\uput[r](60,24){\small $n_{02}$} \uput[r](60,12){\small$n_{02}-n_{22}$} \uput[r](84,60){\small
$n_{03}$} \uput[d](88,24.7){\small
$n_{03}-n_{33}$}

\uput[l](4,6){\small $1$} \uput[l](4,18){\small $2$}
\uput[l](4,30){\small $3$} \uput[l](4,42){\small $4$}
\uput[l](4,54){\small $5$}

\end{pspicture}
\end{center}
\caption{The interference pattern as observed at receiver 0 for the channel in Figure~\ref{DeterministicZChannelFigure}. Here, $U_1=\{1\}$, $U_2=\{1,2\}$, $U_3=\{1,3\}$, etc. etc.}\label{BipartiteRectanglesFigure}
\end{figure}

Let $\tilde x_i$ be the restriction of the input from transmitter $i$ to the lowest $(n_{ii} -n_{0i})^+$ levels. This is the part of $x_i $ that does not appear as interference at receiver 0, i.e. this part of the interfering signal is below the noise level. Similarly, let $\hat x_i $ be the restriction of the input from transmitter $i $ to the highest $ (n_{0i} -n_{00 })^+ $ levels. This is the part of $x_i $ that causes interference above the signal level of user 0 (and therefore does not really interfere).
With this notation at our disposal, we are ready to describe the achievable strategy, and then state the capacity region of the deterministic many-to-one IC.

\subsection{Achievable Strategy}

The achievable strategy consists of allocating each level separately, by choosing either user 0 to transmit on a given level, or all users interfering with user 0 to transmit on the level. This scheme aligns the interference as observed by receiver 0, so that several users transmitting on a level inflict the same cost to user 0 as one user transmitting on the same level. Because the scheme considers each level separately, the structure of the achievable region is remarkably simple. The region is next described in more detail. 

First, note that by transmitting on levels that appear above the signal of user 0 or below the noise level as observed by receiver 0, each user can transmit at rate $$r_i\leq f_\text{free}(i),$$ where $f_\text{free}(i)= (n_{0i}-n_{ii})^+ +(n_{0i}-n_{00})^+$, without causing any interference to user 0.  We have that users $\{1,\dots,K\}$ can use rates in the region $$\CalC_{\text{free}}=\{(r_0,\dots, r_K): r_0=0, r_i\leq f_\text{free}(i)\}$$ without causing any interference to user 0. 

For a subset of users $U_k\subseteq \{1,\dots,K\}$, let $\CalC_k$ denote the capacity region of a deterministic many-to-one IC with only one level, and users $U_k$ interfering at receiver 0. Users not in $U_k$, i.e. $\{1,\dots, K\}\setminus U_k$, are not present. It is easy to see that $\CalC_k$ is given by the intersection of the individual rate constraints \begin{align}r_i&\leq 1, \quad i\in U_k\cup \{0\}\label{e:SingleLevelIndividual} \\ r_i&=0,\quad i\notin U_k\nonumber\end{align} and the pair-wise rate constraints
\begin{equation}
\label{e:SingleLevelPairWise}r_0+r_i\leq 1,\quad i\in U_k.
\end{equation} 
The capacity $\CalC_k$ is achieved by time-sharing between two rate points: 1) User 0 transmits a uniformly random bit, while all other users are silent, or 2) User 0 is silent while each user in $U_k$ transmits a uniformly random bit. This is done for each level $1\leq k\leq \an$.

Let $\underC$ be the set of rate points achieved by our scheme. Since the achievable scheme treats each level separately, the achievable region is the sum of the regions for each level and the set of points achievable without causing any interference: \begin{equation}
\underC = \CalC_{\text{free}}+\sum_{k=1}^\an \CalC_k.
\label{e:AchievableRegionParallel}\end{equation}

\subsection{Outer Bound}
We now turn to the outer bound, first rewriting the constraints on each level. Consider some level $k$, $1\leq k\leq \an$. For any set of users $\S\subseteq\{1,\dots,K\}$, we can form a sum-rate constraint on the users $\S\cup\{0\}$ by adding a single pairwise constraint on $r_0+r_i$ for some $i\in \S$ together with individual rate constraints on users $\S\setminus\{i\}$ \eqref{e:SingleLevelIndividual}:
\begin{equation}r_0+\sum_{i\in (U_k\cap \S)}r_i\leq f_k(\S),\label{e:SingleLevelSumRate}\end{equation} where \begin{equation}f_k(\S)=\max(|U_k\cap \S|,1).\label{e:f_k}\end{equation} 

The following lemma gives an outer bound $\overC$ to the capacity region. Thus the capacity region, $\CalC_D$, of the $K+1$ user deterministic many-to-one IC, is bounded as  $$\underC\subseteq \CalC_D\subseteq \overC.$$

\begin{lemma}\label{l:outerBound}
$\CalC_D$ is contained in $\overC$, where $\overC$ is given by the intersection of the individual rate constraints \begin{equation}\label{individualConstraint}
   r_i\leq {\al_ {ii}}, \quad 0\leq i\leq K,
\end{equation} and the $2^K-1$ sum-rate constraints \begin{equation}\label{sumRateConstraint}
   r_0+\sum_{i\in \S}r_i\leq f_\text{free}(\S)+\sum_{k=1}^{\al_{00}}f_k(\S),\quad \S\subseteq \{1,\ldots,K\},\S\neq \varnothing\end{equation}
where
$f_k(\S)$ is defined above in equation \eqref{e:f_k} and $f_\text{free}(\S)=\sum_{i\in \S} f_\text{free}(i)$.
\end{lemma}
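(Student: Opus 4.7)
My plan is to apply Fano's inequality to each receiver, decompose each interferer's signal into the parts that miss user~0's signal (absorbed into $f_{\text{free}}$) and the interfering middle part $\bar x_i$, and then use a genie-aided reduction that reduces the analysis to a per-level ``XOR collision'' bookkeeping yielding the level-wise contribution $f_k(S)$. The individual constraints $r_i \leq n_{ii}$ are immediate: Fano gives $N R_i \leq I(x_i^N;y_i^N) + N\epsilon_N \leq H(x_i^N) \leq N n_{ii}$, using $x_i \in \F_2^{n_{ii}}$ and, for $i\geq 1$, that $y_i^N$ is a deterministic function of $x_i^N$.

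For the sum-rate constraint with nonempty $S\subseteq\{1,\ldots,K\}$, I would split each $x_i$ with $i\in S$ into $(\hat x_i,\bar x_i,\tilde x_i)$ as in the excerpt. Applying the entropy chain rule in the order $\hat x_i,\bar x_i,\tilde x_i$ gives $H(x_i^N) \leq H(\bar x_i^N\mid\hat x_i^N) + N f_{\text{free}}(i)$, so that Fano at the interference-free receiver $i$ yields $N R_i \leq H(\bar x_i^N\mid\hat x_i^N) + N f_{\text{free}}(i) + N\epsilon_N$. Summing over $S$, using independence of codebooks across users so that $\sum_{i\in S} H(\bar x_i^N\mid\hat x_i^N) = H(\bar x_S^N\mid\hat x_S^N)$, and adding $N R_0 \leq I(x_0^N;y_0^N)+N\epsilon_N$, the bound reduces to the key estimate
\begin{equation*}
I(x_0^N;y_0^N) + H(\bar x_S^N\mid\hat x_S^N) \leq N \sum_{k=1}^{n_{00}} f_k(S).
\end{equation*}

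To prove this, I would reveal the genie $G^N = \{x_j^N : j\notin S\}\cup\{\hat x_i^N : i\in S\}$ to receiver~$0$. Since $x_0\perp G$, the Fano bound on $R_0$ is unchanged under conditioning on $G^N$. After canceling all known contributions from $y_0^N$, receiver~$0$'s effective output at level $k\leq n_{00}$ is $y_{0;k}'=x_0^{(k)}\oplus W_{k,S}$ where $W_{k,S}=\bigoplus_{i\in U_k\cap S} x_{i|k}$, and the above-level bits are canceled. A direct calculation gives
\begin{equation*}
I(x_0^N;y_0^N) \leq I(x_0^N;y_0'^N\mid G^N) = H(y_0'^N\mid G^N) - H(W_S^N\mid\hat x_S^N) \leq N n_{00} - H(W_S^N\mid\hat x_S^N).
\end{equation*}
Combining via the conditional identity $H(\bar x_S^N\mid\hat x_S^N) - H(W_S^N\mid\hat x_S^N) = H(\bar x_S^N\mid W_S^N,\hat x_S^N)$ reduces matters to
\begin{equation*}
H(\bar x_S^N\mid W_S^N,\hat x_S^N) \leq N \sum_{k=1}^{n_{00}} (|U_k\cap S|-1)^+,
\end{equation*}
the per-level XOR ambiguity: once $W_{k,S}$ is known, the $|U_k\cap S|$ one-bit inputs at level $k$ carry at most $|U_k\cap S|-1$ bits of residual uncertainty per channel use, since any one of them is determined by the others and $W_{k,S}$. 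Summing and adding $N n_{00}$ recovers $N\sum_k \max(|U_k\cap S|,1) = N\sum_k f_k(S)$.

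The main subtlety I anticipate is that the genie must include $\hat x_S^N$ and not merely $x_{S^c}^N$, and correspondingly Fano must be routed through the conditional entropy $H(\bar x_i^N\mid\hat x_i^N)$: a general codebook may correlate a user's above-signal and middle bits, and conditioning on $\hat x_S^N$ in both the Fano step and the genie step is what allows those correlations to drop out harmlessly.
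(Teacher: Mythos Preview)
Your argument is correct, and its skeleton matches the paper's: Fano at each receiver, decompose each interferer into $(\hat x_i,\bar x_i,\tilde x_i)$, give receiver~$0$ a genie containing $\{x_j:j\notin S\}$ and $\{\hat x_i:i\in S\}$, and do a per-level accounting that produces the $(|U_k\cap S|-1)^+$ term. The one genuine difference is in the genie. The paper's genie is strictly larger: in addition to $\{x_j:j\notin S\}$ and $\{\hat x_i:i\in S\}$, it hands receiver~$0$, at each level $k$, the individual bits of all but one of the interferers in $U_k\cap S$ (the sets $Q_k$). This extra side information lets the negative entropy term $H(y_0^N\mid s_0^N,x_0^N)$ cancel directly against a piece of $\sum_{i\in S}H(x_i^N)$ after a chain-rule decomposition. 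Your smaller genie keeps only the XOR aggregate $W_{k,S}$ visible through $y_0'$, and you recover the same cancellation via the deterministic-function identity $H(\bar x_S\mid\hat x_S)-H(W_S\mid\hat x_S)=H(\bar x_S\mid W_S,\hat x_S)$, followed by the observation that one linear $\F_2$-constraint among $|U_k\cap S|$ bits leaves at most $|U_k\cap S|-1$ bits of entropy. Both routes yield exactly the same bound; yours is a touch cleaner in that it uses less side information and isolates the combinatorial step in a single inequality, while the paper's version makes the ``all but one interferer per level'' picture explicit, which is what later informs the choice of side information in the Gaussian case.
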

 
The bound in the lemma is tight, as shown in the following theorem. Thus, the capacity region is equal to the sum of the capacities of the sub-channels.

\begin{theorem} \label{t:DeterministicCapacity}
The achievable region is equal to the outer bound, i.e. $$\underC=\overC=\CalC_D.$$\end{theorem}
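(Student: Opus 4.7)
The plan is to sandwich $\CalC_D$ between $\underC$ and $\overC$. The inclusion $\CalC_D\subseteq \overC$ is Lemma~\ref{l:outerBound}, and $\underC\subseteq \CalC_D$ follows from the explicit achievable scheme described before the lemma---time-share within $\CalC_k$ on each interfering level $k$, and use free levels independently---so the heart of the theorem is the reverse containment $\overC\subseteq \underC$. My approach is to take any $(r_0,\ldots,r_K)\in \overC$ and produce an explicit decomposition $r_i=r_i^{\text{free}}+\sum_{k=1}^{\an}r_i^{(k)}$ with $r_0^{\text{free}}=0$, $r_i^{\text{free}}\in[0,f_{\text{free}}(i)]$, and $(r_0^{(k)},\ldots,r_K^{(k)})\in \CalC_k$ for each $k$.

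First I would carry out a ``free-level'' reduction: set $r_i^{\text{free}}=\min(r_i,f_{\text{free}}(i))$ for $i\ge 1$ and consider the residuals $r'_i=r_i-r_i^{\text{free}}\ge 0$. Letting $K_i:=|\{k:i\in U_k\}|$ count the interfering levels for user~$i$, the individual constraint $r_i\le n_{ii}$ together with the bookkeeping identity $n_{ii}=f_{\text{free}}(i)+K_i$ gives $r'_i\le K_i$. Applying the sum-rate bound of $\overC$ to the subset $\S^*=\{i\in \S:r_i>f_{\text{free}}(i)\}$ and using monotonicity of $f_k$ in $\S$ produces the residual sum-rate bound $r_0+\sum_{i\in \S}r'_i\le \sum_{k=1}^{\an} f_k(\S)$ for every nonempty $\S\subseteq\{1,\ldots,K\}$.

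The key step is to realize $(r_0,r'_1,\ldots,r'_K)$ as an element of $\sum_k \CalC_k$. Here I exploit the asymmetric structure of $\CalC_k$: once $\alpha_k:=r_0^{(k)}\in[0,1]$ is fixed, each $i\in U_k$ may independently pick $r_i^{(k)}\in[0,1-\alpha_k]$. The allocation therefore decouples into (i) finding a schedule $(\alpha_k)\in[0,1]^{\an}$ with $\sum_k\alpha_k=r_0$ and $\sum_{k\in L_i}\alpha_k\le K_i-r'_i$ for every $i$, where $L_i:=\{k:i\in U_k\}$, and (ii) distributing $r'_i$ among the levels $k\in L_i$ subject to the residual per-level cap $1-\alpha_k$, which is immediate once (i) holds since $\sum_{k\in L_i}(1-\alpha_k)\ge r'_i$.

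The main obstacle is to certify feasibility of the LP in (i). By max-flow--min-cut on the bipartite incidence graph of users versus levels (its constraint matrix is totally unimodular, so integral dual optima suffice), (i) is feasible iff $r_0+\sum_{i\in \S}r'_i\le \an-|\bigcup_{i\in \S}L_i|+\sum_{i\in \S}K_i$ for every $\S\subseteq\{1,\ldots,K\}$. The punchline is the combinatorial identity
\[
  \sum_{k=1}^{\an}f_k(\S)\ =\ \an-\Big|\bigcup_{i\in \S}L_i\Big|+\sum_{i\in \S}K_i,
\]
obtained by splitting $\sum_k\max(|U_k\cap \S|,1)$ according to whether $U_k\cap \S$ is empty (contributing $1$ per such $k$) or nonempty (contributing $|U_k\cap \S|$, whose total over $k$ equals $\sum_{i\in \S}K_i$ by counting incidences). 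This identifies the min-cut bounds with the residual sum-rate bounds already verified, so the schedule $(\alpha_k)$ exists and the required decomposition follows.
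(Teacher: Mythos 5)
Your proposal is correct, but it proves the containment $\overC\subseteq\underC$ by a genuinely different route than the paper. The paper argues through the corner points of the polyhedron $\overC$: it sets up a bipartite graph of constraints versus levels, shows (Lemma~\ref{l:achieveCompatible}) that any \emph{compatible} collection of constraints can be met with equality simultaneously by the level-allocation scheme, and then proves the real workhorse, Lemma~\ref{deterministicAchievable} (a fairly delicate appendix argument involving occluded users and a four-way splitting of $\S,\S'$), that any \emph{consistent} collection is compatible; convexity then finishes the job. You instead decompose an arbitrary point of $\overC$ directly into $\CalC_{\text{free}}+\sum_k\CalC_k$: after the free-level reduction, feasibility of the schedule $(\alpha_k)$ is a packing LP whose constraint matrix has the consecutive-ones property (each $L_i$ is an interval of levels), hence is totally unimodular, so LP duality reduces feasibility to the cut conditions $r_0+\sum_{i\in\S}r_i'\le \an-|\bigcup_{i\in\S}L_i|+\sum_{i\in\S}K_i$, which your double-counting identity matches exactly to $\sum_k f_k(\S)$. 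I verified the pieces: the bookkeeping $r_i'\le K_i$ holds in all cases (in the degenerate case $n_{0i}-n_{ii}\ge n_{00}$ the identity $n_{ii}=f_{\text{free}}(i)+K_i$ becomes an inequality, but then $r_i'=0=K_i$ anyway); the residual sum-rate bound via $\S^*$ and monotonicity of $f_k$ is right (including the trivial case $\S^*=\varnothing$, which uses $r_0\le n_{00}$); and step (ii) decouples across users because the constraints of $\CalC_k$ only couple each $r_i^{(k)}$ to $r_0^{(k)}$. One phrase to tighten: this is not literally a max-flow problem, since $\alpha_k$ is charged in full against every user whose window contains $k$ rather than being split among them, so the correct justification is the one you give parenthetically (total unimodularity plus LP duality, with $y_i\ge 2$ reducible to $y_i=1$ because $c_i\ge0$), not max-flow--min-cut per se. The trade-off: your argument avoids enumerating corner points and the consistent-implies-compatible lemma entirely, and makes transparent why the $2^K-1$ sum-rate constraints are exactly the right ones (they are the min-cuts); the paper's compatibility framework, in exchange, yields the remark that every corner point is achievable with zero error by a fixed i.i.d.\ codebook, which your fractional time-sharing decomposition does not immediately give.
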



We first prove the constraints in equations (\ref{individualConstraint}) and (\ref{sumRateConstraint}) characterizing $\overC$, and then show that the region coincides with the achievable region $\underC$ of equation \eqref{e:AchievableRegionParallel}.

\begin{proof}[Proof of Lemma~\ref{l:outerBound}]
Clearly, the rate across each link cannot exceed the point-to-point capacity; hence
\begin{equation}r_i\leq {\al_ {ii}} \quad 0\leq i\leq K.\end{equation}

Next, we prove a sum-rate constraint on an arbitrary set of users $\S\cup \{0\}$, where $\S\subseteq \{1,\ldots,K\}$. We give the following side information  to receiver 0: at each level
$k,1\leq k\leq \al_{00}$, the input signals of all interfering users
in $\S$ except for one, and also $\{\hat x_i\}_{i\in\S} $ and the inputs of all users not
in $\S $. More precisely, for each $k$, $1\leq k\leq \al_{00}$, let
$Q_k$ be any set that satisfies $Q_k\subseteq (U_k\cap \S)$ and
$|Q_k|=(|U_k\cap \S|-1)^+$ . We give the side information \begin{equation}\label{e:manyToOneSideInf}s_0 =
\left(\{x_{Q_k|k}\}_{k=1}^\an, \{x_i\}_{i\notin
\S},\{\hat x_i\}_{i\in\S} \right).\end{equation}

Recall that $y_{i,k}=x_{i,k}$ for users $i\neq 0$, hence $I(y_i^N;x_i^N)=H(x_i^N)$. Fano's inequality, the data processing inequality, the chain rule for mutual information, independence of $x_0$ and $s_0$, and breaking apart the signals according to level gives
\begin{align} N(r_0 +\sum_{i\in \S} r_i-\eps_N)\nonumber
&\leq     I(y_0^N,s_0^N;x_0^N)+\sum_{i\in \S}
   I(y_i^N;x_i^N)\nonumber
\\&=     I(y_0^N;x_0^N|s_0^N)+    \sum_{i\in \S}
   I(y_i^N;x_i^N)\nonumber
\\ &=   H(y_0^N|s_0^N)-H(y_0^N|s_0^N,x_0^N)+ \sum_{i\in \S}
   H(y_i^N)\nonumber
\\ &=   H\bigg(\bigg\{x^N_{0|k}+\sum_{i\in
   U_k}x^N_{i|k}\bigg\}_{k=1}^\an, \sum_{i=1}^K \hat x^N_i \bigg|\{x^N_{Q_k|k}\}_{k=1}^\an, \{x^N_i\}_{i\notin
   \S},\{\hat x^N_i\}_{i\in \S}\bigg)\nonumber
   \\ &\quad - H\bigg(\bigg\{\sum_{i\in
       U_k}x^N_{i|k}\bigg\}_{k=1}^\an, \sum_{i=1}^K\hat x^N_i \bigg|\{x^N_{Q_k|k}\}_{k=1}^\an, \{x^N_i\}_{i\notin
       \S},\{\hat x^N_i\}_{i\in \S}\bigg)\nonumber
   \\&\quad        +
      H\left(\{x^N_{Q_k|k}\}_{k=1}^\an,\{x^N_{U_k\cap \S\setminus Q_k|k}\}_{k=1}^\an,\{\hat x^N_i\}_{i\in \S},\{\tilde x^N_i\}_{i\in \S}\right)\nonumber\,.
      \end{align}
Continuing, the fact that $s_0$ is independent of $x_0$, removing conditioning, the chain rule for mutual information, and the
independence bound on entropy justify the remaining inequalities:      
      \begin{align}
&\leq  H\left(\bigg\{x_{0|k}^N+
   x_{U_k\cap \S\setminus Q_k|k}^N\bigg\}_{k=1}^\an\right)\nonumber   -H\left(\{ x^N_{U_k\cap \S\setminus
   Q_k|k}\}_{k=1}^\an\big|\{\hat x_i\}_{i\in\S}\right)
   \\&\quad + H\left(\{x^N_{Q_k|k}\}_{k=1}^\an\right)+   H\left(\{x^N_{U_k\cap \S\setminus
       Q_k|k}\}_{k=1}^\an\big|\{\hat x_i\}_{i\in\S}\right)+H\left(\{\hat x^N_i\}_{i\in \S}\right)+H\left(\{\tilde x^N_i\}_{i\in \S}\right)\nonumber
\\ &\leq    Nn_{00}+N\sum_{k=1}^\an
   (|U_k\cap \S|-1)^+ \nonumber
 +N\sum_{i\in \S}\left( (n_{0i}-n_{00})^++(n_{ii}-n_{0i})^+
  \right)\nonumber
\\&=N\left(f_\text{free}(\S)+\sum_{k=1}^{\al_{00}}f_k(\S)\right)\nonumber.
\end{align}
Taking $N\to \infty$ proves the sum-rate constraint.
\end{proof}

\subsection{Proof of Theorem~\ref{t:DeterministicCapacity}}
The preceding lemmas give algebraic characterizations of $\underC$ and $\overC$; therefore, the result of Theorem~\ref{t:DeterministicCapacity} is essentially an algebraic property. We begin the proof by taking another look at the achievable region. 

As outlined above, the achievable strategy consists of allocating each level $k,1\leq k\leq \an$, entirely to user 0 or to all users in $U_k$ (recall that $U_k$ is the set of users potentially causing interference to user 0 on level $k$).
Now, the sum-rate constraint \eqref{sumRateConstraint} on a single set of users $\S$ can be met with equality by 1) having each user in $\S$ transmit on levels not causing interference to user 0, and 2) if $|U_k\cap \S|\geq 2$, then users in $\S$ transmit on level $k$ while user 0 is silent, if $U_k\cap \S=\varnothing$, then user 0 transmits on level $k$ while all other users are silent, and if $|U_k\cap \S|=1$ then either user 0 or the interfering user transmits on level $k$.


These rules for allocating levels can be understood through a bipartite graph (see Figure~\ref{BipartiteGraphFigure}). The left-hand set of vertices is indexed by the subsets $\S\cup \{0\}$ for each nonempty $\S\subseteq \{1,\ldots,K\}$ and also a vertex for each user $0,1,\ldots, K$; the right-hand set of vertices is indexed by the levels $k,1\leq k\leq \an$. There is a solid edge between $\S\cup\{0\}$ and $k$ if $|U_k\cap \S|\geq 2$, signifying that for our scheme to achieve the constraint on  $\S\cup\{0\}$ with equality, it is required that all users in $\S$ other than user 0 transmit on level $k$. There is a dashed edge between  $\S\cup\{0\}$ and $k$ if $U_k\cap \S=\varnothing$, signifying that no users other than user 0 may transmit on level $k$. There is no edge if $|U_k \cap \S|=1$.
Finally, for each of the $K+1$ individual constraints (on user $i$, $0\leq i\leq K$), the vertex labeled $i$ has solid edges to those $k$ with $i\in U_k$ (and no other edges), signifying that user $i$ must fully use all available levels.

For a vertex $v$ on the left-hand side, let $N_\b(v)$ be the set of (right-hand side) vertices connected by solid edges to $v$, and similarly, let $N_\r(v)$ be the set of (right-hand side) vertices connected by dashed edges to $v$.

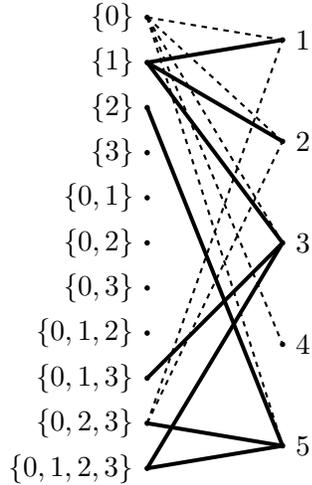
\begin{figure}
\psset{unit=.6mm,arrowlength=1,arrowinset=0}
\begin{center}
\begin{pspicture}(-1,-2)(31,104)

\psline[linestyle=dashed,dash=2pt 2pt](0,100)(30,5)
\psline[linestyle=dashed,dash=2pt 2pt](0,100)(30,27.5)
\psline[linestyle=dashed,dash=2pt 2pt](0,100)(30,50)
\psline[linestyle=dashed,dash=2pt 2pt](0,100)(30,72.5)
\psline[linestyle=dashed,dash=2pt 2pt](0,100)(30,95)

\psline[linewidth=1.5pt](0,90)(30,72.5)
\psline[linewidth=1.5pt](0,90)(30,50)
\psline[linewidth=1.5pt](0,90)(30,95)

\psline[linewidth=1.5pt](0,0)(30,50)
\psline[linewidth=1.5pt](0,0)(30,5)

\psline[linestyle=dashed,dash=2pt 2pt](0,10)(30,95)
\psline[linestyle=dashed,dash=2pt 2pt](0,10)(30,72.5)
\psline[linewidth=1.5pt](0,10)(30,5)

\psline[linewidth=1.5pt](0,20)(30,50)

\psline[linewidth=1.5pt](0,80)(30,5)

\pscircle*(0,100){.6} \pscircle*(0,90){.6} \pscircle*(0,80){.6}
\pscircle*(0,70){.6} \pscircle*(0,60){.6} \pscircle*(0,50){.6}
\pscircle*(0,40){.6} \pscircle*(0,30){.6} \pscircle*(0,20){.6}
\pscircle*(0,10){.6} \pscircle*(0,0){.6}

\uput[l](0,100){$\{0\}$}\uput[l](0,90){$\{1\}$}\uput[l](0,80){$\{2\}$}
\uput[l](0,70){$\{3\}$} \uput[l](0,60){$\{0,1\}$}
\uput[l](0,50){$\{0,2\}$} \uput[l](0,40){$\{0,3\}$}
\uput[l](0,30){$\{0,1,2\}$} \uput[l](0,20){$\{0,1,3\}$}
\uput[l](0,10){$\{0,2,3\}$} \uput[l](0,0){$\{0,1,2,3\}$}

\pscircle*(30,5){.6} \pscircle*(30,27.5){.6}
\pscircle*(30,50){.6}\pscircle*(30,72.5){.6}\pscircle*(30,95){.6}

\uput[r](30,5){5} \uput[r](30,27.5){4} \uput[r](30,50){3}
\uput[r](30,72.5){2} \uput[r](30,95){1}

\end{pspicture}
\end{center}
\caption{Bipartite graph associated with the interference pattern in Figure~\ref{BipartiteRectanglesFigure}. For clarity, only the edges adjacent to the top three and bottom three vertices on the left-hand side were included.}\label{BipartiteGraphFigure}
\end{figure}

\begin{definition}Given a set of constraints, a subset $A$ of these constraints is said to be
\emph{consistent} if there is a point that lies on each constraint in $A$ simultaneously, and the point does not violate any of the other constraints.\end{definition}

\begin{definition} Given a set of constraints with bipartite graph as described above, a collection of constraints is said to be \emph{compatible} if for any two of the constraints on sets $\S$, $\S^\pr$, it holds that $N_\b(\S) \cap N_\r(\S^\pr)=\varnothing$ and $N_\b(\S^\pr) \cap N_\r(\S)=\varnothing$.\end{definition}

The next lemma is immediate from the definitions.
\begin{lemma} It is possible to achieve at least one point in the intersection of the hyperplanes defining any collection of compatible constraints.\label{l:achieveCompatible}
\end{lemma}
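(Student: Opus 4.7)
The plan is to exhibit, for any given compatible collection of constraints, an explicit allocation of signal levels that makes every constraint in the collection tight simultaneously. Recall the achievable scheme assigns each level $k$ of receiver $0$ either to user $0$ (who transmits one bit while all interferers stay silent on level $k$) or to the interferers (every user in $U_k$ transmits one bit while user $0$ stays silent); in addition, every user $i \neq 0$ always occupies all $f_\text{free}(i)$ of its ``free'' levels, which do not interact with receiver $0$.

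Given a compatible collection, I would construct the allocation one level at a time. For each $k$ with $1 \leq k \leq \an$, if some constraint in the collection has a solid edge to $k$, assign level $k$ to the interferers; otherwise, if some constraint has a dashed edge to $k$, assign level $k$ to user $0$; otherwise, assign arbitrarily. The definition of compatibility says precisely that the first two clauses never both apply, so the rule is well defined.

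To verify that the resulting rate point lies on each constraint, I would compute per-level contributions. For a sum-rate constraint on $\S \cup \{0\}$, level $k$ contributes $|U_k \cap \S|$ to $r_0 + \sum_{i \in \S} r_i$ when assigned to the interferers and $1$ when assigned to user $0$, while the target from \eqref{sumRateConstraint} is $f_k(\S) = \max(|U_k \cap \S|, 1)$. The solid/dashed rules force the correct choice exactly when one of the two options falls short: levels with $|U_k \cap \S| \geq 2$ attract solid edges and get the interferers, levels with $U_k \cap \S = \varnothing$ attract dashed edges and get user $0$, and when $|U_k \cap \S| = 1$ either assignment hits the target. Summing over $k$ and adding $f_\text{free}(\S)$ from the free levels recovers the right-hand side. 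Individual constraints are checked the same way: user $i$'s constraint demands that $i$ transmit on every level $k$ with $i \in U_k$, which is exactly what its solid edges prescribe.

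The only real subtlety is convincing oneself that compatibility rules out all conflicts --- not just between two sum-rate constraints but also those involving individual constraints. This reduces to noticing that the unique pair of mutually incompatible per-level requirements is ``interferers must transmit'' versus ``user $0$ must transmit'', which is precisely the pairing that the definition $N_\b(v) \cap N_\r(v') = \varnothing$ excludes; all remaining combinations (including a solid edge meeting a solid edge, or either meeting a ``$|U_k \cap \S| = 1$'' level with no edge) are mutually satisfiable by a single per-level choice.
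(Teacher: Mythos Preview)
Your argument is correct and matches the paper's own proof: both construct a single level-by-level allocation (interferers transmit on levels carrying a solid edge from some constraint in the collection, user $0$ transmits on levels carrying a dashed edge, arbitrary elsewhere), observe that compatibility is exactly the condition that no level receives both instructions, and then check constraint-by-constraint that the resulting rate point meets every member of the collection with equality. Your write-up is more explicit about the per-level accounting than the paper's two-line version, but the idea is identical.
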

\begin{proof}
It is necessary to check that the assignment for achieving each constraint individually works for the collection of compatible constraints simultaneously. To see this, note that if a set $A$ of constraints (indexed by the sets of users) are compatible, it must be that $$\left(\bigcup_{\S\in A}  N_\b(\S) \right)\bigcap\left( \bigcup_{\S\in A} N_\r(\S)\right) =\varnothing.$$ Thus, in the graph induced by constraints in $A$, each vertex on the right-hand side of the graph has only dashed edges or only solid edges (or no edges), i.e. the assignments agree and all constraints can be achieved simultaneously. This proves the lemma.
\end{proof}

To finish the proof of Theorem~\ref{t:DeterministicCapacity}, we show in the next lemma that if a collection of constraints is consistent, then it is also compatible. In other words, the corner points of the outer bound polyhedron are compatible, and hence by Lemma~\ref{l:achieveCompatible} achievable. 

\begin{lemma} If a collection of constraints  from Theorem~\ref{t:DeterministicCapacity} is consistent, then it is also compatible.
 \label{deterministicAchievable}
\end{lemma}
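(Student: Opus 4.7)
I will prove the contrapositive: if $A$ is not compatible, it is not consistent. Non-compatibility supplies two constraints in $A$---after symmetrizing and treating individual constraints $r_i \le n_{ii}$ as degenerate singleton constraints---that are sum-rate bounds on nonempty sets $\S,\S' \subseteq \{1,\ldots,K\}$ with a level $k^* \in \{1,\ldots,n_{00}\}$ witnessing the conflict: $|U_{k^*} \cap \S| \ge 2$ while $U_{k^*} \cap \S' = \emptyset$. Let $r^*$ be any outer-bound-feasible point at which both the $\S$- and $\S'$-constraints are tight. My goal is to exhibit a third outer-bound constraint that $r^*$ violates, contradicting the assumed consistency of $A$.

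The plan rests on two algebraic facts. First, $g(\S) := f_\text{free}(\S) + \sum_{k=1}^{n_{00}} f_k(\S)$ is submodular in $\S$, since $f_\text{free}$ is modular and each $f_k(\cdot) = \max(|U_k\cap\cdot|,1)$ is submodular; a short case analysis on $(|U_k\cap\S|, |U_k\cap\S'|, |U_k\cap\S\cap\S'|)$ shows the per-level inequality $f_k(\S)+f_k(\S') \ge f_k(\S\cup\S')+f_k(\S\cap\S')$ is strict exactly when $U_k$ meets both $\S$ and $\S'$ but misses $\S\cap\S'$. Second, $r_0 \le n_{00}$ coincides with the formula ``$r_0 + \sum_\emptyset r_i \le g(\emptyset) = n_{00}$'', so every constraint in play has the uniform shape $r_0 + \sum_\S r_i \le g(\S)$ for $\S \subseteq \{1,\ldots,K\}$.

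Summing tightness of $\S$ and $\S'$ via the identity
\[
  (r_0^* + {\textstyle\sum}_\S r_i^*) + (r_0^* + {\textstyle\sum}_{\S'} r_i^*) = (r_0^* + {\textstyle\sum}_{\S\cup\S'} r_i^*) + (r_0^* + {\textstyle\sum}_{\S\cap\S'} r_i^*),
\]
and invoking the outer-bound constraints on $\S\cup\S'$ and $\S\cap\S'$, one obtains $g(\S)+g(\S') \le g(\S\cup\S')+g(\S\cap\S')$; combined with submodularity this forces equality throughout. Hence (i) the constraints on $\S\cup\S'$ and $\S\cap\S'$ are also tight, and (ii) per-level submodularity holds with equality at every $k$, ruling out any level where the interferences from $\S$ and $\S'$ are disjoint yet both nonzero outside $\S\cap\S'$. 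I will use this rigidity as the principal structural tool.

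Finally, pick $j_0 \in U_{k^*}\cap\S$ (necessarily $j_0 \notin \S'$) and compare the tight $\S$-constraint with the sum-rate bound on $\S\setminus\{j_0\}$: this forces $r_{j_0}^* \ge g(\S)-g(\S\setminus\{j_0\})$, a difference that evaluates explicitly to $f_\text{free}(j_0) + e(j_0)$, where $e(j_0) := |\{k : j_0\in U_k,\,|U_k\cap\S|\ge 2\}| \ge 1$ by the conflict at $k^*$. Combined with the singleton constraint $r_0^* + r_{j_0}^* \le g(\{j_0\}) = f_\text{free}(j_0) + n_{00}$, this forces $r_0^* \le n_{00} - e(j_0) \le n_{00} - 1$. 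On the other hand, $\S'$-tightness together with the individual caps $r_i^* \le n_{ii}$ for $i\in\S'$ gives $r_0^* \ge n_{00} - M_{\S'}$, where $M_{\S'} := |\{k : U_k\cap\S' \ne \emptyset\}|$. Whenever $e(j_0) > M_{\S'}$---in particular whenever $\S'$ is disjoint from every $U_k$, as in the three-user motivating example---these bounds contradict, finishing the proof in that regime. The main obstacle is the residual case $e(j_0) \le M_{\S'}$, where the singleton witness is too weak; there I would use the per-level equality from (ii) to either enlarge the witness from $\{j_0\}$ to $\{j_0\}\cup\S'$, or replace $\S'$ by a subset on which the conflict at $k^*$ persists but $M_{\S'}$ strictly decreases, and iterate. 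Each refinement strictly improves the excess-versus-budget comparison, so the procedure terminates in finitely many steps with a violated constraint.
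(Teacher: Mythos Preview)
Your proposal has a fundamental error: the per-level function $f_k(\S)=\max(|U_k\cap\S|,1)$ is \emph{supermodular}, not submodular. Write $f_k(\S)=|U_k\cap\S|+\mathbb{1}[U_k\cap\S=\varnothing]$; the first term is modular and the indicator is supermodular (take $U_k$ meeting each of $\S$, $\S'$ but not $\S\cap\S'$: the indicator values are $0,0,0,1$ on $\S,\S',\S\cup\S',\S\cap\S'$). Concretely, if $|U_k\cap\S|=|U_k\cap\S'|=1$ and $U_k\cap\S\cap\S'=\varnothing$, then $f_k(\S)+f_k(\S')=2$ while $f_k(\S\cup\S')+f_k(\S\cap\S')=2+1=3$. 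So your displayed inequality points the wrong way, and the ``rigidity'' conclusion (ii)---per-level equality forced by combining submodularity with feasibility---does not follow. With supermodularity, tightness on $\S,\S'$ together with feasibility on $\S\cup\S',\S\cap\S'$ yields only the trivially consistent chain $g(\S)+g(\S')\le g(\S\cup\S')+g(\S\cap\S')$, which is exactly supermodularity itself; nothing is forced to be tight. Worse, at the conflict level $k^*$ you have $U_{k^*}\cap\S'=\varnothing$, so $U_{k^*}$ does not meet $\S'$ and there is no strictness there even in the correct direction; the union/intersection decomposition simply does not see the incompatibility at $k^*$.

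This also undercuts your endgame. Your bounds $r_0^*\le n_{00}-e(j_0)$ and $r_0^*\ge n_{00}-M_{\S'}$ are correct as stated, but you yourself note that the residual case $e(j_0)\le M_{\S'}$ is ``the main obstacle'', and your proposed iteration to close it leans entirely on (ii), which is unavailable. The paper's proof avoids union/intersection altogether: it splits $\S$ and $\S'$ \emph{by level} around $k^*$ into $\S_a,\S_b$ and $\S_a',\S_b'$ (users interfering below vs.\ at-or-above $k^*$), considers the crossover sets $\S_a\cup\S_b'$ and $\S_b\cup\S_a'$, and shows by a direct combinatorial count that
\[
f(\S_b\cup\S_a')+f(\S_a\cup\S_b')-f(\S)\le f(\S')-1,
\]
contradicting what tightness on $\S,\S'$ plus feasibility on the crossover sets would force. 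The crossover decomposition is the key idea you are missing; it is precisely designed so that the level $k^*$ contributes a strict $-1$, which union/intersection cannot produce.
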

\begin{proof}
The proof is deferred to the Appendix.
\end{proof}


The proof of Theorem~\ref{t:DeterministicCapacity}, which gives the capacity region of the many-to-one deterministic IC, now requires only a straightforward application of the previous lemmas.

Consider any corner point of the outer bound polyhedron. It is located at the intersection of $K+1$ consistent constraints, and this point is achievable by the previous two lemmas.
Hence all corner points of the outer bound polyhedron are achievable, and because it is convex, the polyhedron defined by all the constraints is the capacity region of the channel.

\begin{remark}  It is a pleasing feature of this channel that all corner points of the capacity region can be achieved with zero probability of error, and using a fixed code book with inputs i.i.d over time.
\end{remark}

\begin{remark} There is a natural generalization of the Han and Kobayashi scheme from the two-user interference channel to the many-user interference channel. The capacity-achieving strategy for the deterministic many-to-one IC presented in this section falls within this generalized class, with each user's signal consisting entirely of private information. 
\end{remark}

\section{Approximate Capacity Region of the Gaussian Many-to-One Interference Channel}\label{s:GaussianManytoOne}

In this section we present inner and outer bounds to the capacity region of the Gaussian many-to-one IC, analogous to those proved for the deterministic case. However, unlike in the deterministic case, the inner and outer bounds do not match: there is a gap of approximately $5 K \log K$ bits per user (there are $K+1$ users). In comparing the inner and outer bounds, we make use of the deterministic capacity result from the previous section. The achievable region and outer bound, in turn, are shown to lie within $ 3 K \log K$ and $2K\log K$, respectively, bits per user of the capacity region of an appropriately chosen deterministic channel.

In order to harness the understanding gained from the deterministic channel toward the Gaussian case, we construct a similar diagram as that used earlier to describe the signal observed at receiver 0 (Figure~\ref{GaussianRectangles}). Recall the notation $\snr_i=|h_{ii}|^2 P_i/N_0$ for $0\leq i\leq K$, and $\inr_{ij}=|h_{ji}|^2 P_i/N_0$ for $0\leq i,j\leq K$. Since interference occurs only at receiver 0, we shall write  $\inr_i$ instead of $\inr_{i0}$.
For convenience, assume w.l.o.g. that the users are ordered so that $\inr_i/\snr_i\leq \inr_{i+1}/\snr_{i+1}$ for $1\leq i\leq K-1$.
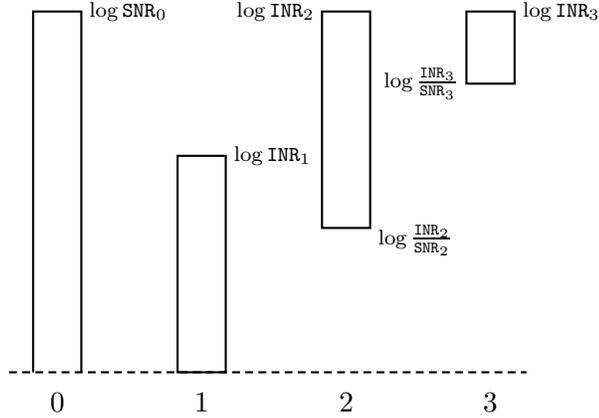
\begin{figure}
\begin{centering}
\psset{unit=.8mm,arrowlength=1,arrowinset=0,labelsep=3pt}
\begin{center}
\begin{pspicture}(-1,-2)(87,62)
\psline[linestyle=dashed,dash=3pt 2pt](0,0)(86,0)
\psline(4,0)(4,60)(12,60)(12,0) \rput(8,-5){0}
\pspolygon(28,0)(28,36)(36,36)(36,0)\rput(32,-5){1}
\pspolygon(76,48)(76,60)(84,60)(84,48)\rput(56,-5){2}
\pspolygon(52,24)(52,60)(60,60)(60,24)\rput(80,-5){3}

\uput[r](12,60){\scriptsize $\log \snr_0$} 
\uput[r](36,36){\scriptsize $\log \inr_1$}
\uput[r](84,60){\scriptsize $\log \inr_3$} 
\uput[l](76,48){\scriptsize{$\log\frac{\inr_3}{\snr_3}$}} 
\uput[l](52,60){\scriptsize $\log\inr_2$} 
\uput[r](60,22){\scriptsize $\log\frac{\inr_2}{\snr_2}$}

\end{pspicture}
\end{center}
\caption{This figure is analogous to Figure~\ref{BipartiteRectanglesFigure}, and shows the interference pattern as observed by receiver 0 (for a different choice of channel gains).}\label{GaussianRectangles}
\end{centering}
\end{figure}

\subsection{Achievable Region}
The achievable strategy mimics the strategy for the deterministic channel, generalizing the scheme proposed for the example channel in Section~\ref{s:ExampleManytoOne}. It can be summarized in a few key steps. First, the range of power-to-noise ratios at receiver 0 is partitioned into intervals to form levels, like in the deterministic channel. There is an independent lattice code for each level, chosen in such a way that the levels do not interact. The scheme then reduces to the achievable scheme for the deterministic channel (with different rates on each level). 

\begin{remark}In using a random lattice instead of the binary expansion, the construction is seemingly different from the one used for the example channel; yet the binary expansion is also a lattice, and both schemes partition the power-to-noise ratios into levels. A direct generalization of the example scheme using binary inputs is also possible; such an approach is not pursued here because it leads to a larger gap from the outer bound and also requires a more technical development (see \cite{BT08}, where a direct approach is taken for the two-user interference channel).
\end{remark}

We now describe the achievable scheme in detail. The power range as observed at receiver 0 is partitioned according to the values $\inr_i$ and $\frac{\inr_i}{\snr_i}$ for all users $i$, $1\leq i\leq K$. More precisely, let $\snr_0=v_1$ and for $1\leq i\leq K$ let $v_{2i}=\inr_i$ and $v_{2i+1}=\frac{\inr_i}{\snr_i}$. Next, remove elements of $\{v_1,\dots,v_{2K+1}\}$ of magnitude less than 1, i.e. let $\{u_1,\dots,u_M\}=\{v_i: v_i>1\}$. Denote by $u_{(k)}$ the $k$th smallest value among $\{u_1,\dots,u_{M}\}$, and let $q_k=u_{(k)}$ for $k\geq 1$, and $q_{-1}=0$, $q_0=1$. The highest endpoint is $q_M=\max(\snr_0,\max_k \inr_k)$. The resulting intervals are $[q_{k-1},q_{k}]$, $0\leq k\leq M$. The partition of power ranges into intervals plays the role of levels in the deterministic channel.

A signal power $\theta_k$, to be specified later, is associated with each level. 
Each user $i$, $0\leq i\leq K$, decomposes the transmitted signal into a sum of independent components 
$$x_i=\sum_{k=0}^M X_i(k),$$ component $X_i(k)$ being user $i$'s input to the $k$th level. The signal $X_i(k)$ has power $\theta_k / |h_{0i}|^2$, so is observed by receiver 0 to be of power $\theta_k$. Of course, each user must satisfy an average power constraint, so does not transmit on higher levels than the power constraint allows: $X_i(k)\equiv 0$ for $k>k_{\text{max}}(i)$, where $q_{k_{\text{max}}(i)}=\inr_i$ for $1\leq i\leq K$ and $q_{k_{\text{max}}(0)}=\snr_0$. Also, user 0 does not transmit on level 0, losing at most 1 bit. 

For each interval $[q_{k-1},q_{k}]$, a lattice code is selected, as in \cite{Loeliger:AveragingBounds}: the spherical shaping region has average power per dimension $\theta_k$ and the lattice is good for channel coding. The rate $R_k$ of the lattice is chosen to allow decoding.
All users transmitting on a given level use the same code (with independent dithers). 
As in the deterministic channel, for each level, either user 0 transmits or all of the interfering users transmit. 

We next describe the decoding procedure at receiver 0. Decoding occurs from the top level downwards, treating the signals from lower levels as Gaussian noise. When the signal on a level is decoded, it is subtracted off completely, and decoding proceeds with the next highest level. Therefore, in describing the decoding procedure, we inductively assume all higher levels have been correctly decoded. On levels where user 0 is silent and interfering users transmit, only the aggregate interfering signal on the level is decoded. This is accomplished by decoding to the nearest lattice point. 

The probability of error analysis is simple, because the sum of subsets of an infinite lattice constellation results in a subset of the same infinite constellation. Furthermore, the probability of decoding error when using lattice decoding does not depend on the transmitted codeword. Thus, because each user transmitting on a level uses a subset of the same infinite lattice, it suffices to consider the decoding of an arbitrary codeword from the lattice. Theorem 7 of \cite{Loeliger:AveragingBounds} shows that if the rate (density of lattice points) is not too high, then receiver 0 is able to decode the sum. The following is a special case discussed immediately following the more general result of Theorem 7:

\begin{theorem}[\cite{Loeliger:AveragingBounds}] \label{t:LatticeDecoding}
  Arbitrarily reliable transmission at rate $R$ is possible with lattice codes of the form $(v+\Lambda)\cap S$, provided $$R< \log \left(\frac{P}{N}\right)\, .$$ Here $\Lambda\subset \R^N$ is a lattice, $v\in \R^N$ is a dither (i.e. shift), $S$ is a spherical shaping region with power $P$ per dimension, and $N$ is the noise variance per dimension.
\end{theorem}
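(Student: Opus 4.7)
The plan is to invoke the Minkowski--Hlawka averaging technique over an ensemble of random lattices, combined with a nearest-lattice-point decoder. Let me write $d$ for the lattice dimension (the paper overloads the symbol $N$ for both dimension and noise variance; I will reserve $N$ for the latter). First I would fix a target covolume $V$ and consider an ensemble of lattices $\Lambda \subset \R^d$ with $\det(\Lambda)=V$ for which Minkowski--Hlawka's theorem applies: for any measurable $A\subset \R^d$ of finite volume, the expected number of nonzero points of $\Lambda$ in $A$, averaged over the ensemble, equals $\operatorname{vol}(A)/V$.

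Next I would analyze the lattice decoder. The transmitter sends $x\in (v+\Lambda)\cap S$ and the receiver observes $y=x+z$, with $z$ a vector of i.i.d.\ Gaussian entries of variance $N$. Nearest-lattice-point decoding fails only if some other point of $v+\Lambda$ lies within distance $\|z\|$ of $x$; by translation invariance of the decoder this probability is codeword-independent, so it suffices to consider an arbitrary fixed $x$. Concentration of $\|z\|^2/(dN)$ around $1$ gives $\|z\|\leq r_d:=\sqrt{(1+\delta)dN}$ with probability tending to $1$, and on that event the error event is contained in $\{(\Lambda\setminus\{0\})\cap B(0,r_d)\neq\varnothing\}$. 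Averaging via Minkowski--Hlawka bounds the expected error probability by $\operatorname{vol}(B(0,r_d))/V$; using the asymptotic $\operatorname{vol}(B(0,r))^{1/d}\sim \sqrt{2\pi e/d}\,r$, this quantity vanishes as $d\to\infty$ whenever $V^{1/d}>\sqrt{2\pi e(1+\delta)N}$.

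For the rate, the expected codebook size $|(v+\Lambda)\cap S|$ equals $\operatorname{vol}(S)/V$, and for a spherical shaping region of per-dimension power $P$ one has $\operatorname{vol}(S)^{1/d}\sim \sqrt{2\pi e P}$. Combining with the constraint above gives rate per real dimension
\begin{equation*}
R \;=\; \frac{1}{d}\log\frac{\operatorname{vol}(S)}{V} \;<\; \frac{1}{2}\log\frac{P}{N}\,,
\end{equation*}
which doubles to the stated $\log(P/N)$ in the complex setting, where each channel use contributes two real dimensions. A standard argument then extracts from the ensemble a single $(\Lambda,v)$ that simultaneously realizes both the rate and the vanishing probability of error.

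The main obstacle is to juggle the two competing requirements: the lattice must be sparse enough (large $V$) that its Voronoi cells comfortably enclose the typical Gaussian noise, yet dense enough (small $V$) that $S$ contains essentially $2^{dR}$ codewords. Boundary effects at $\partial S$, where the realized codebook size fluctuates around its mean, are controlled by slightly shrinking $S$ (or restricting to a ``typical'' subset of lattice points), incurring only vanishing rate loss; the same ensemble-averaging step must be invoked to show that both good events---enough codewords in $S$, and no confusable lattice point in $B(0,r_d)$---hold jointly for at least one realization of $\Lambda$.
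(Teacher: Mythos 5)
The paper does not prove this statement at all---it is imported verbatim as (a special case of) Theorem~7 of the cited Loeliger reference, whose proof is exactly the Minkowski--Hlawka ensemble-averaging argument you outline, so your route is essentially the same as the source's. One small slip worth fixing: nearest-lattice-point decoding errs only if a competing point of $v+\Lambda$ lies within distance $\|z\|$ of the \emph{received} point $y$, not of the transmitted point $x$ (a point at distance $1.5\|z\|$ from $x$ can still be closer to $y$); since the relevant ball $B(y,\|z\|)$ has the same volume as $B(0,\|z\|)$ and the lattice ensemble is independent of $z$, the averaged bound $\operatorname{vol}(B(0,r_d))/V$ and hence your rate threshold are unaffected.
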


It remains to specify the powers, $\theta_k$, and the rates, $R_k$, for each level. Denote by $N_0(k)$ the variance of all signals on levels $0,\dots,k-1$ plus the additive Gaussian noise as observed at receiver 0: $$N_0(k)=N_0+\sum_{i=0}^{K}\sum_{j=0}^{k-1}\E|X_i(j)|^2|h_{i0}|^2 \leq N_0+\sum_{i=0}^{K}\sum_{j=0}^{k-1}\theta_j\,.$$
The rates achieved by each user transmitting on level $k$ will be \begin{equation}\label{e:RatePerLevel}R_k=\log \left(\frac{\theta_k}{N_0(k)}\right)^+,\end{equation} so that the probability of decoding error vanishes by Theorem~\ref{t:LatticeDecoding}.
We must choose $\theta_k$ in order that the average power constraint is satisfied. Let \begin{equation}\label{e:manyToOne_User0power}\theta_k=(q_k-q_{k-1})N_0\,.\end{equation}
User 0 uses power $\theta_k/|h_{00}|^2$ to transmit on level $k\geq 0$ with $q_k\leq \snr_0$, so that the received power is $\theta_k$. By the definitions, the total power used by user 0 is at most $P_0$. Similarly, to transmit on level $k$ user $i$ uses power 
\begin{equation}\label{e:mayToOne_transmitPower}\theta_k /|h_{0i}|^2.\end{equation} 

Thus, the signals $X_i(k)$, $1\leq i\leq K$ are observed at receiver 0 at power $\theta_k$. Now, we can upper bound $N_0(k)$ by assuming all users other than user $0$ transmit on all levels of lower index. This gives the crude bound $N_0(1)\leq (K+1) N_0$ and for $k> 1$ 
\begin{equation}\label{e:manyToOne_noiseBoundUser0}N_0(k)\leq K q_{k-1}N_0.\end{equation} It must be noted that users other than $0$ will in general have the Gaussian noise at some other power than $N_0/|h_{0i}|^2$; however, since these users only transmit at levels above the noise level, the only source of noise when decoding the lowest level is the additive Gaussian noise. Hence for a user $i\neq 0$ it holds that \begin{equation}\label{e:manyToOne_noiseBound}N_i(k)\leq N_0 q_{k-1}/|h_{0i}|^2, \quad k\geq 1.\end{equation} Recall that we have assumed that for all users $1\leq i\leq K$, $\inr_i>1$. From this, the choice of powers $\theta_k$ \eqref{e:manyToOne_User0power} and \eqref{e:mayToOne_transmitPower}, and the estimates \eqref{e:manyToOne_noiseBoundUser0} and \eqref{e:manyToOne_noiseBound},
we have that the rate of the codebook for level $k> 1$ can be taken as \begin{equation}\begin{split}R_k=\log\left(\frac{\theta_k}{N_0(k)}\right)^+ &\geq \log\left(\frac{q_k-q_{k-1}}{q_{k-1}}\right)^+-\log K\\ &\geq \log\left(1+\frac{q_k-q_{k-1}}{q_{k-1}}\right)-1-\log K
\\&=\log q_k-\log q_{k-1}-1-\log K \,.\label{e:LatticeRate}\end{split}\end{equation}

To compare with the achievable region for the deterministic channel, we make the correspondence $$n_{ii}=\log\snr_i,\quad  0\leq i\leq K\,,\quad \text{and}\quad n_{0i}=\log\inr_i, \quad 1\leq i\leq K\,.$$ Let $l_k$ be the ordered version of the set of endpoints of intervals $\left\{\{n_{0i},n_{0i}-n_{ii}\}_{i=1}^K, 0, n_{00}\right\}$, i.e. $l_k=\log q_k$. Recall the notation $k_{\text{max}}(0)$ is the highest level that user $0$ can use, so that $q_{k_{\text{max}}(0)}=\snr_0$, and also $M$ is the total number of levels. 

We can now finish describing the achievable strategy for the Gaussian channel. On levels without user 0 present, i.e. $k=0$ or $k>k_{\text{max}}(0)$, all users use the full available rate, i.e. for $k=0$ user $i$ gets rate at least $$\log \left(\frac{\snr_i}{\inr_i}\right)^+=(n_{ii}-n_{0i})^+$$ and on the levels $k>k_{\text{max}}(0)$ user $i$ gets rate 
\begin{align*}
  \log \left(\frac{\inr_i}{\snr_0}\right)^+-(M-k_{\text{max}}(0))\log K=(n_{0i}-n_{ii})^+-(M-k_{\text{max}}(0))\log K\,.
\end{align*} In other words, the region $$\CalC_{\text{free}}-(M-k_{\text{max}}(0))\log K (1,1,\dots,1)$$ is achievable without any further constraints on the rates of users on levels $1\leq k\leq k_{\text{max}}(0)$.

Now, each level $k$ with $1\leq k\leq k_{\text{max}}(0)$ (user 0 is present on these levels) can support the rate points $(R_k,0,\dots,0)$ and $\{r_i=R_k:i\in U_k\}\cup\{r_i=0:i\notin U_k\}$, i.e. restricting attention to level $k$, the region $$R_k C_{l_k}$$ is achievable, where $C_j$ is the capacity of a deterministic many-to-one IC with a single level, restricted to users $\{0\}\cup U_j$, given in \eqref{e:SingleLevelIndividual} and \eqref{e:SingleLevelPairWise}. Note that by the definition of $\{l_k\}$, the regions $C_j$ are the same for $l_{k-1}<j\leq l_k$.
Thus, rewriting the rate $R_k$ \eqref{e:LatticeRate} as $$R_k\geq l_k-l_{k-1}-1-\log K\,,$$  the achievable region restricted to levels  $1\leq k\leq k_{\text{max}}(0)$ is \begin{align*}&\sum_{k=1}^{k_{\text{max}}(0)} R_k C_{l_k}\supseteq\sum_{k=1}^{k_{\text{max}}(0)} \sum_{j=l_{k-1}+1}^{l_{k}}C_j-k_{\text{max}}(0)\log K (1,1,\dots,1)\\&=\sum_{j=1}^{n_{00}}C_j-k_{\text{max}}(0)\log K (1,1,\dots,1)\,.\end{align*}
Adding to the region from the previous paragraph, we see that the achievable region contains the region 
\begin{equation}\label{e:AcheivableGaussDetComparison}
\CalC_{\text{free}}+\sum_{j=1}^{n_{00}}C_j-M\log K(1,1,\dots,1)\,,\end{equation}
which is exactly the deterministic capacity region \eqref{e:AchievableRegionParallel}, up to a gap of at most $(M+1)\log K$ bits per user. But $M\leq 2K+1$ since there are $2K+2$ total endpoints including those of user 0's signal, so the gap is no greater than $(2K+1)\log K$ bits per user. 
\begin{remark}
  The fact that the gains $n_{ij}$ are restricted to be integer-valued in the deterministic channel has been disregarded in the above argument. However, this does not pose a problem: instead of putting $n_{ij}=|h_{ij}|^2 P_j/N_0$, one may scale by a sufficiently large integer $T$ and set $n_{ij}=\lf T|h_{ij}|^2 P_j/N_0\rf$, and normalize by $T$. The result is that \eqref{e:AcheivableGaussDetComparison} is simply replaced by the same expression minus $\eps$, where $\eps$ is an aribtrary constant greater than zero. An important point is that the achievable region itself has been set; in this section the capacity of the deterministic channel is only used to relate two algebraic quantities.
\end{remark}

We now turn to the outer bound. 

\subsection{Outer Bound}

We attempt to emulate the proof of the outer bound for the deterministic case, where we gave receiver 0 side information consisting of all but one of the interfering signals at each level. Continuing with the analogy that additive Gaussian noise corresponds to truncation in the deterministic channel, we introduce independent Gaussian noise with appropriate variance in order to properly restrict the side information given to receiver 0. For example, if $\inr_i=p$ and $\inr_{i-1}/\snr_{i-1}=q$, then giving the part of the signal $x_i$ above $q$ as side information to receiver 0 calls for $s=x_i+w_i$ where $w_i\sim \CalCN(0,q N_0)$. Use of this idea leads to the outer bound of the following lemma. 

\begin{lemma}\label{sumRateLemma}The capacity region of the Gaussian many-to-one IC is bounded by each of the individual constraints
$$r_i\leq \log (1+\snr_i),\quad 0\leq i\leq K.$$
Moreover, for each $\S\subseteq \{1,\ldots,K\}$ with the property that a relabeling of the indices of $\S$ allows $ \S =\{1,\ldots,m\} $ (where $m=|\S|$) such that
\begin{equation}\begin{split}
\label{conditions}
&\snr_0>1,\quad \frac{\inr_m}{\snr_m}\leq \snr_0, 
\quad
\inr_i>1,\quad 1\leq i\leq m
\\& \frac{\inr_i}{\snr_i}\leq \frac{\inr_{i+1}}{\snr_{i+1}},\quad \inr_i<\inr_{i+1},\quad  1\leq i\leq m-1\,,
\end{split}
\end{equation}
the following sum-rate constraint holds:
\begin{equation}\begin{split}\label{simpleSumRate}
r_0+ r_1 +\dots +r_m  &\leq
   \sum_{i=1}^m
   \log\bigg(\frac{\snr_i}{\inr_i}\bigg)^+  
   +\sum_{i=1}^{m-1} \left(\log(\inr_i)- \log\left(
   \frac{\inr_{i+1}}{\snr_{i+1}}\right)^+\right)^+ 
\\ &\quad
   +\max(\log(\inr_m),\log(\snr_0))
   +( m + 2 ) \log( m + 1 ).
\end{split}\end{equation}
\end{lemma}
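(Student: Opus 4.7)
The individual bounds $r_i\le \log(1+\snr_i)$ follow immediately from Fano's inequality applied to each point-to-point link, so I focus on the sum-rate bound. The plan is to mimic the deterministic outer bound (Lemma~\ref{l:outerBound}), where receiver~0 was given, on each level, all but one of the interfering signals as side information. In the Gaussian setting, ``revealing only the bits above level~$L$'' is implemented by adding independent Gaussian noise of variance~$L$. Specifically, under the ordering in~\eqref{conditions}, I would define for $2\le i\le m$
\[
s_i \;=\; h_{0i}x_i + w_i,\qquad w_i\sim \CalCN\!\left(0,\;\tfrac{\inr_{i-1}}{\snr_{i-1}}N_0\right),
\]
and provide receiver~0 with the side information $\mathbf{s}=(s_2,\ldots,s_m)$ (user~1 is left unsuppressed, paralleling the ``$|U_k\cap \S|-1$'' choice of $Q_k$ in the deterministic proof). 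The condition $\inr_{i-1}/\snr_{i-1}\le \inr_i/\snr_i$ guarantees that the auxiliary noise in $s_i$ sits at or below the level where $h_{0i}x_i$ first emerges at receiver~0, so $s_i$ reveals only the ``top slice'' of each interferer.

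The computation then proceeds by Fano + chain rule + independence of $x_0,\dots,x_m$:
\[
N\!\left(r_0+\sum_{i=1}^m r_i-\eps_N\right)\;\le\; h(y_0^N\mid \mathbf{s}^N) - h(y_0^N\mid \mathbf{s}^N,x_0^N) + \sum_{i=1}^m \bigl(h(y_i^N)-h(z_i^N)\bigr).
\]
Each term is estimated by Gaussian maximum-entropy bounds. The terms $h(y_i^N)$ contribute $\log(1+\snr_i)$, which after pairing with the ``charge'' incurred at receiver~0 leaves the free portion $\log(\snr_i/\inr_i)^+$. For the subtracted entropy $h(y_0^N\mid \mathbf{s}^N,x_0^N)$, substituting $h_{0i}x_i=s_i-w_i$ shows that the residual random quantity is (up to conditioning) $z_0-\sum_i w_i$, whose variance is at most $O(m)N_0$, yielding $\log N_0+O(\log m)$. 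The main term $h(y_0^N\mid \mathbf{s}^N)$ is bounded by the Gaussian upper bound with variance equal to $\snr_0\cdot N_0$ plus the ``bottom slice'' power of each interferer not revealed by $\mathbf{s}^N$, namely at most $(\inr_{i-1}/\snr_{i-1})N_0$ for $i\ge 2$ and $\inr_1 N_0$ for $i=1$. Under the nesting of~\eqref{conditions}, these slices tile without overlap, and reindexing produces exactly
$\sum_{i=1}^{m-1}\!\bigl(\log\inr_i-\log(\inr_{i+1}/\snr_{i+1})^+\bigr)^+ + \max(\log\inr_m,\log\snr_0)$.

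The main technical obstacle is the constant-bookkeeping. Each Gaussian maximum-entropy application, each ``$\log(1+x)\le \log x+1$'' step, and each aggregation of $O(m)$ interferer powers into a single variance each contributes a constant, and these must be gathered into the single uniform gap $(m+2)\log(m+1)$. The coefficient comes from the fact that each of the $m+1$ users on the left side of the bound picks up an overhead of order $\log(m+1)$ from the aggregation of the remaining $m$ interferer powers. The ordering conditions $\inr_i/\snr_i\le \inr_{i+1}/\snr_{i+1}$ and $\inr_i<\inr_{i+1}$, together with $\inr_i>1$, $\snr_0>1$, and $\inr_m/\snr_m\le \snr_0$, are precisely what make every slice nonnegative and non-overlapping, so no slice is double-counted and the telescoping closes up to the advertised additive constant.
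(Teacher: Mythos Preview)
Your plan has a genuine gap: you give genie side information only to receiver~$0$, whereas the paper's proof gives side information to \emph{every} receiver $0,1,\dots,m$, and this is essential. With your setup the single subtracted term $h(y_0^N\mid\mathbf{s}^N,x_0^N)$ must be \emph{lower} bounded, and whatever lower bound you produce can cancel at most one of the $m$ terms $h(y_i^N)\le N\log(1+\snr_i)$. There is no mechanism in your decomposition by which each $h(y_i^N)$ is ``paired'' down to $\log(\snr_i/\inr_i)^+$; summing your three estimates literally gives roughly $\sum_{i=1}^m\log(1+\snr_i)+\log(\snr_0+\cdots)$, which exceeds the target by $\sum_i\log\inr_i$. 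Your treatment of $h(y_0^N\mid\mathbf{s}^N,x_0^N)$ is also off: (i) a variance calculation yields an \emph{upper} bound on entropy, but the term is subtracted so you need a lower bound; (ii) the residual still contains $h_{01}x_1$ since you gave no $s_1$; and (iii) the $w_i$ have variances $(\inr_{i-1}/\snr_{i-1})N_0$, not $O(N_0)$.

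What the paper actually does is give each receiver $k\ge 1$ the side information $s_k=\bigl(\sum_{i\le k}h_{i0}x_i+z_0,\,\sigma_k\bigr)$ with $\sigma_k=(h_{10}x_1+w_1+z_0,\dots,h_{k0}x_k+w_k)$ and suitably scaled Gaussian $w_i$. Two things then happen. First, $h(y_k^N,s_k^N\mid x_k^N)$ equals $h(z_k^N)+h(s_{k-1}^N)+h(w_k^N)$, so the sum over $k$ \emph{telescopes} and the unknown entropies $h(s_k^N)$ cancel. Second, because $s_k$ implicitly contains $h_{k0}x_k$ plus bounded noise (subtract the $\sigma_{k-1}$ components from $\sum_{i\le k}h_{i0}x_i+z_0$), one can bound $h(y_k^N\mid s_k^N)$ via the worst-case conditional entropy fact, producing the $\log(\snr_k/\inr_k)^+$ contribution for each $k$. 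The $(m+2)\log(m+1)$ slack arises from aggregating the $k-1$ noise terms $w_i$ inside this step and from the $\log(2\snr_0+\sum\cdots)$ bound on $h(y_0^N\mid s_0^N)$, not from a uniform per-user overhead as you suggest. In short, the missing idea is the side information at receivers $1,\dots,m$ that both enables the telescoping and replaces each $h(y_k^N)$ by a conditional entropy small enough to match the deterministic bound.
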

\begin{proof}
The proof is deferred to the appendix.
\end{proof}
\begin{remark}The conditions \eqref{conditions} do not nullify any useful constraints. If $\snr_0\leq 1$, then $r_0\leq 1$ (from the point-to-point constraint), and the capacity region is essentially (within one bit per user) given by the intersection of the individual rate constraints. The other conditions ensure that a user causes meaningful interference to receiver 0, and should therefore be included in the constraint: if $\frac{\inr_m}{\snr_m}> \snr_0$ then the signal from user $m$ may be subtracted off by receiver 0 before attempting to decode the intended signal (user $m$ must reduce the rate by at most $\log K$ bits for this to be true); if the signal from transmitter $i$ has $\inr_i\leq 1$, then transmitter $i$ may just transmit at the full available power, causing essentially (again up to $1$ bit) no interference to user 0. The choice 
$\frac{\inr_i}{\snr_i}\leq \frac{\inr_{i+1}}{\snr_{i+1}}$ is simply a relabeling of the users; with this labeling, if $\inr_i\geq\inr_{i+1}$, then user $i+1$ may be removed from the sum-rate constraint (the sum-rate constraint on $\{0,1,\dots,m\}$ is implied by the sum-rate constraint on $\{0,1,\dots,i,i+2,\dots,m\}$ together with the individual constraint on user $i+1$). This is most easily understood by checking the equivalent condition for the deterministic channel. 
\end{remark}

This region \eqref{simpleSumRate} may be compared to the capacity region of a deterministic channel by making the correspondence $n_{ii}=\log\snr_i$, $0\leq i\leq K$, and $n_{0i}=\log\inr_i$, $1\leq i\leq K$. With this choice, \eqref{simpleSumRate} gives for each $\S\subseteq\{1,\dots,K\}$ such that a relabeling of the indices allows $\S=\{1,\dots,m\}$ with $n_{0m}-n_{mm}\leq n_{00}$,  $n_{0i}>0$, $0\leq i\leq K$, and also $n_{0i}-n_{ii}\leq n_{0,i+1}-n_{i+1,i+1}$ and $n_{0i}\leq n_{0,i+1}$ for $1\leq i\leq m-1$, the sum-rate constraint 
\begin{align}
    r_0+ r_1 +\dots +r_m
& \leq\label{e:MTOGauss_compareOuter1}
   \sum_{i=1}^m
   (n_{ii}-n_{0i})^+  
   +\sum_{i=1}^{m-1} \left(n_{0i}- (n_{0,i+1}-n_{i+1,i+1})^+\right)^+ 
\\ &\quad
   +\max(n_{0m},n_{00})
   +( m + 2 ) \log( m + 1 )\nonumber
\\&=n_{00}+\sum_{i=1}^m\left( (n_{0i}-n_{00})^++(n_{ii}-n_{0i})^+\right)\nonumber
\\&\quad +\sum_{k=1}^\an
   (|U_k\cap \S|-1)^+ + ( m + 2 ) \log( m + 1 ) \label{e:MTOGauss_compareOuter2}
\\&=f_{\text{free}}(\S)+\sum_{k=1}^{n_{00}}f_k(\S)+( m + 2 ) \log( m + 1 ) \,.
\end{align}
The step leading from \eqref{e:MTOGauss_compareOuter1} to \eqref{e:MTOGauss_compareOuter2} can be understood with the help of Figure~\ref{GaussianRectangles}. Each term in the second sum in \eqref{e:MTOGauss_compareOuter1} counts the overlap of rectangle $i$ with rectangle $i+1$. By the conditions \eqref{conditions} the signal of each user that interferes above user 0's signal (for user $i$ this is $(n_{0i}-n_{00})^+$ levels) also overlaps with the signal from user $m$, so is counted in this sum. Also, it is not hard to see that each level is counted exactly once fewer times than the number of users interfering at that level, giving rise to the term $\sum_{k=1}^\an (|U_k\cap \S|-1)^+$. 

Evidently, the Gaussian many-to-one IC outer bound lies within $3\log K$ bits per user of the corresponding deterministic channel.

All the ingredients are in place for the main result of the paper.

\begin{theorem}\label{t:smallGap}
The capacity region of the Gaussian many-to-one interference channel lies within $(2K+5)\log K$ bits per user of the region given in Lemma~\ref{sumRateLemma}.
\end{theorem}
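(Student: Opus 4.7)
The plan is to route the comparison between the Gaussian achievable region and the outer bound of Lemma~\ref{sumRateLemma} through the deterministic capacity region established in Theorem~\ref{t:DeterministicCapacity}, using the correspondence $n_{ii}=\lfloor\log\snr_i\rfloor$, $n_{0i}=\lfloor\log\inr_i\rfloor$ together with the $T$-scaling trick from the achievability-section remark to handle non-integer gains with negligible extra slack $\eps$. The two preceding subsections have already compared each side of the Gaussian picture to an appropriate deterministic region; what remains is to chain those one-sided estimates.

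First, the inner-bound calculation culminating in \eqref{e:AcheivableGaussDetComparison} shows that the Gaussian achievable region contains $\CalC_{\text{free}}+\sum_{j=1}^{n_{00}}\CalC_j$ shifted inward by at most $(2K+1)\log K$ bits per user. By the definition \eqref{e:AchievableRegionParallel} this sum is $\underC$, and by Theorem~\ref{t:DeterministicCapacity} one has $\underC=\overC=\CalC_D$, so the Gaussian capacity contains $\CalC_D-(2K+1)\log K\cdot(1,\dots,1)$ intersected with the nonnegative orthant. Second, the outer-bound comparison \eqref{e:MTOGauss_compareOuter1}--\eqref{e:MTOGauss_compareOuter2} shows that each sum-rate constraint \eqref{simpleSumRate} exceeds the corresponding constraint defining $\overC$ by at most $(m+2)\log(m+1)\leq(K+2)\log(K+1)$. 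Distributing this surplus across the $m+1\leq K+1$ users appearing in the constraint yields a per-user shift of at most $3\log K$, so that the region of Lemma~\ref{sumRateLemma} is contained in $\CalC_D+3\log K\cdot(1,\dots,1)$.

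Chaining the two inclusions yields a gap of at most $(2K+1)\log K+3\log K=(2K+4)\log K$ bits per user between the Gaussian capacity and the region in Lemma~\ref{sumRateLemma}; the extra $\log K$ of slack that brings the total to the claimed $(2K+5)\log K$ is absorbed by the $1$-bit loss when user $0$ idles on level $0$ and the $-1$ terms appearing in each lattice rate \eqref{e:LatticeRate}. The main conceptual obstacles have already been dispatched in the preceding subsections: the lattice alignment scheme, which ensures that the aggregate interference at receiver $0$ is no more costly than that of a single interferer, and the choice of Gaussian side information whose variance matches the levels relevant to each outer-bound constraint. The residual difficulty at this final stage is purely bookkeeping---verifying that the $(m+2)\log(m+1)$ surplus in each of the exponentially many sum-rate constraints really does distribute evenly across users and never compounds into a worse per-user gap---and this follows immediately because every non-trivial constraint already involves at least $m+1$ users and contains only a bounded number of elementary $\log K$ losses.
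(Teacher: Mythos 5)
Your proposal is correct and follows essentially the same route as the paper: the published proof is precisely this chaining argument, invoking the inner-bound comparison \eqref{e:AcheivableGaussDetComparison}, the outer-bound comparison \eqref{e:MTOGauss_compareOuter1}--\eqref{e:MTOGauss_compareOuter2}, and Theorem~\ref{t:DeterministicCapacity} to avoid a direct Gaussian counterpart of Lemma~\ref{deterministicAchievable}. Your bookkeeping of the per-user constants (and the slack absorbing the stray $\log K$) is consistent with, and if anything slightly more explicit than, the paper's own accounting.
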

\begin{proof}
  Directly comparing the outer bound with the achievable region would require proving a counterpart to Lemma~\ref{deterministicAchievable}. Fortunately, the outer bound and the achievable region have each already been compared to the capacity region of a corresponding deterministic channel, expressed in two different ways. This upper bounds the gap between the achievable region and outer bound, proving the theorem.
\end{proof}

The notion of the generalized degrees-of-freedom region, defined in \cite{OneBit}, gives insight towards the behavior at high $\snr$ and $\inr$. 
The generalized degrees-of-freedom region for the many-to-one IC is found by putting $\snr_i=s^{\alpha_i}$ for $0\leq i\leq K$ and $\inr_i=s^{\beta_i}$ and taking the limit $s\to\infty$. The constants $\alpha_i$ and $\beta_i$ are proportional to $\snr_i$ and $\inr_i$ in the dB scale. Let $C(s,\overrightarrow{\alpha},\overrightarrow{\beta})$ be the capacity region of a many-to-one IC with $\{\snr_i\},\{\inr_i\}$ thus defined. The resulting degrees-of-freedom region is $$D(\overrightarrow{\alpha},\overrightarrow{\beta})=\lim_{s\to\infty} \frac{C(s,\overrightarrow{\alpha},\overrightarrow{\beta})}{\log s}\, .$$  
To evaluate this limit, note that Theorem~\ref{t:smallGap} allows to directly calculate the degrees-of-freedom from the outer bound of Lemma~\ref{sumRateLemma}:
\begin{corollary}\label{ManyToOneDF}
The generalized degrees-of-freedom region of the Gaussian many-to-one channel is given by the set of points $(d_0,d_1,\dots,d_K)$ satisfying each of the individual constraints
$$d_i\leq \alpha_i,\quad 0\leq i\leq K\, ,$$
and for each $\S\subseteq \{1,\ldots,K\}$ with the property that a relabeling of the indices of $\S$ allows $ \S =\{1,\ldots,m\} $ (where $m=|\S|$) such that
\begin{equation}\label{conditions2}
\alpha_0>0,\quad \beta_m-\alpha_m \leq \alpha_0, 
\quad
\beta_i>0,\quad 1\leq i\leq m.
\end{equation}
and $\beta_i-\alpha_i\leq \beta_{i+1}-\alpha_{i+1}$ for $1\leq i\leq m-1$, the following sum-rate constraint holds:
\begin{align}
d_0+ d_1 +\dots +d_m \nonumber
& \leq
   \sum_{i=1}^m
   (\alpha_i-\beta_i)^+
   +\sum_{i=1}^{m-1} \left(\beta_i-\left(\beta_{i+1}-\alpha_{i+1}\right)^+\right)^+ \nonumber
   +\max(\beta_m,\alpha_0) .\nonumber
\end{align}

\end{corollary}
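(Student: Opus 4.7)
\textbf{Proof plan for Corollary~\ref{ManyToOneDF}.} The strategy is to invoke Theorem~\ref{t:smallGap} together with the explicit form of the outer bound in Lemma~\ref{sumRateLemma}, and then take the limit $s\to\infty$ of the scaled region. Since Theorem~\ref{t:smallGap} shows that $C(s,\vec\alpha,\vec\beta)$ lies within $(2K+5)\log K$ bits per user of the region of Lemma~\ref{sumRateLemma}, the inner and outer bounds coincide after dividing by $\log s$ and taking $s\to\infty$: the constant additive gap $(2K+5)\log K$ does not depend on $s$ and thus contributes $0$ in the limit. Hence it suffices to compute the limit of the Lemma~\ref{sumRateLemma} region, normalized by $\log s$.

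First I would handle the individual constraints: $r_i\leq \log(1+\snr_i)=\log(1+s^{\alpha_i})$, which yields $d_i\leq \alpha_i$ upon dividing by $\log s$ and letting $s\to\infty$ (treating $\alpha_i=0$ as the trivial case $d_i=0$). For the sum-rate constraints, substitute $\snr_i=s^{\alpha_i}$ and $\inr_i=s^{\beta_i}$ into \eqref{simpleSumRate}. Each term transforms as
\[
\log(\snr_i/\inr_i)^+=(\alpha_i-\beta_i)^+\log s,\quad \log \inr_i=\beta_i\log s,\quad \log(\inr_{i+1}/\snr_{i+1})^+=(\beta_{i+1}-\alpha_{i+1})^+\log s,
\]
and $\max(\log\inr_m,\log\snr_0)=\max(\beta_m,\alpha_0)\log s$. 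The finite additive term $(m+2)\log(m+1)$ is independent of $s$. Dividing through by $\log s$ and passing to the limit produces the desired sum-rate bound
\[
d_0+d_1+\cdots+d_m\leq \sum_{i=1}^m(\alpha_i-\beta_i)^++\sum_{i=1}^{m-1}\bigl(\beta_i-(\beta_{i+1}-\alpha_{i+1})^+\bigr)^++\max(\beta_m,\alpha_0).
\]

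Finally, the hypotheses \eqref{conditions} translate to \eqref{conditions2} term-by-term under the same substitution: $\snr_0>1$ becomes $\alpha_0>0$, $\inr_m/\snr_m\leq \snr_0$ becomes $\beta_m-\alpha_m\leq\alpha_0$, $\inr_i>1$ becomes $\beta_i>0$, and $\inr_i/\snr_i\leq \inr_{i+1}/\snr_{i+1}$ becomes $\beta_i-\alpha_i\leq\beta_{i+1}-\alpha_{i+1}$; the ordering $\inr_i<\inr_{i+1}$ is absorbed into the (non-strict) relabeling in the statement. There is no substantive obstacle here; the only care needed is that the $(\cdot)^+$ operations commute with the limit, which follows because $\log(1+s^x)/\log s\to x^+$ as $s\to\infty$. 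Together with Theorem~\ref{t:smallGap}, this establishes both the converse and achievability of the claimed degrees-of-freedom region.
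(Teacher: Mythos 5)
Your proposal is correct and follows exactly the route the paper takes: Corollary~\ref{ManyToOneDF} is obtained by combining Theorem~\ref{t:smallGap} with the outer bound of Lemma~\ref{sumRateLemma}, substituting $\snr_i=s^{\alpha_i}$, $\inr_i=s^{\beta_i}$, dividing by $\log s$, and letting $s\to\infty$ so that the constant gap $(2K+5)\log K$ and the term $(m+2)\log(m+1)$ vanish. Your term-by-term translation of the constraints and of conditions \eqref{conditions} into \eqref{conditions2} fills in the computation the paper leaves implicit.
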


\begin{remark}
  This is exactly the scaled capacity region of a particular deterministic channel, assuming $\{\alpha_i\},\{\beta_i\}$ are rational numbers. The first sum accounts for the part of each signal that is received below the noise level at user 0. The second sum corresponds to the number of users minus one on levels with multiple interferers, and the final term is the rate that is achieved with each level used exactly once up to $\beta_m$ or $\alpha_0$, whichever is larger. The constraint may be compared to \eqref{sumRateConstraint}, recalling the conditions \eqref{conditions2}.  
\end{remark}

This concludes the treatment of the many-to-one IC. The second half of the paper tackles the one-to-many IC.

\section{Deterministic One-to-Many Interference Channel}\label{s:OneToManyDet}
Consider the channel obtained by reversing the roles of the transmitters and receivers in the deterministic many-to-one IC of Section~\ref{s:DeterministicManytoOneCapacity}. More precisely, if the original channel has gains $\nt_{ii}, 0\leq i\leq K$ and $\nt_{0i}, 1\leq i \leq K$, let the reversed channel have gains $ n_{ii}=\nt_{ii}, 0\leq i\leq K$ and $ n_{i0}=\nt_{0i}, 1\leq i \leq K$ (see Figure~\ref{ReverseZChannelDet}). 

Recall the simple capacity achieving scheme for the deterministic many-to-one IC: each level as observed at receiver 0 is allocated entirely to user 0 or to all users causing interference on the level. The corresponding achievable scheme for the deterministic one-to-many IC allocates each level as observed at \emph{transmitter} 0 either to user 0 or to all other users \emph{experiencing} interference from this level. A little thought reveals that the two achievable regions are the same, and one suspects that the capacity regions are the same as well. This is confirmed by the following theorem. Thus, the many-to-one and one-to-many channels are reciprocal (see \cite{TseViswanath:01} for a discussion of reciprocal channels). 
\begin{figure}
\begin{centering}
\psset{unit=.7mm,linewidth=.3pt,arrowlength=1.2,arrowinset=0,labelsep=2pt}
\begin{center}
\begin{pspicture}(-1,-6)(90,70)
\rput(56,0){
\rput(0,-12){

\pscircle(0,0){1.5}
\pscircle(0,4){1.5}
\pscircle(0,8){1.5}

\psline{->}(1.5,8)(28.5,8)
\psline{->}(1.5,4)(28.5,4)
\psline{->}(1.5,0)(28.5,0)
\pspolygon(-6,-4)(-6,12)(-2,12)(-2,-4)

\uput[l](-6,4){\small${\text{Tx}_3}$}
\uput[r](36,8){\small${\text{Rx}_3}$}}

\rput(0,16){
\pscircle(0,0){1.5}
\psline{->}(1.5,0)(28.5,0)
\pspolygon(-6,4)(-6,-4)(-2,-4)(-2,4)
}

\psline{->}(1.5,68)(28.5,20)
\psline{->}(1.5,64)(28.5,16)

\rput(0,24){
\pscircle(0,8){1.5}
\pscircle(0,12){1.5}
\pscircle(0,16){1.5}
\psline{->}(1.5,16)(28.5,16)
\psline{->}(1.5,8)(28.5,8)
\psline{->}(1.5,12)(28.5,12)
\pspolygon(-6,4)(-6,20)(-2,20)(-2,4)}

\uput[l](-6,34){\small$\text{Tx}_1$}
\uput[r](36,34){\small$\text{Rx}_1$}

\psline{->}(1.5,68)(28.5,4)
\psline{->}(1.5,64)(28.5,0)
\psline{->}(1.5,60)(28.5,-4)
\psline{->}(1.5,56)(28.5,-8)
\psline{->}(1.5,52)(28.5,-12)

\psline{->}(1.5,60)(28.5,32)
\psline{->}(1.5,64)(28.5,36)
\psline{->}(1.5,68)(28.5,40)

\rput(0,52){\pscircle(0,0){1.5}
\pscircle(0,4){1.5}
\pscircle(0,8){1.5}
\pscircle(0,12){1.5}
\pscircle(0,16){1.5}
\psline{->}(1.5,4)(28.5,4)
\psline{->}(1.5,16)(28.5,16)
\psline{->}(1.5,8)(28.5,8)
\psline{->}(1.5,12)(28.5,12)
\psline{->}(1.5,0)(28.5,0)
\pspolygon(-6,-4)(-6,20)(-2,20)(-2,-4)
}

\uput[l](-6,60){\small$ \text{Tx}_0$}
\uput[r](36,60){\small$ \text{Rx}_0$}

\rput(30,-12){
\pscircle(0,0){1.5}
\pscircle(0,4){1.5}
\pscircle(0,8){1.5}
\pscircle(0,12){1.5}
\pscircle(0,16){1.5}
\pspolygon(6,-4)(6,20)(2,20)(2,-4)}

\rput(30,16){
\pscircle(0,0){1.5}
\pscircle(0,4){1.5}
\pspolygon(6,8)(6,-4)(2,-4)(2,8)
}
\uput[l](-6,16){\small$ \text{Tx}_2$}
\uput[r](36,18){\small$ \text{Rx}_2$}

\rput(30,24){
\pscircle(0,8){1.5}
\pscircle(0,12){1.5}
\pscircle(0,16){1.5}
\pspolygon(6,4)(6,20)(2,20)(2,4)}

\rput(30,52){\pscircle(0,0){1.5}
\pscircle(0,4){1.5}
\pscircle(0,8){1.5}
\pscircle(0,12){1.5}
\pscircle(0,16){1.5}
\pspolygon(6,-4)(6,20)(2,20)(2,-4)
}}


\rput(-16,-6){
\psline[arrowlength=1,linewidth=.4]{->}(5,0)(26,0)
\psline[arrowlength=1,linewidth=.4]{->}(5,23)(26,23)
\psline[arrowlength=1,linewidth=.4]{->}(5,46)(26,46)
\psline[arrowlength=1,linewidth=.4]{->}(5,69)(34,69)


\psline[arrowlength=1,linewidth=.4]{->}(30,69)(34,69)

\psline[arrowlength=1,linewidth=.4]{->}(4,69)(27.2,2)
\psline[arrowlength=1,linewidth=.4]{->}(4,69)(27.2,25)
\psline[arrowlength=1,linewidth=.4]{->}(4,69)(27.2,48)

\psline[arrowlength=1,linewidth=.4]{->}(30,46)(34,46)
\psline[arrowlength=1,linewidth=.4]{->}(30,23)(34,23)
\psline[arrowlength=1,linewidth=.4]{->}(30,0)(34,0)

\psline{-}(27,46)(29,46)
\psline{-}(28,45)(28,47)
\pscircle(28,46){2}
\psline{-}(27,23)(29,23)
\psline{-}(28,22)(28,24)
\pscircle(28,23){2}
\psline{-}(27,0)(29,0)
\psline{-}(28,-1)(28,1)
\pscircle(28,0){2}

\pscircle(35,69){1}
\pscircle(4,0){1}
\pscircle(35,46){1}
\pscircle(4,46){1}
\pscircle(4,23){1}
\pscircle[fillstyle=solid,fillcolor=white](4,69){1}
\pscircle(35,23){1}
\pscircle(35,0){1}

\uput{2pt}[l](24,10){\small 5}
\uput{2pt}[u](25,31){\small 2}
\uput{2pt}[u](22,54){\small 3}
\uput{2pt}[u](11,69){\small 5}
\uput{2pt}[u](11,23){\small 1}
\uput{2pt}[u](8,46){\small 3}
\uput{2pt}[u](11,0){\small 3}
}
\end{pspicture}
\end{center} 
\caption{The one-to-many interference channel in this figure is obtained by reversing the roles of transmitters and receivers in the many-to-one channel in Figure~\ref{DeterministicZChannelFigure}.}\label{ReverseZChannelDet}
\end{centering}
\end{figure}
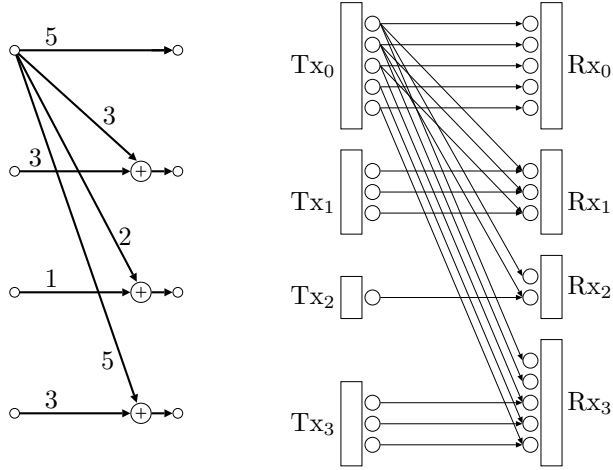

\begin{theorem}\label{ReversedDeterministicCapacity}
The capacity region of a deterministic one-to-many IC with channel gains $ n_{ii}, 0\leq i\leq K$ and $ n_{0i}, 1\leq i \leq K$, is equal to the capacity region of a deterministic many-to-one IC (as given in Lemma~\ref{l:outerBound}) with gains $ \tilde n_{ii}= n_{ii}, 0\leq i\leq K$ and $\tilde n_{0i}= n_{i0}, 1\leq i \leq K$.
\end{theorem}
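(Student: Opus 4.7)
The plan is to prove both the achievability and a matching outer bound directly for the one-to-many deterministic IC so that both coincide with $\mathcal{C}_D$ as characterized in Lemma~\ref{l:outerBound} (with $\tilde n_{ii}=n_{ii}$ and $\tilde n_{0i}=n_{i0}$).

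\textbf{Setting up the dual interference pattern.} First I would establish the key algebraic duality. For the one-to-many channel, the signal $x_0$ can be viewed at $n_{00}$ transmit bit-positions $p=1,\ldots,n_{00}$, and bit-position $p$ of $x_0$ appears at receive-level $n_{i0}-p+1$ at receiver $i$, interfering with $x_i$ exactly when $1\leq n_{i0}-p+1\leq n_{ii}$. Define $V_p=\{i\neq 0:n_{i0}-n_{ii}<p\leq n_{i0}\}$. Plugging in the reversal identity gives $V_p=\tilde U_p$, where $\tilde U_p$ is the interference set for the reversed many-to-one. This identification is the linchpin: the combinatorial interference structure is literally the same on both sides of the reversal.

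\textbf{Achievability.} I would mimic the level-by-level scheme of Section~\ref{s:DeterministicManytoOneCapacity}, but allocating the $n_{00}$ bit-positions of $x_0$ at the \emph{transmitter} side. For each $p$, the channel supports either (i) transmitter $0$ using bit $p$ (contributing $1$ to $r_0$ while jamming level $n_{i0}-p+1$ at every $i\in V_p$, so those users skip that level), or (ii) setting $x_0(p)=0$, in which case every $i\in V_p$ freely uses that level (contributing $1$ to each such $r_i$). Independently, each user $i\neq 0$ can also transmit without restriction on the $f_{\text{free}}(i)$ levels of $x_i$ lying either above $n_{i0}$ or below $n_{i0}-n_{00}$ as received at receiver $i$. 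Since the per-level single-interferer regions $\mathcal{C}_k$ and the free region $\mathcal{C}_{\text{free}}$ depend only on the interference pattern $V_p=\tilde U_p$, the achievable region for the one-to-many equals $\underC$ of equation~\eqref{e:AchievableRegionParallel} for the reversed many-to-one.

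\textbf{Outer bound.} Here I would mirror the proof of Lemma~\ref{l:outerBound}, dualized via genie side information to the receivers in $S\cup\{0\}$. Fix $S\subseteq\{1,\ldots,K\}$; for each transmit-level $p=1,\ldots,n_{00}$, choose any $Q_p\subseteq V_p\cap S$ with $|Q_p|=(|V_p\cap S|-1)^+$ and hand the bit $x_0(p)$ as side information to receiver $0$ (and the signals $\{\hat x_0,\tilde x_0\}$ corresponding to the parts of $x_0$ falling above $n_{i0}$ or below the noise level of receiver $i$ to the receiver $i\in S$). Combining this with Fano's inequality and the chain rule, the entropy at receiver $0$ contributes the $\max(\beta_m,\alpha_0)$-style term ($n_{00}$), each receiver $i\in S$ contributes at most $f_{\text{free}}(i)$ plus the bits it is told, and by construction each transmit-level $p$ is counted exactly $\max(|V_p\cap S|,1)=\tilde f_p(S)$ times across the sum. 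This yields
\[
r_0+\sum_{i\in S} r_i\;\le\; f_{\text{free}}(S)+\sum_{p=1}^{n_{00}}\tilde f_p(S),
\]
which is exactly the constraint \eqref{sumRateConstraint} for the reversed many-to-one.

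\textbf{Main obstacle.} The achievability is essentially automatic once the duality $V_p=\tilde U_p$ is observed. The delicate step is the outer bound: one must pick the genie so that (a) the side information at each receiver in $S$ actually enables the corresponding $(n_{ii}-n_{i0})^+$ and $(n_{i0}-n_{00})^+$ free-level entropy bounds, while (b) the aggregate overcount at each transmit-level $p$ is only $\max(|V_p\cap S|,1)$, not $|V_p\cap S|+1$. The appendix calculation for Lemma~\ref{l:outerBound} transposes to this setting almost verbatim, using the identification $V_p=\tilde U_p$; modulo that bookkeeping the proof is complete, and the two regions coincide.
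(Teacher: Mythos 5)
Your setup and your achievability argument are essentially the paper's: the identification $V_p=\widetilde U_p$ is exactly the reindexing the paper records as $\widetilde U_k=U_{1+n_{00}-k}$ (equation~\eqref{e:setsU_k}), and the level-by-level allocation of $x_0$'s bit positions reproduces $\underC=\CalC_{\text{free}}+\sum_k \CalC_k$, after which Theorem~\ref{t:DeterministicCapacity} is what certifies that this inner region meets the outer bound. Those parts are fine.

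The genuine gap is in the outer bound. You transpose the many-to-one genie literally --- choose $Q_p\subseteq V_p\cap S$ with $|Q_p|=(|V_p\cap S|-1)^+$ and ``hand the bit $x_0(p)$ as side information to receiver $0$'' --- but this construction has no content here: receiver $0$ experiences no interference in the one-to-many channel, so $I(x_0^N;y_0^N)=H(x_0^N)$ already and a genie at receiver $0$ is vacuous (indeed $Q_p$ is never used in your argument). The overcounting you correctly flag in your ``main obstacle'' paragraph occurs at the receivers in $S$, not at receiver $0$: adding $H(x_0^N)\leq Nn_{00}$ to $\sum_{i\in S}H(y_i^N\mid\cdot)\leq N\sum_{i\in S}\bigl(f_{\text{free}}(i)+|\{p:i\in V_p\}|\bigr)$ counts each transmit level $p$ exactly $1+|V_p\cap S|$ times, and handing bits of $x_0$ to the interfered receivers does not reduce this, because the XOR of $x_i$'s bit with a known bit of $x_0$ still carries one bit per symbol. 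The paper's mechanism is different and needs no genie at all: it writes $I(x_i^N;y_i^N)=I(x_i^N;y_i^N\mid\sigma_i^N)$, where $\sigma_i$ is the part of $x_0$ received \emph{above} the intended signal at receiver $i$ (a deterministic function of $y_i$ that is independent of $x_i$), and the required correction of $-|\{p:V_p\cap S\neq\varnothing\}|$ comes from the negative terms $-H(y_i^N\mid x_i^N,\sigma_i^N)=-H(\{x_{0,k}^N:i\in U_k\})$, which combine with $+H(x_0^N)$ by subadditivity to leave only $H(\{x_{0|k}^N:U_k\cap S=\varnothing\})$. Without exhibiting this cancellation your bound stops at $f_{\text{free}}(S)+n_{00}+\sum_p|V_p\cap S|$, which is a valid but strictly looser constraint than \eqref{e:OTMdet_sumrateConstraint} whenever some level is interfered by a user in $S$, and therefore cannot establish the claimed equality of the two capacity regions.
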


The notation in this section is very similar to that used for the many-to-one deterministic interference channel of Section~\ref{s:DeterministicManytoOneCapacity}. Assume without loss of generality that $x_0 $ is restricted to the elements that appear in the output $y_0 $, i.e. $x_0\in \F_2^{n_{00}}$. Denote by $U_k \subseteq \{1,\ldots,K\}$, $1\leq k\leq n_{00}$, the set of users
potentially experiencing interference from the $k$th level at transmitter 0: $U_k=\{i:1\leq i\leq K,  n_{00}- n_{i0}< k\leq n_{ii}-n_{i0}+n_{00}\} $.
For a user $i,1\leq i\leq K$, and a level $k,1\leq k\leq n_{00}$, denote by
$x_{i|k}$ the signal of user $i$, restricted to the level that overlaps with level $k$ of user 0's signal. Finally, let  
$\tilde x_i$ be the restriction of the input from transmitter $i$ to the lowest $(n_{i0}-n_{00})^+$ levels. This is the part of $x_i $ that appears below the interference from user 0. Similarly, let $\hat x_i $ be the restriction of the input from transmitter $i $ to the highest $ (n_{ii} -n_{i0 })^+ $ levels. This is the part of $x_i $ that lies above the interference from user 0. 

Let us quickly relate the sets $U_k$ for the one-to-many channel to the analogous sets in the many-to-one channel. As in Theorem~\ref{ReversedDeterministicCapacity}, consider a many-to-one channel with gains $\tilde n_{ii}=n_{ii},0\leq i\leq K$ and $\tilde n_{0i}=n_{i0},1\leq i\leq K$. Denote the set of users experiencing interference from the $k$th level of user 0 by $\widetilde U_k=\{i:1\leq i\leq K, \tilde n_{0i}-\tilde n_{ii}<k\leq \tilde n_{0i}\}$ (see Section~\ref{s:DeterministicManytoOneCapacity}). 
It holds that  
\begin{align*}
 &\tilde n_{0i}-\tilde n_{ii}<k\leq \tilde n_{0i}
 \\ &\Leftrightarrow n_{i0}- n_{ii}<k\leq n_{i0}
  \\&\Leftrightarrow n_{i0}- n_{ii}-n_{00}<k-n_{00}\leq n_{i0}-n_{00}
  \\&\Leftrightarrow n_{00}- n_{i0}< 1+n_{00}- k \leq n_{ii}-n_{i0}+n_{00}\,,
\end{align*} 
whence $\widetilde U_k=U_{1+n_{00}-k}$.
In particular, for any $\S\subseteq \{1,\dots, K\}$ we have
\begin{equation}\label{e:setsU_k}
  \sum_{k=1}^{n_{00}}|U_k\cap \S|=\sum_{k=1}^{n_{00}}|\widetilde U_k\cap \S|\,.
\end{equation}

Using this last equation we may explicitly write the capacity region of the deterministic many-to-one channel from Theorem~\ref{ReversedDeterministicCapacity} as those rate points satisfying the individual rate constraints
\begin{equation}
   r_i\leq {\al_ {ii}}, \quad 0\leq i\leq K,
\end{equation} and the $2^K-1$ sum-rate constraints, one for each non-empty $\S\subseteq \{1,\ldots,K\}$,
\begin{equation}\begin{split}\label{e:OTMdet_sumrateConstraint}
r_0+\sum_{i\in\S}r_i \leq n_{00}+\sum_{k=1}^\an
   (|U_k\cap \S|-1)^+     +\left(\sum_{i\in \S} (n_{i0}-n_{ii})^+
  +(n_{i0}-n_{00})^+\right)\,.
\end{split}\end{equation}

\subsection{Proof of Outer Bound}
We may (without loss of generality) order the users so that $n_{i0}\leq n_{i+1,0},1\leq i\leq K-1$. As before, the rate across each link cannot exceed the point-to-point capacity, hence
\begin{equation}r_i\leq n_ {ii}, \quad 0\leq i\leq K.\end{equation}

Next, we prove the claimed sum-rate constraint on a set of users $\S\cup \{0\}$, where $\S\subseteq \{1,\ldots,K\}$. Unlike the deterministic many-to-one channel, no side information is required to prove the constraint. For each $1\leq i\leq K$, let $\sig_i=\{x_{0|k}:(n_{00}-n_{i0}+n_{ii})^+<k\leq n_{00}\}$ be the part of signal 0 that appears above the intended signal at receiver $i$. Note that by the definition $\sig_i$ is determined by $y_i$, and also $\sig_i$ is independent of $x_i$, hence
\begin{align*}
  I(x_i^N;y_i^N)&=H(x_i^N)-H(x_i^N|y_i^N)
  \\&=H(x_i^N|\sig_i^N)-H(x_i^N|\sig_i^N,y_i^N)
  \\&=I(x_i^N;y_i^N|\sig_i^N)\,.
\end{align*}
Now, Fano's inequality and the data processing inequality give
\begin{align*}
 N(r_0 +\sum_{i\in \S} r_i-\eps_N)\nonumber
&\leq
    I(x_0^N; y_0^N)+\sum_{i\in \S} I(y_i^N;x_i^N)
\\&=
I(x_0^N; y_0^N)+\sum_{i\in \S} I(y_i^N;x_i^N|\sig_i^N)
\\&=
    H(x_0^N)+ \sum_{i\in \S} \big(H(y_i^N|\sig_i^N)-H(y_i^N|x_i^N,\sig_i^N)\big)\, .
\end{align*}
Breaking the signals apart by level, using the independence bound on entropy, the chain rule for entropy, and removing conditioning, we may rewrite the above as
\begin{align*}
\\&=
    \sum_{i\in \S} \bigg(H\left(\tilde x_i^N,\{x^N_{i|k}+x^N_{0,k}:k \text{ s.t. }i\in U_k\},\hat x^N_i\right)
    -H\left( \{x^N_{0,k}:k \text{ s.t. }i\in U_k\} \right)\bigg) +H\left( \{x^N_{0|k}\}_{k=1}^\an \right)
\\&\leq 
     \sum_{i\in \S} \bigg(H(\tilde x^N_i) + H(\hat x^N_i)+H\left(\{x^N_{i|k}+x^N_{0,k}:k \text{ s.t. }i\in U_k\}\right)\bigg)+
    H\left(\{x_{0|k}^N: k \text{ s.t. } \S\cap U_k=\varnothing\}\right)
\\&\leq
    N\bigg(\sum_{i\in \S} \big((n_{i0}-n_{00})^+ +(n_{ii} -n_{i0 })^+ \big)
+\sum_{k=1}^{n_{00}} \max(|U_k\cap \S|,1)\bigg)\, .
\end{align*}
Taking $N\to \infty$ proves the constraint.
\hfill \qedsymbol

\subsection{Achievability of Outer Bound}
As mentioned before, the achievable scheme is nearly the same as that of the deterministic many-to-one IC, with either user 0 or all other users transmitting on a level. Each level $1\leq k\leq n_{00}$ viewed individually has capacity $C_k$, where $C_k$ is given by \eqref{e:SingleLevelIndividual} and \eqref{e:SingleLevelPairWise}. By transmitting on levels above and below the interference from user 0, the region $C_\text{free}$ is achievable without affecting the remaining levels. Thus, the achievable region $$\CalC_{\text{free}}+\sum_{k=1}^\an \CalC_k$$ is exactly the same as for the deterministic many-to-one channel \eqref{e:AchievableRegionParallel}. Also, the outer bound is the same as for the many-to-one channel, and since they match by Theorem~\ref{t:DeterministicCapacity}, this completes the proof of Theorem~\ref{ReversedDeterministicCapacity}. \hfill\qedsymbol

\subsection{Generalized Han-Kobayashi Scheme}
The achievable scheme of the previous section treats each level separately. In the Gaussian one-to-many IC, however, instead of decomposing the channel into independent sub-channels by level, it will turn out to be more natural to consider a generalized Han-Kobayashi (HK) scheme. Comparing the achievable region of the Han-Kobayashi scheme to the outer bound is most readily performed in the deterministic setting, where the two regions are equal. Therefore, we give a HK scheme for the deterministic channel.
 
Assume without loss of generality that the users are ordered by increasing interference from user 0, i.e. $n_{i0}\geq n_{i-1,0}$ for $2\leq i\leq K$, and that $n_{10}\geq 1$ and $n_{K0}-n_{00}\leq n_{KK}$ (so that all users actually experience interference from user 0).  To simplify the subsequent definitions we put $n'_{00}=0$ and $n'_{i0}=n_{i0}$ for $1\leq i\leq K$. Note that the truncation of signal $0$ at receiver $i$ occurs at level $(n_{00}-n_{i0})^+$, i.e. this is the highest level that is truncated.
With this in mind, the signal from user 0 decomposes naturally according to which users can observe each level:
let the $i$th signal, $1\leq i\leq K+1$, from user 0 be $$X_0(i)=\{x_{0|k}:(n_{00}-n_{i0})^+< k\leq n_{00}-n'_{i-1,0}\}\,,$$
and let $$X_0(K+1)=\{x_{0|k}:1\leq  k\leq (n_{00}-n_{K0})^+ \}\,.$$
The signals $X_0(1),\dots,X_0(i)$ are received by user $i$ above the noise level and are decoded, i.e. signal $X_0(i)$ is common information to users $i,\dots,K$. The signal $X_0(K+1)$ (possibly vacuous in the case $n_{K0}\geq n_{00}$) is received  below the noise level of all users except user 0, and is therefore private information.

Each user $i$, $1\leq i\leq K$, jointly decodes the intended signal $x_i$ together with $X_0(1),\dots,X_0(i)$. Thus, the achievable rate region is given by the intersection of a collection of multiple access channels, one at each receiver. Denote the rate of signal $X_0(i)$ by $R_0(i)$. The MAC constraints at receiver 0 (on the rates $R_0(1),\dots,R_0(K+1)$) are implied by the ``individual" rate constraints 
\begin{equation}
  \begin{split}
    \label{e:OTMdet_usr0rateConstraints}
    R_0(k)&\leq n_{k0}-n_{k-1,0},\quad  2\leq k\leq K\,, \\ R_0(1)&\leq \min(n_{10},n_{00})\,, \\
    R_0(K+1)&\leq (n_{00}-n_{K0})^+\,.
  \end{split}
\end{equation}

Some notation is necessary to cleanly express the constraints at the other receivers. Let $\lambda(i)\in\{1,\dots,i\}$ be such that the signal $X_0(\lambda(i))$ interferes at the top level of $x_i$ at receiver $i$, i.e. 
$$
n'_{\lambda(i),0}-n_{00}+n_{i0}<n_{ii} \leq n_{\lambda(i)-1,0}-n_{00}+n_{i0}\,.
$$
Rearranging, we have
$$n'_{\lambda(i),0}< n_{ii}+n_{00}-n_{i0}\leq n_{\lambda(i)-1,0}$$ if there is such a $\lambda$, and otherwise set $\lambda(i)=0$. 
For example, in Figure~\ref{reciprocalRectanglesFig} we have $\lambda(1)=0$, 
$\lambda(2)= 1$, and $\lambda(3)=3$.

Now, the signals $X_0(1),\dots,X_0(\lambda(i)-1)$ appear above the signal $x_i$, and are therefore observed cleanly. Thus, the MAC constraints at receiver $i$, $1\leq i\leq K$ are implied by the following subset of constraints: the above individual constraints \eqref{e:OTMdet_usr0rateConstraints} on $R_0(1),\dots,R_0(i)$ and the individual constraint 
\begin{equation}\label{e:OTMdet_indconstraint}r_i\leq n_{ii}\,,\end{equation}
 together with the sum-rate constraints
\begin{equation}\label{e:OneToManyDetHK}
  r_i+\sum_{k=\lambda(i)}^{i} R_0(k)\leq n_{i0}-n'_{\lambda(i)-1,0}
\end{equation}
and
\begin{equation}\label{e:OneToManyDetHK2}
  r_i+\sum_{k=\lambda(i)+1}^{i} R_0(k)\leq n_{ii}\,.
\end{equation}

We now check that the achievable region for the Han-Kobayashi scheme contains the achievable region from the previous section obtained by considering each level separately. First, the sum-rate constraints \eqref{e:OneToManyDetHK} and \eqref{e:OneToManyDetHK2} at each user are easily seen to result from adding the pairwise constraints \eqref{e:SingleLevelPairWise} on users $i$ and $0$ on the relevant levels. Similarly, the individual constraints on the rates $R_0(1),\dots,R_0(K+1)$ are implied by adding the individual constraints \eqref{e:SingleLevelIndividual} on the relevant levels. Thus, the constraints defining the Han-Kobayashi achievable region are looser than those defining the capacity-achieving scheme, and hence the Han-Kobayashi scheme achieves capacity. These conclusions are recorded in the following proposition.
\begin{proposition}\label{p:OTMdetHKregion}
  The capacity region of the deterministic one-to-many IC is achieved using a generalized Han-Kobayashi scheme as described above and can be expressed by the constraints \eqref{e:OTMdet_usr0rateConstraints}, \eqref{e:OTMdet_indconstraint}, \eqref{e:OneToManyDetHK}, and \eqref{e:OneToManyDetHK2}.
\end{proposition}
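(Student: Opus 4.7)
The plan is to establish two facts: (i) the Han--Kobayashi scheme described above achieves exactly the region cut out by \eqref{e:OTMdet_usr0rateConstraints}, \eqref{e:OTMdet_indconstraint}, \eqref{e:OneToManyDetHK}, \eqref{e:OneToManyDetHK2}; and (ii) this region contains the level-by-level achievable region $\CalC_{\text{free}}+\sum_{k=1}^{n_{00}}\CalC_k$ from the previous subsection, which by Theorem~\ref{ReversedDeterministicCapacity} already equals the capacity region. Combining (i) and (ii) with the converse of Theorem~\ref{ReversedDeterministicCapacity} then completes the proof.

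For (i), I would analyze the MAC constraints at each receiver. At receiver $0$, the components $X_0(1),\dots,X_0(K+1)$ are, by construction, supported on disjoint level-blocks of $x_0$, so the MAC at receiver $0$ decouples into parallel single-user channels and all its joint sum-rate constraints are implied by the individual bounds $R_0(k)\le n'_{k0}-n'_{k-1,0}$, yielding exactly \eqref{e:OTMdet_usr0rateConstraints}. At receiver $i$, the observed signal contains $x_i$ together with the components $X_0(1),\dots,X_0(i)$ (the remaining components $X_0(i+1),\dots,X_0(K+1)$ live below the noise floor at receiver $i$). By the definition of $\lambda(i)$, the components $X_0(1),\dots,X_0(\lambda(i)-1)$ occupy levels strictly above $x_i$ and can be peeled off cleanly, so the effective MAC at receiver $i$ has inputs $x_i,X_0(\lambda(i)),\dots,X_0(i)$. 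The crucial feature is that $X_0(\lambda(i))$ \emph{partially} overlaps with $x_i$, while $X_0(\lambda(i)+1),\dots,X_0(i)$ lie entirely within the support of $x_i$ at receiver $i$. The overall footprint occupies the levels $n'_{\lambda(i)-1,0}+1,\dots,n_{i0}$ at receiver $i$, giving \eqref{e:OneToManyDetHK}; dropping $X_0(\lambda(i))$ restricts attention to levels within $x_i$ only, giving the bound $n_{ii}$ in \eqref{e:OneToManyDetHK2}. All remaining MAC sum-rate constraints are implied by these two together with the individual bounds \eqref{e:OTMdet_usr0rateConstraints} and \eqref{e:OTMdet_indconstraint}, because strict subsets of the inputs either fit inside the clean region above $x_i$, or inside the support of $x_i$, or inside the block between $n'_{\lambda(i)-1,0}$ and $n_{i0}$.

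For (ii), I follow the paragraph preceding the proposition: each single-level region $\CalC_k$ (for $1\le k\le n_{00}$) is defined by pairwise constraints $r_0+r_i\le 1$ for $i\in U_k$ and the individual constraints $r_i\le 1$. Summing the pairwise constraints $r_0+r_i\le 1$ over the levels $k$ with $(n_{00}-n_{i0})<k\le n_{00}-n'_{\lambda(i)-1,0}$, on which user $0$'s signal and $x_i$ coexist at receiver $i$, yields precisely \eqref{e:OneToManyDetHK}, after recognizing that on these levels $r_0$ contributes exactly $\sum_{k=\lambda(i)}^i R_0(k)$. Summing the pairwise constraints only over the levels where $X_0(\lambda(i)+1),\dots,X_0(i)$ live inside $x_i$ yields \eqref{e:OneToManyDetHK2}. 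Summing the individual constraints on the appropriate level ranges gives \eqref{e:OTMdet_usr0rateConstraints} and \eqref{e:OTMdet_indconstraint}. Hence every rate point in $\sum_k\CalC_k$ (augmented by $\CalC_{\text{free}}$ for the non-interacting top and bottom tails of $x_i$) satisfies the HK constraints. Combined with the converse already proved for the many-to-one channel and the reciprocity established in Theorem~\ref{ReversedDeterministicCapacity}, this shows both that the HK region equals the capacity region and that it admits the simple description claimed.

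The main obstacle I anticipate is bookkeeping in step (i): verifying carefully that among the exponentially many MAC sum-rate constraints at receiver $i$ only \eqref{e:OneToManyDetHK} and \eqref{e:OneToManyDetHK2} are non-redundant. This hinges on the level-aligned structure of the decomposition $X_0(1),\dots,X_0(K+1)$ and the definition of $\lambda(i)$, so the proof is essentially a careful case analysis of which subsets of the MAC inputs occupy which blocks of levels; I would organize it by partitioning subsets $\{x_i\}\cup T$ with $T\subseteq\{X_0(\lambda(i)),\dots,X_0(i)\}$ according to whether $X_0(\lambda(i))\in T$.
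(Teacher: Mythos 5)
Your proposal is correct and follows essentially the same route as the paper: identify, via the level-aligned decomposition of $x_0$ and the definition of $\lambda(i)$, that only the individual constraints and the two sum-rate constraints \eqref{e:OneToManyDetHK}, \eqref{e:OneToManyDetHK2} are non-redundant among the MAC constraints at each receiver, and then observe that these constraints are obtained by summing the single-level constraints \eqref{e:SingleLevelIndividual}, \eqref{e:SingleLevelPairWise} over the relevant levels, so the HK region contains the level-by-level region that is already known to achieve capacity. The only difference is one of emphasis: you propose to verify the redundancy of the remaining MAC sum-rate constraints by an explicit case analysis on subsets $T$, whereas the paper asserts this directly from the level structure.
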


\section{Approximate Capacity Region of the One-to-Many Gaussian Interference Channel}\label{s:OneToManyGaussian}
Define the signal to noise ratios $\snr_i=|h_{ii}|^2 P_i/\n,0\leq i\leq K$ and $\inr_i=|h_{i0}|^2 P_0/\n,1\leq i\leq K$.
We assume the users are ordered by increasing values of $\inr$, i.e. $\inr_{i+1}>\inr_i$ for $1\leq i\leq K-1$. Moreover, we assume as in the many-to-one IC that $\inr_1> 1$: any user with $\inr_i\leq 1$ can simply treat the interference as noise and lose at most 1 bit relative to the point to point AWGN channel. Figure~\ref{ReverseZChannel} depicts the one-to-many Gaussian interference channel.

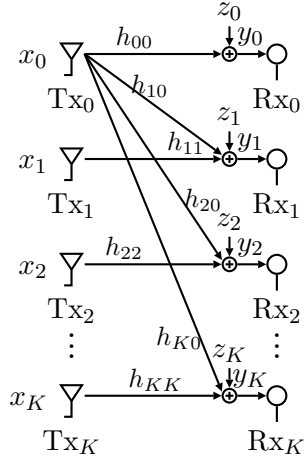
\begin{figure}
\begin{centering}
\psset{unit=.7mm,arrowlength=1,arrowinset=0}
\begin{center}
\begin{pspicture}(-18,-35)(20,42)
    \rput(-16,35){%
            \psline(0,0)(2,3)(-2,3)(0,0)
            \psline(0,0)(0,-3)(-1.4,-3)
            \psline{->}(2.5,1)(28.5,1)
            \pscircle(30,1){1.5}
            \psline{-}(30,.25)(30,1.75) \psline{-}(29.25,1)(30.75,1)
            \psline{->}(31.5,1)(37,1)
            \pscircle(39,1){2}
            \psline{-}(39,-1)(39,-4)
            \uput[d](0,-3){$\text{Tx}_0$}
            \uput[l](-2,0){$x_0$}
            \uput[d](39,-3){$\text{Rx}_0$}
            \uput{2pt}[u](12,1){\small $h_{00}$}
            \uput{3pt}[u](34,1){$y_0$}
            \psline{->}(30,6.25)(30,2.5)
            \uput[u](30,5){$z_0$}
            }

\psline{->}(-13.5,36)(12.6, 16.8)
\psline{->}(-13.5,36)(12.6, -3.2)
\psline{->}(-13.5,36)(12.6, -28.2)

    \rput(-1.2,30.0){\small $h_{10}$}

    \rput(-16,15){%
            \psline(0,0)(2,3)(-2,3)(0,0)
            \psline(0,0)(0,-3)(-1.4,-3)
            \psline{->}(2.5,1)(28.5,1)
            \pscircle(30,1){1.5}
            \psline{-}(30,.25)(30,1.75) \psline{-}(29.25,1)(30.75,1)
            \psline{->}(31.5,1)(37,1)
            \pscircle(39,1){2}
            \psline{-}(39,-1)(39,-4)
            \uput[d](0,-3){$\text{Tx}_1$}
            \uput[l](-2,0){$x_1$}
            \uput[d](39,-3){$\text{Rx}_1$}
            \uput{2pt}[u](21.7,1){\small $h_{11}$}
            \uput{3pt}[u](34,1){$y_1$}
            \psline{->}(30,6.25)(30,2.5)
            \uput[u](30,5){$z_1$}
            }

    \rput(-16,-5){%
            \psline(0,0)(2,3)(-2,3)(0,0)
            \psline(0,0)(0,-3)(-1.4,-3)
            \psline{->}(2.5,1)(28.5,1)
            \pscircle(30,1){1.5}
            \psline{-}(30,.25)(30,1.75) \psline{-}(29.25,1)(30.75,1)
            \psline{->}(31.5,1)(37,1)
            \pscircle(39,1){2}
            \psline{-}(39,-1)(39,-4)
            \uput[d](0,-3){$\text{Tx}_2$}
            \uput[l](-2,0){$x_2$}
            \uput[d](39,-3){$\text{Rx}_2$}
            \uput{2pt}[u](10,1){\small $h_{22}$}
            \uput{3pt}[u](34,1){$y_2$}
            \psline{->}(30,6.25)(30,2.5)
            \uput[u](30,5){$z_2$}
            }

    \rput(9,8.2){\small $h_{20}$}

    \rput(-16,-18){\Large{$\vdots$}}
    \rput(23,-18){\Large{$\vdots$}}

    \rput(-16,-30){%
            \psline(0,0)(2,3)(-2,3)(0,0)
            \psline(0,0)(0,-3)(-1.4,-3)
            \psline{->}(2.5,1)(28.5,1)
            \pscircle(30,1){1.5}
            \psline{-}(30,.25)(30,1.75) \psline{-}(29.25,1)(30.75,1)
            \psline{->}(31.5,1)(37,1)
            \pscircle(39,1){2}
            \psline{-}(39,-1)(39,-4)
            \uput[d](0,-3){$\text{Tx}_K$}
            \uput[l](-2,0){$x_K$}
            \uput[d](39,-3){$\text{Rx}_K$}
            \uput{2pt}[u](16,1){\small $h_{KK}$}
            \uput{3pt}[u](34,1){$y_K$}
            \psline{->}(30,6.25)(30,2.5)
            \uput[u](30,5){$z_K$}
            }

    \rput(4.5,-18){\small $h_{K0}$}

\end{pspicture}
\end{center}
\caption{The one-to-many Gaussian interference channel has one user causing interference to $K$ other users.}\label{ReverseZChannel}
\end{centering}
\end{figure}

\begin{theorem} \label{t:OTMGauss_Capacity}
The capacity region of the one-to-many Gaussian IC with power-to-noise ratios $\snr_i$, $0\leq i\leq K$, and $\inr_i$, $1\leq i\leq K$, has capacity region within $(2K+1,1,\dots,1)$ bits of the region defined by the individual rate constraints \begin{equation}\label{e:OTMGauss_Ind}r_i\leq \log(1+ \snr_i),\quad 0\leq i\leq K\,,\end{equation}
and for each subset of users $\S\subseteq\{1,\dots,K\}$ with $|\S|=m$ relabeled as $\S=\{1,\dots,m\}$ such that $\inr_{i+1}>\inr_i$ for $1\leq i\leq K-1$ and $\inr_1> 1$, the sum-rate constraint
\begin{equation}\begin{split} 
\label{e:OTMGauss_SumRate}
r_0+\sum_{i=1}^m r_i&\leq 
    \log\left(1+\frac{\snr_0}{1+\inr_m}\right)+ \log(1+\snr_1+\inr_1)
\\&\quad + 
    \sum_{i=2}^m \log\left(1+\snr_i+\frac{\inr_i}{1+\inr_{i-1}}\right).
\end{split}\end{equation}
\end{theorem}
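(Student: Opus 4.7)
The plan is to prove the outer bound and the achievability separately, each closely paralleling the corresponding deterministic argument of Section~\ref{s:OneToManyDet}. Both the individual constraints \eqref{e:OTMGauss_Ind} (the point-to-point capacities) and the sum-rate constraints \eqref{e:OTMGauss_SumRate} need to be proved on the converse side, and a generalized Han-Kobayashi scheme with Gaussian codebooks, analogous to Proposition~\ref{p:OTMdetHKregion}, will give the inner bound.

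\textbf{Outer bound.} I would mimic the deterministic converse of Section~\ref{s:OneToManyDet}, replacing truncation by additive Gaussian noise, exactly as was done in going from the many-to-one deterministic converse to Lemma~\ref{sumRateLemma}. Starting from Fano's inequality,
\[
N\Big(r_0 + \sum_{i=1}^m r_i - \epsilon_N\Big) \leq I(x_0^N; y_0^N) + \sum_{i=1}^m I(x_i^N; y_i^N),
\]
I introduce side information $s_i^N = h_{i0} x_0^N + w_i^N$ at each receiver $i \in \S$, where $w_i \sim \CalCN(0, \sigma_i^2)$ is independent of everything else and $\sigma_i^2$ is chosen so that $s_i$ preserves exactly the portion of $h_{i0} x_0$ lying above the intended signal at receiver $i$. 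The natural Gaussian analog of the deterministic choice $\sig_i$ (the part of $x_0$ above the intended signal at receiver $i$) is $\sigma_i^2 \asymp (1+\snr_i) N_0$. Using independence of $s_i$ and $x_i$ gives $I(x_i^N; y_i^N) \leq I(x_i^N; y_i^N \mid s_i^N)$, which I expand as $h(y_i^N \mid s_i^N) - h(h_{i0} x_0^N + z_i^N \mid s_i^N)$, lower-bounding the second term by $N h(z_i)$ and upper-bounding the first by the Gaussian maximum-entropy quantity. The side informations at different receivers chain together so that the terms telescope into the sum $\sum_i \log(1 + \snr_i + \inr_i/(1+\inr_{i-1}))$ appearing in \eqref{e:OTMGauss_SumRate}, while $I(x_0^N; y_0^N)$, after conditioning on $s_m^N$, contributes $\log(1 + \snr_0/(1+\inr_m))$.

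\textbf{Achievability.} I would use the Gaussian version of the generalized Han-Kobayashi scheme of Proposition~\ref{p:OTMdetHKregion}. User $0$ decomposes $x_0$ into $K+1$ independent Gaussian components $X_0(1) + \cdots + X_0(K+1)$, where the power of $X_0(k)$ is set so that, at receiver $k$, its received power bridges exactly the gap from $\inr_{k-1}$ to $\inr_k$ (with $\inr_0 := 0$); the residual $X_0(K+1)$ carries private information at power roughly $N_0/|h_{0,K}|^2$. The ordering $\inr_1 < \inr_2 < \cdots < \inr_K$ ensures that receiver $k$ sees $X_0(1), \ldots, X_0(k)$ above the noise floor plus private noise. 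Each receiver $i$ treats $X_0(i+1), \ldots, X_0(K+1)$ as Gaussian noise and jointly decodes $(x_i, X_0(1), \ldots, X_0(i))$, giving an $(i+1)$-user MAC region. The outer bound constraints on $r_0 + \sum_{i=1}^m r_i$ are implied, via Fourier-Motzkin on these MAC regions, by the pairwise constraints between $x_i$ and each relevant layer of $X_0$, losing only $O(1)$ bits per layer from the $\log(1+x)\approx \log x$ approximation; summing these losses over the $K+1$ layers gives the $2K+1$-bit gap on $r_0$ and the single-bit gap on each $r_i$.

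\textbf{Main obstacle.} The delicate point is the outer bound. The Gaussian side-information variances must be chosen so that the argument mirrors the deterministic proof: each $s_i$ must carve out precisely the ``above the signal'' portion of $x_0$ at receiver $i$, so that the conditional entropies $h(y_i \mid s_i)$ at successive receivers overlap to expose exactly the $\log(1 + \snr_i + \inr_i/(1+\inr_{i-1}))$ terms appearing in \eqref{e:OTMGauss_SumRate}, while the negative terms $h(h_{i0}x_0 + z_i \mid s_i)$ absorb the contributions of $x_0$ at lower power scales. Verifying that the accumulated constants from Gaussian maximum-entropy slack, the $\log(1+x)$ approximation, and the MAC Fourier-Motzkin elimination indeed fit into $(2K+1, 1, \dots, 1)$ is the main bookkeeping burden; the asymmetry of the gap (large on $r_0$, tiny on the other $r_i$'s) reflects that the $K$ layers of $x_0$ each contribute $O(1)$ to $r_0$'s gap, while each $r_i$ only experiences a single-interferer loss.
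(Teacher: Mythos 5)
Your achievability sketch is essentially the paper's: a generalized Han--Kobayashi superposition of $K+1$ Gaussian codebooks for user $0$, layered at the power break-points $P_0/\inr_k$, with receiver $i$ decoding $X_0(1),\dots,X_0(i)$ and treating the rest as noise; the paper organizes the per-layer $O(1)$ losses into the $(2K+1,1,\dots,1)$ gap by passing through the deterministic HK region of Proposition~\ref{p:OTMdetHKregion}, but your direct MAC/Fourier--Motzkin route is the same idea and would work with careful bookkeeping.

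The outer bound, however, has a genuine gap: the genie is wrong. You give receiver $i$ the side information $s_i=h_{i0}x_0+w_i$ with $\sigma_i^2\asymp(1+\snr_i)N_0$, i.e.\ the part of $x_0$ above receiver $i$'s \emph{own} signal --- a literal translation of the deterministic $\sig_i$. But in the deterministic one-to-many converse $\sig_i$ is computable from $y_i$ (no genie is needed there), and its level-by-level accounting does not survive the translation. With your choice, $\Var(h_{i0}x_0\mid s_i)\lesssim(1+\snr_i)N_0$, so $h(y_i^N\mid s_i^N)-Nh(z_i)$ yields only $N\log(1+\snr_i)+O(N)$; the term $\inr_i/(1+\inr_{i-1})$ in \eqref{e:OTMGauss_SumRate} cannot appear. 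Worse, nothing telescopes: the negative terms $h(h_{i0}x_0^N+z_i^N\mid s_i^N)$ have no counterpart to cancel against, and discarding them by keeping only $h(z_i)$ (as you propose) leaves a bound that exceeds \eqref{e:OTMGauss_SumRate} by an unbounded amount whenever $\inr_i/\inr_{i-1}\gg\snr_i$, so it cannot establish the stated constraint. The correct genie ties each receiver to its \emph{predecessor} in the $\inr$-ordering: $s_i=h_{i-1,0}x_0+z_{i-1}$ for $2\le i\le m$, $s_1=\varnothing$, and $s_0=h_{m0}x_0+z_m$ for receiver $0$. Then $h(y_i^N,s_i^N\mid x_i^N)=h(h_{i0}x_0^N+z_i^N,\,s_i^N)\ge h(s_{i+1}^N)$ makes the chain telescope exactly, and the worst-case conditional-entropy bound gives $\Var(h_{i0}x_0\mid s_i)=N_0\,\inr_i/(1+\inr_{i-1})$, producing precisely the terms of \eqref{e:OTMGauss_SumRate} with no slack on the converse side.
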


\subsection{Outer Bound}

In contrast with the deterministic case, side information is required to prove the sum-rate constraint. Let $\S\subseteq \{1,\dots,K\}$, and by relabeling, assume $\S=\{1,\dots,m\}$ where $m=|\S|$. Furthermore, assume $\inr_m-\snr_0\leq\snr_m$; otherwise receiver $m$ can cleanly decode the interference from user 0 while treating its own signal as noise, and the constraint is redundant. 

We give as side information to receiver 0 the interfering signal as observed at receiver $m$ (the receiver experiencing the greatest interference), and we give as side information to each receiver $i$, $2\leq i\leq m$, the interfering signal $x_0$ as observed at receiver $i-1$:
\begin{equation}
\begin{split}
s_0&=h_{m0}x_0+z_m
\\
    s_1&= \varnothing
\\
    s_i&= h_{i-1,0}x_0 + z_{i-1},\quad 2\leq i\leq m.
\end{split}
\end{equation}
Now, Fano's inequality and the Data Processing Inequality give
\begin{align}
N(r_0+\sum_{i=1}^m r_i-\eps_N) &\leq \nonumber
    I(y_0^N,s_0^N;x_0^N)+\sum_{i=1}^m I(y_i^N,s_i^N;x_0^N)
\\&= \nonumber
    h(y_0^N|s_0^N)+h(s_0^N)-h(z_0,z_m)
    \\&\quad    \label{GaussianReciprocal1}
        + \sum_{i=1}^m \left(h(y_i^N|s_i^N)+h(s_i^N)-h(y_i^N,s_i^N|x_i^N)\right).
\end{align}

The fact that conditioning reduces entropy implies that for $1\leq i\leq m-1$ 
\begin{align*}h(y_i^N,s_i^N|x_i)&=h(h_{i0}x_0^N+z_i^N,s_i^N)\\ &\geq h(h_{i0}x_0^N+z_i^N)= h(s_{i+1}^N),\end{align*}
and 
$$h(y_m^N,s_m^N|x_m^N)\geq h(y_m^N|x_m^N)=h(s_0).$$
Plugging this into equation (\ref{GaussianReciprocal1}), the sum telescopes, producing
\begin{align}
N(r_0+\sum_{i=1}^m r_i-\eps_N) \leq \nonumber
    h(y_0^N|s_0^N)  - h(z_0^N,z_m^N)+h(y_1)+
    \sum_{i=2}^{m} h(y_i^N|s_i^N).
\end{align}
We bound each term using the fact that the Gaussian distribution maximizes entropy for a fixed conditional variance:
\begin{align*}
h(y_0^N|s_0^N)
&=
    h(h_{00}x_0^N+z_0^N|h_{m0}x_0^N+z_m^N)
\\&\leq
     N\log\left(1+\frac{\snr_0}{1+\inr_m}\right)+N \log(\pi e \n),
\end{align*}
and for $2\leq i\leq m$,
\begin{align*}
h(y_i^N|s_i^N)
&=
    h(h_{i0}x_0^N+h_{ii}x_i^N+z_i^N|h_{i-1,0}x_0^N + z_{i-1}^N)
\\&\leq
      N\log\left(1+\snr_i+\frac{\inr_i}{1+\inr_{i-1}}\right)+N \log(\pi e \n).
\end{align*}
Also $$ h(y_1^N )\leq N\log(1+\snr_1+\inr_1)+N\log(\pi e \n).$$
Combining these calculations and taking $N\to \infty$, we have the sum-rate constraint:
\begin{equation}\begin{split}\label{e:oneToManySumRateConstraint}
r_0+\sum_{i=1}^m r_i
&\leq 
    \log\left(1+\frac{\snr_0}{1+\inr_m}\right)+ \log(1+\snr_1+\inr_1)
\\&\quad +
    \sum_{i=2}^m \log\left(1+\snr_i+\frac{\inr_i}{1+\inr_{i-1}}\right).
\end{split}\end{equation}
\hfill\qedsymbol

Before proceeding with the achievable scheme, let us first rewrite this region in a form that will allow to easily compare with the deterministic channel region \eqref{e:OTMdet_usr0rateConstraints}, \eqref{e:OTMdet_indconstraint}, \eqref{e:OneToManyDetHK}, \eqref{e:OneToManyDetHK2}.
Let $n_{ii}=\snr_i$, $0\leq i\leq K$, and $n_{i0}=\inr_i$, $1\leq i\leq K$.
The region given by \eqref{e:OTMGauss_Ind} and \eqref{e:OTMGauss_SumRate} may be enlarged to give the region defined by the constraints
\begin{equation}
  r_i\leq 1+n_{ii},\quad 0\leq i\leq K\,,
\end{equation}
and for each subset of users $\S\subseteq \{1,\dots,K\}$ as above,
\begin{equation}
  \begin{split}
    r_0+\sum_{i=1}^m r_i
    &\leq (m+1)+
     (n_{00}-n_{m0})+\max(n_{11},n_{10})\\&\quad+\sum_{i=2}^m\max(n_{ii},n_{i0}-n_{i-1,0})\,.
  \end{split}
\end{equation}
Summing over levels instead of users yields exactly the sum-rate constraint \eqref{e:OTMdet_sumrateConstraint} of the deterministic channel with an added gap of $m+1$. Viewing the gap as coming entirely from the rate of user 0, the deterministic achievable region is within $K+1$ bits/s/Hz at user 0 of the outer bound.

\begin{remark}The constraint can be interpreted using Figure~\ref{reciprocalRectanglesFig}. In the figure, the received signal power occupancy at each receiver is superimposed in the appropriate position relative to the signal of user 0. The noise floor of user $i$ is $\log \inr_i$ levels from the top of user 0's signal. As in the deterministic channel, on each level either user 0 transmits or all other users transmit. The sum-rate constraint \eqref{e:oneToManySumRateConstraint} counts each level once if user 0 causes interference to one or no users, and equal to the number of users if more than two users are interfered by a level. 
\end{remark}
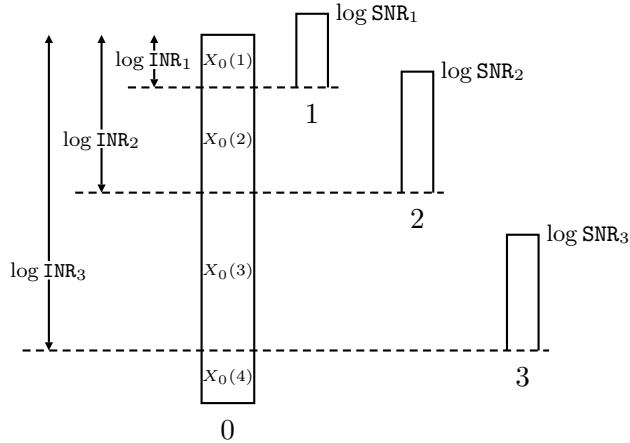
\begin{figure}
\begin{centering}
\psset{unit=.7mm,arrowlength=1,arrowinset=0,labelsep=3pt}
\begin{center}
\begin{pspicture}(-28,-10)(74,70)
\pspolygon(6,-10)(6,60)(16,60)(16,-10) \rput(11,-15){0}
\psline(24,50)(24,64)(30,64)(30,50)\rput(27,45){1}
\psline(44,30)(44,53)(50,53)(50,30)\rput(47,25){2}
\psline(64,0)(64,22)(70,22)(70,0)\rput(67,-5){3}

\psline[linestyle=dashed,dash=3pt 2pt](-8,50)(32,50)
\psline[linestyle=dashed,dash=3pt 2pt](-18,30)(52,30)
\psline[linestyle=dashed,dash=3pt 2pt](-28,0)(72,0)

\rput(11,55){\tiny $X_0(1)$}
\rput(11,40){\tiny $X_0(2)$}
\rput(11,15){\tiny $X_0(3)$}
\rput(11,-5){\tiny $X_0(4)$}

\rput(-23,15){\scriptsize $\log \inr_3$}
\psline{->}(-23,17)(-23,60)
\psline{->}(-23,13.5)(-23,0)

\rput(-13,40){\scriptsize $\log \inr_2$}
\psline{->}(-13,42)(-13,60)
\psline{->}(-13,38.5)(-13,30)

\rput(-3,55){\scriptsize $\log \inr_1$}
\psline{->}(-3,57)(-3,60)
\psline{->}(-3,53.5)(-3,50)

\uput[r](30,64){\footnotesize $\log \snr_1$} 
\uput[r](50,53){\footnotesize $\log \snr_2$} 
\uput[r](70,22){\footnotesize $\log \snr_3$}

\end{pspicture}
\end{center} 
\caption{A superposition of the received signal levels at each user for an example channel. The dashed lines indicate the noise floor for each receiver. User 0 employs a superposition code with codebooks $\{X_0(k)\}_{k=1}^i$ intended for receiver $i$. }
\label{reciprocalRectanglesFig}
\end{centering}
\end{figure}

\subsection{Achievable Region}

As in the many-to-one channel, the achievable strategy emulates the approach used for the deterministic case. In the many-to-one channel lattice codes are used to align the interference at receiver 0; in contrast, since there are only two signals at each receiver in the one-to-many channel, it suffices to adopt a rate-splitting approach using a superposition of random Gaussian codebooks.  The scheme is completely analogous to the Han-Kobayashi scheme for the deterministic channel of the previous section.


We now describe the achievable scheme. In constructing the scheme we temporarily assume that $\inr_K\leq \snr_K$.
Transmitter 0 uses a superposition of independent Gaussian codebooks (see Figure~\ref{reciprocalRectanglesFig}), 
\begin{equation}\label{e:GaussianSuperpositionCode}
X_0=\sum_{k=1}^{K+1} X_0(k).
 \end{equation}
Each codebook corresponds to a level; the power range $[\frac{P_0}{\snr_0},P_0]$ used by transmitter 0 is divided into intervals (levels) according to the interference caused, just as in the deterministic case. More precisely, the values $P_0/\inr_i$ for $1\leq i\leq K$ partition the interval $[\frac{P_0}{\snr_0},P_0]$ into power levels. The power and rate associated with each level is that which would be assigned to user 0 using rate-splitting, with a small reduction in rate to be described later (see, e.g., \cite{RateSplittingMAC} for more detail on rate-splitting). 
Each user $i$, $1\leq i\leq K$, uses a random Gaussian codebook at full power, i.e. received at power $\snr_i$ by receiver $i$. Receiver $i$ first decodes those codebooks from user 0 that are received above the intended signal (while treating all other signals as noise), and then jointly decodes the signal from transmitter $i$ and the remaining signals from user 0 which are received above the noise level (treating interference received below the noise level as noise). 


Recall the assumptions $\snr_0>1$, $\inr_1>1$ and $\inr_{i+1}\geq \inr_{i}$.
The power of each of user 0's codebooks in \eqref{e:GaussianSuperpositionCode} is chosen in such a way that the sum of codebooks $i+1$ through $K+1$, $\sum_{k=i+1}^{K+1}X_0(k)$, is observed by receiver $i$ to be at the noise level (assuming all the codebooks are used). More precisely, letting $q_i$ denote the power transmitted in codebooks $i+1$ through $K+1$, $$q_i=\sum_{k=i+1}^{K+1}|X_0(k)|^2\,.$$ We require \begin{equation*}
  q_{i} |h_{i,0}|^2=N_0\, ,
\end{equation*}
or equivalently, \begin{equation}\label{e:OneToManyPowerLevel}q_{i}=\frac{P_0}{\inr_i},\quad 1\leq i\leq K\, .\end{equation}
The $i$th power interval is given by $[q_{i},q_{i-1}], 1\leq i\leq K+1$, where $q_0=P_0$, $q_{K+1}=\frac{P_0}{\snr_0}$, and $q_{i}=\frac{P_0}{\inr_i}$ for $1\leq i\leq K$. 
The power used by transmitter $0$ on level $i,1\leq i\leq K$ is $$\theta_i=\E[|X_0(i)|^2]=q_{i-1}-q_i\, ,$$ so that user 0 satisfies the power constraint (assuming user 0 transmits on all levels): $$\E[|X_0|^2]=\sum_{i=1}^{K+1} \E[|X_0(i)|^2]=\sum_{i=1}^{K+1} \theta_i=q_0-q_{K+1}\leq P_0\, .$$ 
Receiver 0 decodes the signals sequentially from the highest level (lowest index) downwards, treating the weaker signals as noise and subtracting off the decoded signal at each step. 
Thus, when decoding level $i$ receiver 0 experiences an effective noise variance of at most 
\begin{align*}N_0(i)\leq N_0+ |h_{00}|^2 \sum_{k>i}\theta_k  = |h_{00}|^2 q_{i} = N_0\frac{\snr_0}{\inr_i}\, .\end{align*} 

The rates of user 0's codebooks are chosen to satisfy the inequalities
\begin{equation}\label{e:OneToManyG_user0rates}
  R_0(i)\leq \log\left(1+ \frac{\theta_i|h_{i0}|^2}{3N_0}\right),\quad 1\leq i\leq K+1\,.
\end{equation}
Note that user 0 can decode its own signals since
\begin{align*}
  R_0(i)\leq \log\left(1+ \frac{\theta_i|h_{i0}|^2}{3N_0}\right)= 
  \log\left(1+ \frac{\theta_i|h_{00}|^2}{3N_0\frac{\snr_0}{\inr_i}}\right)
  \leq 
  \log\left(1+ \frac{\theta_i|h_{00}|^2}{N_0(i)}\right)
  \,.
\end{align*}  The quantity $\theta_i|h_{00}|^2/N_0(i)$ is the SINR of the $i$th signal from user 0. 

We now account for decoding at receiver $i$, $1\leq i\leq K$. A natural procedure is for receiver $i$ to jointly decode the $i$ strongest levels from user 0, i.e. $X_0(1),\dots,X_0(i)$, along with its own signal $X_i$. Since Gaussian codebooks are used, which is optimal for the MAC, it follows that the achievable region is determined by the MAC region at each receiver. Instead of this natural scheme, in order to ease the analysis, we describe a slight variation as used for the deterministic channel: receiver $i$ first decodes those interfering signals from user $0$ that appear above the intended signal $x_i$, and only then jointly decodes $x_i$ together with the remaining interfering signals from $X_0(1),\dots,X_0(i)$. Decoding first the interference received above the intended signal $x_i$ corresponds to the fact that in the deterministic channel such interference does not actually interact with the intended signal.

When receiver $i$ is decoding signal $X_0(k)$ for $k\leq i$, assuming the stronger signals $X_0(1),\dots,X_0(k-1)$ have already been decoded and subtracted off, receiver $i$ experiences an effective noise power at most
\begin{equation}\begin{split}\label{e:OTMGauss_EffNoisePower}
N_i(k)=N_0+P_i|h_{ii}|^2+|h_{i0}|^2 \sum_{l>k}\theta_l &= N_0(1+\snr_i)+|h_{i0}|^2 q_{k}
\\ &= N_0\left(1+\snr_i+\frac{\inr_i}{\inr_k}\right)\,.
\end{split}\end{equation}

Similarly to the deterministic case, let $\lambda(i)\in \{1,\dots,i\}$ be such that 
\begin{equation}\label{e:OneToManyG_lambdaj}
  q_{\lambda(i)}|h_{i0}|^2<P_i|h_{ii}|^2\leq q_{\lambda(i)-1}|h_{i0}|^2\,,
\end{equation}or equivalently
\begin{equation}\label{e:OTMG_lambda2}
  \frac{\inr_i}{\inr_{\lambda(i)}}<\snr_i\leq   \frac{\inr_i}{\inr_{\lambda(i)-1}}\,.
\end{equation}
Thus, using \eqref{e:OTMGauss_EffNoisePower} and \eqref{e:OTMG_lambda2}, when receiver $i$ is decoding the signal $X_0(k)$, $k<\lambda(i)$, the effective noise is 
\begin{equation}
  N_i(k) \leq N_0\left(1+2\frac{\inr_i}{\inr_k}\right)\,,
\end{equation}
and hence the effective SNR is at least
\begin{align*}
  \frac{\theta_k|h_{i0}|^2}{N_0\left(1+2\frac{\inr_i}{\inr_k}\right)} = \frac{\theta_k|h_{i0}|^2}{N_0\left(1+2\frac{|h_{i0}|^2}{|h_{k0}|^2}\right)} \geq \frac{\theta_k |h_{k0}|^2}{3N_0}\,,
\end{align*}where the inequality follows from the fact that $|h_{k0}|\leq |h_{i0}|$ for $k\leq i$.
Since this SNR can support the rates of user 0's codebooks given in \eqref{e:OneToManyG_user0rates},
receiver $i$ can decode all the signals $1,\dots,X_0(\lambda(i)-1)$ while treating the signals $x_i$ and $X_0(\lambda(i)),\dots,X_0(K+1)$ as noise. 

It remains to check which rates allow for joint decoding of signals $X_0(\lambda(i)),\dots,X_0(i)$ and $x_i$ by receiver $i$. 
The MAC constraints at receiver $i$ are
\begin{equation}\label{e:OTMGauss_MACconstraints1}
  \sum_{k\in \Lambda}R_0(k)\leq \log\left(1+\frac{\sum_{k\in\Lambda}\theta_k|h_{i0}|^2}{N_0}\right)
\end{equation}
and
\begin{align}
r_i\sum_{k\in \Lambda}R_0(k)\leq \nonumber \log\left(1+\frac{N_0\snr_i+\sum_{k\in\Lambda}\theta_k|h_{i0}|^2}{N_0}\right)\,,\quad
\Lambda\subseteq \{\lambda(i),\dots,i\}\,.\label{e:OTMGauss_MACconstraints2}
\end{align} We may ignore the first set of constraints \eqref{e:OTMGauss_MACconstraints1}: they are readily seen to be satisfied by the choice of rates $R_0(k)$ in \eqref{e:OneToManyG_user0rates}. The second set of constraints \eqref{e:OTMGauss_MACconstraints2} can also be simplified:
it turns out that just as in the deterministic channel, the two constraints for $\Lambda=\{\lambda(i),\dots, i\}$ and $\Lambda=\{\lambda(i)+1,\dots,i\}$ imply the others (up to a small gap). To see this, note that because $\theta_k$ is decreasing in $k$ and by the definition \eqref{e:OneToManyG_lambdaj} of $\lambda(i)$, for $k>\lambda(i)$ it holds that
$$\theta_k|h_{i0}|^2\leq \theta_{\lambda(i)}|h_{i0}|^2 \leq N_0\snr_i\,.$$
Thus for any $\Lambda\subseteq \{\lambda(i),\dots,i\}$ with $\lambda(i)\in \Lambda$,
\begin{align*}
r_i+\sum_{k\in\Lambda} R_0(k)&\leq  
r_i+\sum_{k\in \{\lambda(i),\dots, i\}}R_0(k)\\&\leq 
\log\left(1+\frac{2 q_{\lambda(i)-1}|h_{i0}|^2}{N_0}\right)
\\ &\leq 1+ \log\left(1+\frac{N_0\snr_i+\sum_{k\in\Lambda}\theta_k|h_{i0}|^2}{N_0}\right)\,,
\end{align*}
and similarly for any $\Lambda\subseteq \{\lambda(i),\dots,i\}$ with $\lambda(i)\notin\Lambda$,
\begin{align*}
r_i+\sum_{k\in\Lambda} R_0(k)
&\leq r_i+\sum_{k\in \{\lambda(i)+1,\dots, i\}}R_0(k)\\&\leq
\log\left(1+\frac{2N_0\snr_i}{N_0}\right)
\\ &\leq 1+ \log\left(1+\frac{N_0\snr_i+\sum_{k\in\Lambda}\theta_k|h_{i0}|^2}{N_0}\right)\,.
\end{align*}
Thus, up to a gap of $1$ bit per user (and dropping the one in the logarithms, which only reduces the achievable rate), it is possible to achieve any point in the region determined by the sum-rate constraints
\begin{equation}\label{e:OTMG_achievable1}
  r_i+\sum_{k\in \{\lambda(i),\dots, i\}}R_0(k)\leq 
  \log\left(\frac{q_{\lambda(i)-1}|h_{i0}|^2}{N_0}\right),\quad 1\leq i\leq K\,,
\end{equation}
and
\begin{equation}\label{e:OTMG_achievable2}
  r_i+\sum_{k\in \{\lambda(i)+1,\dots, i\}}R_0(k)\leq 
  \log(\snr_i),\quad 1\leq i\leq K\,,
\end{equation}
together with the individual rate constraints
\begin{equation}\label{e:OTMG_achievable3}
  r_i\leq \log(\snr_i)\,, \quad 1\leq i\leq K\,,
\end{equation}
and
\begin{equation}\label{e:OTMG_achievable4}
R_0(i)\leq \log\left(1+\frac{\theta_i|h_{i0}|^2}{3N_0}\right)\,,\quad 1\leq i\leq K+1\,.
\end{equation}

We now compare the achievable region to the capacity region of the deterministic one-to-many IC. Let $n_{ii}=\snr_i$, $0\leq i\leq K$, and $n_{i0}=\inr_i$, $1\leq i\leq K$. Then the achievable region given by \eqref{e:OTMG_achievable1}, \eqref{e:OTMG_achievable2}, \eqref{e:OTMG_achievable3}, and \eqref{e:OTMG_achievable4} contains the region given by
\begin{equation*}
  r_i+\sum_{k\in \{\lambda(i),\dots, i\}}R_0(k)\leq 
  n_{i0}-n_{\lambda(i)-1,0},\quad 1\leq i\leq K\,,
\end{equation*}
and
\begin{equation*}
  r_i+\sum_{k\in \{\lambda(i)+1,\dots, i\}}R_0(k)\leq 
  n_{ii},\quad 1\leq i\leq K\,,
\end{equation*}
together with the individual rate constraints
\begin{equation*}
  r_i\leq n_{ii}\,, \quad 1\leq i\leq K\,,
\end{equation*}
and
\begin{equation*}
R_0(i)\leq (n_{i0}-n_{i-1,0}-1)^+\,,\quad 1\leq i\leq K+1\,.
\end{equation*}

Comparing this with the deterministic channel region \eqref{e:OTMdet_indconstraint}, \eqref{e:OneToManyDetHK}, and \eqref{e:OneToManyDetHK2}, evidently the regions are the same except that user 0 loses up to 1 bit per signal level, for a total loss of at most $K$ bits. 
Since the outer bound has a gap from the deterministic channel of $K+1$ bits at user $0$, we have determined the capacity region of the one-to-many IC to within a gap of $(2K+1,1,\dots,1)$.
This completes the proof of Theorem~\ref{t:OTMGauss_Capacity}.

\begin{remark}
  Instead of the HK scheme used here, it is possible to use an achievable scheme that creates independent levels, and then to emulate the first scheme presented for the deterministic one-to-many channel. However, such an approach yields a larger gap between the inner and outer bounds.
\end{remark}

As with the many-to-one channel, the generalized degrees of freedom can now be computed.
\begin{theorem}
    Put $\snr_i=s^{\al_i}$ for $0\leq i\leq K$ and $\inr_i=s^{\beta_i}$ for $1\leq i \leq K$.
  The degrees-of-freedom region of the one-to-many Gaussian IC is the set of points satisfying the individual constraints $$d_i\leq \al_i,\quad 0\leq i\leq K,$$ and the sum-rate constraints (for each set of users relabel the users as $\{1,\dots,m\}$)
  \begin{align*}\sum_{i=0}^m d_i  \leq (\al_0-\beta_m)^+ + \max(\al_1,\beta_1)+\sum_{i=2}^m \max(\al_i,\beta_i-\beta_{i-1})\, .\end{align*}
\end{theorem}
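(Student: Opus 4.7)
The plan is to invoke Theorem~\ref{t:OTMGauss_Capacity} directly. That theorem bounds the capacity region to within the constant gap $(2K+1,1,\ldots,1)$, which is independent of the channel strengths. After substituting $\snr_i=s^{\alpha_i}$ and $\inr_i=s^{\beta_i}$, dividing by $\log s$, and letting $s\to\infty$, this constant gap vanishes. It therefore suffices to compute the pointwise limit $\lim_{s\to\infty}\mathcal{R}(s)/\log s$, where $\mathcal{R}(s)$ is the polytope defined by \eqref{e:OTMGauss_Ind} and \eqref{e:OTMGauss_SumRate}. Both the inner and outer bounds converge to the same limit, so no additional achievability argument is needed at the DoF level.

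The next step is to evaluate each log-term using the elementary identity $\log(1+s^{a_1}+\cdots+s^{a_k})/\log s \to \max(a_1,\ldots,a_k,0)$ as $s\to\infty$. The individual constraint immediately yields $d_i\le \alpha_i$. For the sum-rate constraint, each term on the right-hand side of \eqref{e:OTMGauss_SumRate} scales as
\begin{align*}
\log\!\left(1+\tfrac{\snr_0}{1+\inr_m}\right) &\sim (\alpha_0-\beta_m)^{+}\log s, \\
\log(1+\snr_1+\inr_1) &\sim \max(\alpha_1,\beta_1)\log s, \\
\log\!\left(1+\snr_i+\tfrac{\inr_i}{1+\inr_{i-1}}\right) &\sim \max(\alpha_i,\beta_i-\beta_{i-1})\log s
\end{align*}
for $2\le i\le m$, using that $\inr_{i-1}\ge 1$ and $\beta_{i-1}\ge 0$ so that $\inr_i/(1+\inr_{i-1})\sim s^{\beta_i-\beta_{i-1}}$. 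Summing these asymptotics produces exactly the sum-rate DoF bound claimed in the theorem.

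The only remaining bookkeeping concerns the side conditions of Theorem~\ref{t:OTMGauss_Capacity}: for each $\mathcal{S}\subseteq\{1,\ldots,K\}$ I would relabel its elements in order of increasing $\beta$-value. Provided the $\beta_i$'s are strictly ordered and $\beta_1>0$, the relabeled hypotheses $\inr_{i+1}>\inr_i$ and $\inr_1>1$ hold for all sufficiently large $s$, so the constant-gap sum-rate constraint applies. Edge cases (ties among the $\beta_i$, or $\beta_1=0$) are handled either by an arbitrarily small perturbation of the exponents or by observing that the affected constraint is implied by another in the limit, just as in the remark following Corollary~\ref{ManyToOneDF} for the many-to-one channel. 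I do not anticipate any substantive obstacle; the main work here is routine verification that the constant-gap theorem passes cleanly to the DoF limit simultaneously for all subsets $\mathcal{S}$, and the heavy lifting was already done in Theorem~\ref{t:OTMGauss_Capacity}.
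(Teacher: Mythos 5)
Your proposal is correct and is exactly the argument the paper intends: the theorem is stated immediately after the constant-gap result (Theorem~\ref{t:OTMGauss_Capacity}) with no written proof, precisely because the $(2K+1,1,\dots,1)$ gap is independent of $s$ and vanishes upon normalizing by $\log s$, leaving the term-by-term asymptotics of \eqref{e:OTMGauss_Ind} and \eqref{e:OTMGauss_SumRate} that you compute. Your evaluation of each logarithm and your handling of the relabeling and edge cases match what the analogous Corollary~\ref{ManyToOneDF} does for the many-to-one channel.
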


\section{Conclusion} \label{s:conclusion}
In finding the capacity of the many-to-one and one-to-many Gaussian interference channels, two main themes emerge: the power of the deterministic model approach, and the use of lattice codes for interference alignment. Throughout the entire development, the deterministic model serves as a faithful guide to the Gaussian channels. The structure of the outer bound, namely the existence of sum-rate constraints for every subset of users, is most easily observed in the deterministic channel. Moreover, the proofs of the Gaussian outer bounds closely follow those for the deterministic channels, with the side information used to prove outer bounds in the Gaussian case translated directly from the deterministic case. 

The capacity achieving schemes are very simple in the deterministic channels. 
The interference alignment phenomenon emerges in the deterministic many-to-one channel, but in order to translate the scheme to the Gaussian channel, lattices are necessary in order to provide an alignment in signal scale. 
Yet another success of the deterministic model is that the reciprocity between the many-to-one and one-to-many channels is evident in the deterministic setting; this basic relationship between the two channels is veiled in the Gaussian case. 

The approach used here should be contrasted with the direct approach of \cite{BT08}, where the problem of finding the capacity of the 2-user Gaussian IC to within a constant gap was \emph{reduced} to that of finding the capacity of a corresponding deterministic channel. More generally, the limitations and potential of the deterministic approach beg to be studied.

The gap of $(2K+5)\log K$ bits per user between the achievable region and outer bound in the many-to-one Gaussian IC (Theorem~\ref{t:smallGap}) is somewhat loose. One way that the bound could be improved is to account for the combinatorial structure of the interference pattern (Figure~\ref{GaussianRectangles}) in evaluating the achievable strategy. There is a balance between the number of intervals formed by the interfering signals and the loss required by each user due to addition of signals from lower levels. The outer bound can probably also be tightened by more carefully performing the estimate in \eqref{e:manyToOneErrorTerm}.



\appendices


\section{Gaussian Han and Kobayashi achieves sum-rate of at most $\log(1+3\beta^2)$.}\label{s:GaussianHKcalculation}
This section contains a proof of Claim~\ref{l:HKsuboptimal}, showing that a Han-Kobayashi scheme with Gaussian codebooks cannot achieve a sum-rate greater than $\log(1+3\beta^2)$.

At an achievable rate point, receiver 0 is assumed to be able to decode message 0. After decoding, receiver 0 may subtract away signal 0; since users 1 and 2 use a superposition codebook with private and common messages in the Han-Kobayashi scheme \cite{HanKobayashi}, there are four messages which receiver 0 should be able to decode. Let the four (Gaussian) codebooks have rates $r_1^a,r_1^b,r_2^a,r_2^b$ ($r_1=r_1^a+r_1^b$ and $r_2=r_2^a+r_2^b$) and received power to noise ratios $\beta S_1,\beta(1-S_1),\beta S_2,\beta(1-S_2)$, respectively, at the intended receivers. Receivers 1 and 2 are assumed to be able to decode their own signals, so the MAC constraints at receiver $i=1,2$ hold:\begin{equation}\begin{split}\label{e:individualMAC}
  r_i^a & \leq \log(1+ \beta S_i)
  \\
  r_i^b & \leq \log(1+ \beta (1-S_i))
  \\
  r_i^a+r_i^b &\leq \log (1+\beta)\, .
\end{split}
\end{equation}
Now, consider the MAC constraints at receiver 0. It is assumed that $\beta\geq 2$. The received power to noise ratios are each scaled by $\beta$, since the gains are $h_{01}=h_{02}=\beta$ as compared to the gains $h_{11}=h_{22}=\sqrt{\beta}$ on links 1 and 2. Thus, the constraints on $(r_i^a,r_i^b)$, for $i=1,2$ separately, are obviously satisfied at receiver 0. 

To check that the constraint on $r_1^a+r_2^a$ is satisfied, note that by the constraints at receivers 1 and 2 \eqref{e:individualMAC}, \begin{equation}\begin{split}\label{e:privateSumRate}
  r_1^a+r_2^a&\leq \log(1+\beta S_1)+\log(1+\beta S_2)
  \\ &= \log(1+\beta S_1+\beta S_2+\beta^2 S_1 S_2)
  \\ &\leq \log \left(1+\beta^2 \frac{S_1+S_2}{2}\right)\, ,\end{split}
\end{equation} where the last step follows from the inequality $(S_1+S_2)/2\geq S_1 S_2$ for $0\leq S_1,S_2\leq 1$ and $\beta^2 /2 \geq \beta$ for $\beta\geq 2$. 
Defining $S_i'= 1-S_i$, equation \eqref{e:privateSumRate} shows that the constraint on $r_1^b+r_2^b$ is also satisfied at receiver 0. 
Similarly, the constraint on $r_1^a+r_2^b$ (as well as $r_1^b+r_2^a$) is satisfied:
\begin{align*}
  r_1^a+r_2^b&\leq \log(1+\beta S_1)+\log(1+\beta(1-S_2))
  \\ &= \log (1+\beta S_1+\beta(1-S_2)+\beta^2 S_1 (1-S_2))
  \\&\leq \log (1+\beta^2 S_1+\beta^2 (1-S_2))\,.
\end{align*}
Continuing, for the constraint on $r_1^a+r_1^b+r_2^a$, we have
\begin{equation*}
  \begin{split}
    r_1^a+r_1^b+r_2^a &\leq \log(1+\beta)+\log(1+ \beta S_2)
    \\&= \log(1+\beta^2 S_2+\beta+\beta S_2)
    \\&\leq \log(1+\beta^2 S_2+ \beta^2) \, .
  \end{split}
\end{equation*} Again by symmetry, all constraints on 3 rates are seen to be satisfied at receiver 0. The last remaining sum-rate constraint is also satisfied:
\begin{equation*}
  \begin{split}
    r_1^a+r_1^b+r_2^a+r_2^b &\leq 2 \log (1+\beta)
    \\ &= \log (1+2\beta+ \beta^2)
    \\ &\leq \log (1+2\beta^2)\, .
  \end{split}
\end{equation*}
Thus, because receiver 0 can decode all three messages, the MAC constraints apply, and the sum-rate achieved by a Gaussian Han and Kobayashi scheme is upper bounded as $$r_\text{sum}^{HK}\leq \log(1+3\beta^2).$$

\section{Proof of Lemma~\ref{l:achieveCompatible}}
We prove the contrapositive of the statement of Lemma~\ref{l:achieveCompatible}:

\noindent\emph{Let $\vec r$ be some rate allocation achieving with equality two incompatible constraints on sets $A,A^\pr$ from Theorem~\ref{t:DeterministicCapacity}. Then there exists another constraint from Theorem~\ref{t:DeterministicCapacity} that is violated by $\vec r$.}

Suppose the constraints on two sets (vertices) $A$ and $A^\pr$ on the left-hand side of the bipartite graph are incompatible, i.e. in the bipartite graph some right-hand vertex $k^*$ is a solid neighbor of $A^\pr$ and a dashed neighbor of $A$ (or vice-versa). Furthermore, suppose the rate point $\vec r=\{r_0,\ldots,r_K\}$ achieves the constraints on $A$ and $A^\pr$ with equality. We shall assume that this rate point satisfies all constraints from Theorem~\ref{t:DeterministicCapacity} and derive a contradiction.

First, recall that vertices corresponding to individual rate constraints have only solid edges. Thus, $A$ must correspond to a sum-rate constraint, as it is assumed to have at least one dashed edge.
If the vertex $A^\pr$ corresponds to an individual rate constraint, it is straightforward to show that the given rate point $\vec r$ violates the constraint on $A\cup A^\pr$. Thus, we consider sets of users $A=\S\cup \{0\}$ and $A^\pr=\S^\pr\cup \{0\}$, where $\S,\S^\pr\subseteq \{1,\ldots, K\}$ and both $\S$ and $\S^\pr$ are nonempty.

A user $i$ is said to be \emph{occluded} by a set of users $I$ if for each $k$ with $i\in U_k$, $|U_k\cap (I\setminus \{i\})|\geq 1$. This means each level (at receiver 0) at which user $i$ can interfere is occupied by at least one user in $I\setminus \{i\}$. If the sum-rate constraint on $I$ is met with equality, and $i\in I$ is occluded by $I$, it is straightforward to show that the sum rate constraint on $I\setminus \{i\}$ is also met with equality.
If there is an occluded user $i\in \S^\pr$ interfering on level $\badk$, then, as in the case where $\S^\pr$ corresponds to an individual rate constraint, the constraint on the set $\{i\}\cup \S $ is violated. If there are three or more users sharing any set of levels, it is easy to show that one of them must be occluded by the other two. Hence, we can assume there are exactly two users $a$ and $b$ in $\S^\pr$ interfering on level $\badk$, i.e. $U_\badk \cap \S^\prime =\{a,b\}$. Choose $a,b$ so that $n_{0a}-n_{aa}<n_{0b}-n_{bb}$, and assume $n_{0b}>n_{0a}$ (otherwise $a$ occludes $b$). Moreover, by the previous statements, we can assume that no level contains more than two users from each of $\S$ and $\S^\pr$.

Let $\S_a=\{i\in \S: n_{0i}<\badk\}$ be the set of users in $\S$  whose interference at receiver 0 occurs at levels below $\badk$, and let $\S_b=\S\setminus \S_a$ be the remaining users in $\S$. Note that $\S_a$ and $\S_b$ occupy disjoint sets of levels, because level $\badk$ separates $\S_a$ and $\S_b$.
Similarly, let $\S_a^\prime=\{a\}\cup\{i\in \S^\prime: n_{0i}\leq \badk\}$ be the set of users in $\S^\prime$ whose interference at receiver zero occurs at or below user $a$, and let $\S_b^\prime=\S^\prime\setminus \S_a^\prime$. Figure~\ref{UserSelectionFig} depicts the relationship between the various sets.

\begin{figure}
\psset{unit=.6mm,arrowlength=1.2,arrowinset=0,labelsep=3pt}
\begin{center}
\begin{pspicture}(-1,-4)(113,82)
\psline[linestyle=dashed,dash=3pt 2pt](0,0)(112,0)
\psline(4,0)(4,75)(12,75)(12,0) \rput(8,-5){0}
\pspolygon(28,16)(28,36)(36,36)(36,16)\rput(32,26){\small $a$}
\pspolygon(52,30)(52,52)(60,52)(60,30)\rput(56,41){\small $b$}
\pspolygon(76,19)(76,31)(84,31)(84,19)
\pspolygon(96,35)(96,48)(104,48)(104,35)

\psellipse[linestyle=dotted](29,19)(14,22) \rput(24,10){\small
$\S^\prime_a$} \psellipse[linestyle=dotted](54,52)(14,25)
\rput(55,62){\small $\S^\prime_b$}
\psellipse[linestyle=dotted](80,18)(14,18)  \rput(80,10){\small
$\S_a$} \psellipse[linestyle=dotted](100,50)(14,21)
\rput(100,60){\small $\S_b$}

\pspolygon*(28,32)(28,34)(36,34)(36,32)
\pspolygon*(52,32)(52,34)(60,34)(60,32)
\pspolygon*(4,32)(4,34)(12,34)(12,32)   \rput(0,33){$\badk$}

\end{pspicture}
\end{center}
\caption{This figure illustrates the choice of sets $S_a,S_b,S_a^\pr,S_b^\pr$. Notice that both users $a$ and $b$ in $\S^\pr=\S_a^\pr\cup \S_b^\pr$ interfere on level $\badk$, while no user in $\S=\S_a\cup \S_b$ interferes on level $\badk$.}
\label{UserSelectionFig}
\end{figure}
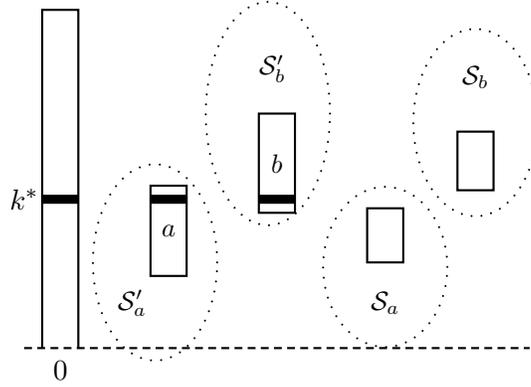

Denote by $f(\S)$ the value of the sum-rate constraint on $\S$, i.e. $$f(\S)=f_\text{free}(\S)+\sum_{k=1}^{\al_{00}}f_k(\S)\,.$$
It is given in the statement of the lemma that the sum-rate constraints on $\S\cup\{0\}$ and $\S^\pr \cup \{0\}$ are met with equality: \begin{equation}\label{ConstraintOnSetS}r_0+\sum_{i\in \S_a}r_i +\sum_{i\in \S_b} r_i  = f(\S)\end{equation}and \begin{equation}\label{ConstraintOnSetSPrime}r_0+\sum_{i\in \S_a^\prime}r_i +\sum_{i\in \S_b^\prime} r_i  = f(\S^\prime).\end{equation}
Next, we may assume that the constraints on $\S_b\cup \S_a^\pr\cup\{0\}$ and $\S_a\cup \S_b^\pr\cup \{0\}$ are satisfied, i.e. \begin{subequations}\label{ViolatedConstraintAssumptions}\begin{equation}\sum_{i\in \S_b} r_i +\sum_{i\in \S_a^\prime}r_i+r_0\leq f(\S_b\cup \S_a^\prime)\end{equation} and \begin{equation} \sum_{i\in \S_a} r_i +\sum_{i\in \S_b^\prime}r_i+r_0\leq f(\S_a\cup \S_b^\prime);\end{equation}\end{subequations} otherwise, we have the desired violated constraint.
Plugging these two inequalities into equation (\ref{ConstraintOnSetS}) gives \begin{equation*}r_0 + f(\S_b\cup \S_a^\prime) + f(\S_a\cup \S_b^\prime)-r_0-\sum_{i\in \S_a^\prime}r_i-r_0-\sum_{i\in \S_b^\prime} r_i\geq f(\S),\end{equation*}or, upon rearranging,\begin{equation*}f(\S_b\cup \S_a^\prime) + f(\S_a\cup \S_b^\prime)-f(\S)\geq r_0+\sum_{i\in \S_a^\prime}r_i+\sum_{i\in \S_b^\prime} r_i.\end{equation*} This, with equation (\ref{ConstraintOnSetSPrime}), implies that \begin{equation} \label{ContradictionEquation} f(\S_b\cup \S_a^\prime) + f(\S_a\cup \S_b^\prime)-f(\S) \geq f(\S^\prime),\end{equation} which, as we show next, is a contradiction.
The definition of $f(\cdot)$, the fact that $\S_a\cap \S_b= \S_a\cap \S_b^\pr=\S_b\cap \S_a^\pr=\varnothing$, and the fact that $\S_a$ and $\S_b$ occupy disjoint sets of levels gives
\begin{align}&\left(f(\S_b\cup \S_a^\prime) + f(\S_a\cup \S_b^\prime)-f(\S)\right)       \nonumber
\\& =
   \sum_{i\in \S_b\cup \S_a^\pr}\left( (n_{i0}-n_{ii})^+ +(n_{i0}-n_{00})^+\right)+
   n_{00}
+\sum_{k=1}^\an (|U_k\cap (\S_b\cup \S_a^\prime)|-1)^+     \label{2ndTerm}
 \\&\quad+
   \sum_{i\in \S_a\cup \S_b^\pr}\left( (n_{i0}-n_{ii})^+ +(n_{i0}-n_{00})^+\right)+
   n_{00}+\sum_{k=1}^\an (|U_k\cap (\S_a\cup \S_b^\prime)|-1)^+       \label{4thTerm}
\\ &\quad -
   \sum_{i\in (\S_a\cup \S_b)} \left( (n_{i0}-n_{ii})^+ +(n_{i0}-n_{00})^+\right) - n_{00}\nonumber
-\sum_{k=1}^\an (|U_k\cap (\S_a\cup \S_b)|-1)^+
   \nonumber
\\&=
   \sum_{i\in \S_a^\pr\cup \S_b^\pr}\left( (n_{i0}-n_{ii})^+ +(n_{i0}-n_{00})^+\right)+n_{00}\nonumber
\\&\quad
   + \sum_{k=1}^\an  (|U_k \cap (\S_b\cup \S_a^\pr)|-1)^+  \nonumber -\sum_{k=1}^\an (|U_k\cap \S_a|-1)^+
\\ &\quad +\sum_{k=1}^\an  (|U_k \cap (\S_a\cup \S_b^\pr)|-1)^+
   -\sum_{k=1}^\an (|U_k\cap \S_b|-1)^+ \label{manysums}
\end{align}
To continue, we rewrite the second term in (\ref{2ndTerm}) as
\begin{align}\nonumber
&\sum_{k=1}^\an  (|U_k \cap (\S_b\cup \S_a^\pr)|-1)^+
\\ & =
   \sum_{k:|U_k\cap \S_b|\neq 0}  (|U_k \cap \S_b|-1) +\sum_{k:|U_k\cap \S_b|\neq 0}|U_k\cap \S_a^\pr|\nonumber +
   \sum_{k: |U_k\cap \S_b|= 0}(|U_k\cap(\S_b\cup \S_a^\pr)|-1)^+ \nonumber
\\ & =
   \sum_{k=1}^\an (|U_k\cap \S_b|-1)^+
   +\sum_{k:|U_k\cap \S_b|\neq 0}|U_k\cap \S_a^\pr| +
  \sum_{k: b\notin U_k}(|U_k\cap \S_a^\pr|-1)^+ ,
\label{expression1}
\end{align}
where the last step follows from 1) the condition $|U_k\cap \S_b|= 0$ underneath the third summation; and 2) the observation that for $k$ such that $|U_k\cap \S_a^\pr|\neq 0,$ $|U_k\cap \S_b|\neq 0$ implies $b\in U_k$, and for $b\in U_k$, it holds that $|U_k\cap \S_a^\pr|\leq 1$, so $(|U_k\cap \S_a^\pr|-1)^+=0$.
Similarly, for the second term in (\ref{4thTerm}) we have \begin{align}\nonumber
&\sum_{k=1}^\an  (|U_k \cap (\S_a\cup \S_b^\pr)|-1)^+
\\ & =
   \sum_{k:|U_k\cap \S_a|\neq 0}  (|U_k \cap \S_a|-1) +\sum_{k:|U_k\cap \S_a|\neq 0}|U_k\cap \S_b^\pr|+
   \sum_{k: |U_k\cap \S_a|= 0}(|U_k\cap(\S_a\cup \S_b^\pr)|-1)^+ \nonumber
\\& =
   \sum_{k=1}^\an (|U_k\cap \S_a|-1)^+
   +\sum_{k:|U_k\cap \S_a|\neq 0}|U_k\cap \S_b^\pr|+
  \sum_{k: a\notin U_k}(|U_k\cap \S_b^\pr|-1)^+ .
\label{expression2}
\end{align}
Plugging (\ref{expression1}) and (\ref{expression2}) into equation (\ref{manysums}) and canceling terms results in the expression
\begin{align}
   &\sum_{i\in \S_a^\pr\cup \S_b^\pr}\left( (n_{i0}-n_{ii})^+ +(\alpha
   _{i0}-n_{00})^+\right)+n_{00}\nonumber
\\&\quad+\sum_{k:|U_k\cap \S_a|\neq 0}|U_k\cap \S_b^\pr|  \nonumber +
   \sum_{k: a\notin U_k}(|U_k\cap \S_b^\pr|-1)^+
    \\&\quad +\sum_{k:|U_k\cap \S_b|\neq 0}|U_k\cap \S_a^\pr|+
   \sum_{k: b\notin U_k}(|U_k\cap \S_a^\pr|-1)^+. \label{temp1}
\end{align}
By performing manipulations in the same style as above, it is possible to write \begin{align*} f(\S^\pr)
&=
   \sum_{i\in \S_a^\pr\cup \S_b^\pr}\left( (n_{i0}-n_{ii})^+     +   (\alpha
   _{i0}-n_{00})^+\right)+n_{00}
\\&\quad+   |\{k:\{a,b\}\subseteq U_k\}|
   +    \sum_{k: b\notin U_k}(|U_k\cap \S_a^\pr|-1)^+
\\&\quad +
   \sum_{k: a\notin U_k}(|U_k\cap \S_b^\pr|-1)^+.\end{align*}
Comparing this with the previous expression (\ref{temp1}), we conclude that
\begin{align*}&f(\S_b\cup \S_a^\prime) + f(\S_a\cup
   \S_b^\prime)-f(\S)
\\ & =
   f(\S^\prime)+
   -|\{k:\{a,b\}\subseteq U_k\}| +
   \sum_{k:|U_k\cap \S_a|\neq 0}|U_k\cap \S_b^\pr|
   +\sum_{k:|U_k\cap \S_b|\neq 0}|U_k\cap \S_a^\pr|
\\ & =
   f(\S^\prime) -|\{k:\{a,b\}\subseteq U_k\}|
   +    |\{k:b\in U_k,|U_k\cap \S_a|\neq 0\}|
+
   |\{k:a\in U_k,|U_k\cap \S_b|\neq 0\}|
\\ &
   \leq f(\S^\pr)-1.
\end{align*}
But this contradicts equation (\ref{ContradictionEquation}), proving the lemma.


\section{Proof of the Sum-Rate Constraint for the Many-to-One Gaussian Channel}
The proof of the sum-rate constraint uses a genie-aided channel, or in other words, allows the receivers access to side information. The main difficulty of the proof lies in choosing this side information. The crucial insight is provided by the deterministic channel model. Recall the side information given to receiver 0 in the many-to-one deterministic IC \eqref{e:manyToOneSideInf}; there, on each level receiver 0 was given the signals of all interfering users except for one. 
From Figure~\ref{fig:GaussianManyToOneSideInf}, we see that this side information corresponds exactly to giving the top portion of each interfering signal. Informed by the analogy that additive Gaussian noise corresponds to truncation in the deterministic channel (see Figure~\ref{fig:p2pDeterministic}),
we give side information
\begin{align*}s_0&=\sig_{m}
\\s_k &= (\sum_{i=1}^k h_{io} x_i+z_0,\sig_k),\quad 1\leq k\leq m\, ,\end{align*}
where for each $k$, $1\leq k\leq m$, we have
$$\sig_k=(h_{10}x_1+w_1+z_0,h_{20}x_2+w_2,\dots,h_{k0}x_{k}+w_{k})
$$ with $\sig_0=0$
and
\begin{align*}
  &w_i\sim\CN(0, \n \max( \inr_{i+1}/\snr_{i+1},1)),\quad 1\leq i\leq m-1 \\ &w_m\sim\CN(0,\n \snr_0).
\end{align*}

\begin{figure}
\begin{centering}
\psset{unit=.8mm,arrowlength=1,arrowinset=0,labelsep=2pt}
\begin{center}
\begin{pspicture}(-1,-2)(87,62)
\psline[linestyle=dashed,dash=3pt 2pt](0,0)(86,0)
\psline(4,0)(4,60)(12,60)(12,0) \rput(8,-5){0}
\pspolygon(28,0)(28,34)(36,34)(36,0)\rput(32,-5){1}
\pspolygon(52,20)(52,48)(60,48)(60,20)\rput(56,-5){2}
\pspolygon(76,42)(76,57)(84,57)(84,42)\rput(80,-5){3}

\pspolygon[fillstyle=solid, fillcolor=gray](28,20)(28,34)(36,34)(36,20)

\pspolygon[fillstyle=solid, fillcolor=gray](52,42)(52,48)(60,48)(60,42)

\uput[r](12,60){\footnotesize $\log \snr_0$} \uput[u](40,34){\footnotesize
    $\log \inr_1$}
\uput[u](65,48){\footnotesize $\log \inr_2$} \uput[r](60,20){\footnotesize{
$\log\frac{\inr_2}{\snr_2}$}} \uput[u](88,57){\footnotesize
$\log\inr_3$} \uput[d](88,42){\footnotesize
$\log\frac{\inr_3}{\snr_3}$}

\end{pspicture}
\end{center}
\caption{The side information given to user $0$ is shaded. The side information is precisely the portion of each signal overlapping with the next signal.}\label{fig:GaussianManyToOneSideInf}
\end{centering}
\end{figure}
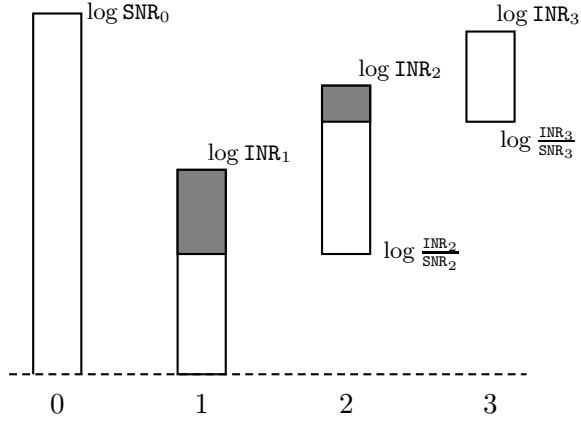

With this choice of side information the proof is straightforward albeit fairly lengthy.
Fano's inequality and the data processing inequality
imply that
\begin{align}\label{GaussianSumRateFano}
N(r_0 + r_1+\dots+r_m-\eps_N)
&\leq
    \sum_{i=0}^m I(x_i^N; y_i^N,s_i^N),
\end{align}
where $\eps_N\to 0$ as $N\to \infty$.
Each term in the sum can be expanded as $$I(x_i^N; y_i^N,s_i^N)= h(y_i^N|s_i^N)+h(s_i^N)-h(y_i^N,s_i^N|x_i^N).$$
Using the fact that $x_0$ is independent of $\sig_{m}$ and $x_k$ is independent of $\sig_{k-1}$ for $1\leq k\leq m$, the negative terms evaluate to
\begin{align*}h(y_0^N,s_0^N|x_0^N)
&=  h\left(h_{00}x_0^N+\sum_{i=1}^m
    h_{i0}x_i^N+z_0^N,\sig_{m}^N|x_0\right)
\\&\quad=  h(\sum_{i=1}^m
    h_{i0}x_i^N+z_0^N,\sig_{m}^N)
=    h(s_m^N),
\\
h(y_1^N,s_1^N|x_1^N)
&= h(h_{11}x_1^N+z_1^N,h_{01}x_1^N+z_0^N,h_{10}x_1+w_1^N+z_0^N
|x_1^N)
\\&\quad =  h(z_1^N,z_0^N,w_1^N)
\\&\quad =  h(z_1^N)+h(z_0^N)+h(w_1^N)
\end{align*}
and for
$2\leq k\leq m$,
\begin{align*}h(y_k^N,s_k^N|x_k^N)
&=
    h(h_{kk}x_k^N+z_k^N,\sum_{i=1}^k h_{i0}
    x_i^N+z_0^N,\sig_k^N|x_k^N)
\\&=
    h(z_k^N)+h(\sum_{i=1}^{k-1}
    h_{i0}x_i^N+z_0^N,\sig^N_{k-1},w_k^N)
\\&=
    h(z_k^N)+h(s_{k-1}^N)+h(w_k^N).
\end{align*}
The sum in equation (\ref{GaussianSumRateFano}) telescopes, giving
\begin{align}
N(r_0 + r_1+\cdots+r_m-\eps_N) \nonumber
&\leq
    \sum_{i=0}^m h(y_i^N|s_i^N)+h(s_0^N)-h(y_1^N,s_1^N|x_1^N) \nonumber
\\&\quad +
    h(s_m^N)-h(y_0^N,s_0^N|x_0^N)
+\sum_{i=1}^{m-1} \left[h(s_i^N)-h(y_{i+1}^N,s_{i+1}^N|x_{i+1}^N)\right] \nonumber
\\&=  \label{GaussianSumRateTelescope}
    \sum_{i=0}^m \bigg[h(y_i^N|s_i^N)-h(z_i^N)\bigg]-\sum_{i=1}^{m}h(w_i^N)+h(s_0^N).
\end{align}
Next, we bound each term using the independence bound on entropy, and the fact that the Gaussian distribution maximizes differential entropy for a fixed (conditional) variance.
%

\begin{fact}[Worst-case Conditional Entropy]
Let $z_1\sim \CN(0,\sig_1^2)$, $z_2\sim \CN(0,\sig_2^2)$, and $x$ be mutually independent with $E(|x|^2)\leq P$. Then \begin{equation} h(x+z_1|x+z_2)\leq \log \left[\pi e\left(\sig_1^2+\frac{P\sig_2^2}{P+\sig_2^2}\right)\right].\end{equation} \begin{proof}\begin{align*}h(x+z_1|x+z_2)
&=h(x+z_1-\alpha(x+z_2)|x+z_2)\\
&\leq h(z_1+x(1-\alpha)-\alpha z_2)
\\&\leq \log \left[\pi e\left(\sig_1^2+P(1-\alpha)^2+\alpha^2 \sig_2^2\right)\right] 
\\&= \log \left[\pi e\left(\sig_1^2+P\frac{\sig_2^2}{(P+\sig_2^2)^2}+\frac{P^2}{(P+\sig_2^2)^2} \sig_2^2\right)\right]
\\&= \log \left[\pi e\left(\sig_1^2+\frac{P\sig_2^2}{P+\sig_2^2}\right)\right]
\end{align*}
where the second to last equality follows by choosing $\alpha=P/(P+\sig_2^2)$.
\end{proof}
\end{fact}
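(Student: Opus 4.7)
The plan is to exploit the translation invariance of conditional differential entropy together with the Gaussian maximum-entropy principle. Specifically, for any constant $\alpha \in \C$, we have the identity $h(x+z_1 \mid x+z_2) = h((x+z_1) - \alpha(x+z_2) \mid x+z_2)$, since subtracting a measurable function of the conditioning variable leaves conditional entropy unchanged. Dropping the conditioning can only increase entropy, so
\[
h(x+z_1 \mid x+z_2) \;\leq\; h\bigl((1-\alpha)x + z_1 - \alpha z_2\bigr).
\]
The right-hand side is the unconditional entropy of a sum of three independent zero-mean random variables.

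Next I would invoke the fact that among all complex random variables of a given second moment, the circularly symmetric Gaussian maximizes differential entropy; hence
\[
h\bigl((1-\alpha)x + z_1 - \alpha z_2\bigr) \;\leq\; \log\!\Bigl[\pi e\bigl(|1-\alpha|^2 P + \sigma_1^2 + |\alpha|^2 \sigma_2^2\bigr)\Bigr],
\]
using independence to split the variances and the assumption $E|x|^2 \leq P$.

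Finally, the bound is valid for every $\alpha$, so I would minimize the variance expression over $\alpha$. This is a standard (linear MMSE) quadratic optimization: differentiating $(1-\alpha)^2 P + \alpha^2 \sigma_2^2$ in $\alpha$ and setting the derivative to zero yields the optimal choice $\alpha^\star = P/(P+\sigma_2^2)$. Substituting back gives $|1-\alpha^\star|^2 P + |\alpha^\star|^2 \sigma_2^2 = P\sigma_2^2/(P+\sigma_2^2)$, which combined with the $\sigma_1^2$ term produces exactly the claimed bound. There is no real obstacle here — the only point to be slightly careful about is that the complex/real distinction affects the constant $\pi e$ versus $2\pi e$, but since the statement uses $\CN$ throughout, the complex Gaussian entropy formula applies uniformly and matches the stated right-hand side.
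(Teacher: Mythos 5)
Your proposal is correct and follows essentially the same route as the paper's proof: subtract $\alpha(x+z_2)$ inside the conditional entropy using translation invariance, drop the conditioning, apply the complex Gaussian maximum-entropy bound to get $\log\left[\pi e\left(\sigma_1^2+|1-\alpha|^2P+|\alpha|^2\sigma_2^2\right)\right]$, and set $\alpha=P/(P+\sigma_2^2)$. The only cosmetic difference is that you present this choice of $\alpha$ as the outcome of an explicit (linear MMSE) minimization, whereas the paper simply substitutes the value directly; the computation is identical.
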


%

We have
\begin{align}
h(s_0^N)&\leq
    \sum_{j=1}^N\left[\sum_{k=2}^{m}h(h_{k0}x_{k,j}+w_{k,j})+h(h_{10}x_{1,j}+z_{0j}+w_{1,j})\right]\nonumber
\\&\leq
    \sum_{j=1}^N\bigg[\sum_{k=2}^{m}\log\left[ \pi e \left( |h_{k0}|^2
    P_{k,j}+P_{w_k} \right)\right]
+\log\left[ \pi e \left( |h_{10}|^2 P_{1,j}+\n+P_{w_1}\right)\right]\bigg]\nonumber
\\&\leq
    N\sum_{k=2}^{m}\bigg\{\log\left[ \pi e \left( |h_{k0}|^2
    \frac{1}{N}\sum_{j=1}^N P_{k,j}+P_{w_k} \right)\right]
\\&\quad +\log\left[\pi e\left(|h_{10}|^2 \frac{1}{N}\sum_{j=1}^NP_{1,j}+\n+P_{w_1}\right)\right]\bigg\},\nonumber
\end{align}
where $P_{k,j}=E|x_{k,j}|^2$.
Jensen's inequality, the power constraint $\frac{1}{N}\sum_j P_{k,j}\leq P_k$, and the fact that $\log x$ is an increasing function justify the remaining steps, continuing from above.
\begin{align}
&\leq
N\sum_{k=2}^{m}\log\left[ \pi e \left( |h_{k0}|^2
    P_{k}+P_{w_k} \right)\right]
+n\log\left[\pi e\left(|h_{10}|^2 P_1+\n +P_{w_1}\right)\right] \nonumber
\\&= \nonumber
    N\Bigg[\sum_{k=2}^{m-1}
    \log\left(
    \inr_k+\max(\frac{\inr_{k+1}}{\snr_{k+1}},1)\right)
+\log\left(1+\inr_1 +\max(\frac{\inr_2}{\snr_2},1)\right)\Bigg]
\\&\quad+N\left[\log(\inr_m+\snr_0) +m\log(\pi
    e \n)\right]\nonumber
\\&\leq \label{GaussianTerm1}
    N\Bigg[1+\sum_{k=1}^{m-1}\log\left(
    \inr_k+\max(\frac{\inr_{k+1}}{\snr_{k+1}},1)\right)
+\log(\inr_m+\snr_0) +m\log(\pi
    e \n)\Bigg],
\end{align}
and similarly
for $2\leq k\leq m$,
\begin{align}h(y_k^N|s_k^N)
&\leq
    \sum_{j=1}^N h(y_{k,j}|s_{k,j}) \nonumber
\\&=
    \sum_{j=1}^N h(h_{kk}x_{k,j}+z_{k,j}|\sum_{i=1}^k h_{i0}x_{i,j}+z_0,\sig_{k,j})\nonumber
\\&\leq
    \sum_{j=1}^N h(h_{kk} x_{k,j}+z_{k,j} | h_{k0} x_{k,j} -\sum_{i=1}^{k-1}w_{i,j})\nonumber
\\&=
    \sum_{j=1}^N h(h_{kk} x_{k,j}+z_{k,j} | h_{kk} x_{k,j} -\sum_{i=1}^{k-1} \frac{h_{kk}}{h_{k0}}w_{i,j}) \nonumber
\\ &\leq
    \sum_{j=1}^N \log\Bigg[ \pi e \bigg(\n 
  +\frac{P_{k,j} |h_{kk}|^2(\n \frac{|h_{kk}|^2}{|h_{k0}|^2}) \sum_{i=2}^{k} \max(\frac{\inr_{i}}{\snr_{i}},1)}{P_{k,j}|h_{kk}|^2+(\n \frac{|h_{kk}|^2}{|h_{k0}|^2}) \sum_{i=2}^{k} \max(\frac{\inr_{i}}{\snr_{i}},1)} \bigg)\Bigg] \nonumber
\\&\leq
     N \log\Bigg[ \pi e \bigg(\n
    + \frac{P_{k} |h_{kk}|^2(\n |h_{kk}|^2/|h_{k0}|^2) \sum_{i=2}^{k} \max(\frac{\inr_{i}}{\snr_{i}},1)}{P_{k}|h_{kk}|^2+(\n |h_{kk}|^2/|h_{k0}|^2) \sum_{i=2}^{k} \max(\frac{\inr_{i}}{\snr_{i}},1)} \bigg)\Bigg] \nonumber
\\&= \label{GaussianTerm2}
    N\log\left(1+\frac{\snr_k\left(\frac{\snr_k}{\inr_k}\sum_{i=2}^k \max(\frac{\inr_{i}}{\snr_{i}},1)\right)}{\snr_k+\left(\frac{\snr_k}{\inr_k}\sum_{i=2}^k \max(\frac{\inr_{i}}{\snr_{i}},1)\right)}\right) +
    N\log(\pi e \n).
\end{align}
Likewise,
\begin{align}
h(y_1^N|s_1^N)\nonumber
\leq
    N\log\left(1+\frac{ \snr_1 \cdot \left(\frac{\snr_1}{\inr_1}+1\right)}{\snr_1+\frac{\snr_1}{\inr_1}+1}\right)+
    N\log(\pi e \n)
\end{align}
and
\begin{align}\label{GaussianTerm3}
h(y_0^N|s_0^N)
\leq  N\log\left[2\snr_0+\sum_{i=1}^{m-1}
\max\left(\frac{\inr_{i+1}}{\snr_{i+1}},1\right)\right]+N\log(\pi e \n)
\end{align}
Finally, by the definition of $w_i$,
\begin{equation}\label{GaussianTerm4}
\begin{split}
h(w_i)&=    \log\left[\pi e \n
    \max(\frac{\inr_{i+1}}{\snr_{i+1}},1)\right],\quad 1\leq i\leq m-1. \\
h(w_m)&=    \log\left(\pi e \n
\snr_0\right).
\end{split}
\end{equation}
Plugging equations (\ref{GaussianTerm1}-\ref{GaussianTerm4}) into (\ref{GaussianSumRateTelescope}), we have the desired sum-rate bound:
\begin{equation}
\begin{split}\label{Gaussian_sum_rate}
&r_0+ r_1 +\dots +r_m
\\& \leq   1+ \sum_{k=1}^{m-1}\log  \left[
    \inr_k+\max\left(\frac{\inr_{k+1}}{\snr_{k+1}},1\right)\right]
+\log\left(\inr_m+\snr_0\right)
\\&\quad+
    \sum_{k=1}^m \log\left[1+\frac{\snr_k    \left(  \frac{\snr_k}{\inr_k}\sum_{i=2}^k \max\left(\frac{\inr_{i}}{\snr_{i}},1\right)\right)}{\snr_k+
    \left(\frac{\snr_k}{\inr_k}\sum_{i=2}^k \max\left(\frac{\inr_{i}}{\snr_{i}},1\right)   \right)}\right]
\\&\quad+
\log\left[2\snr_0+\sum_{i=1}^{m-1}
\max\left(\frac{\inr_{i+1}}{\snr_{i+1}},1\right)\right]
\\&\quad-
\sum_{k=1}^{m-1} \log\left[
\max\left(\frac{\inr_{k+1}}{\snr_{k+1}},1\right)\right]-\log(\pi e \n \snr_0).
\end{split}\end{equation}

The structure of the outer bound is not clear from equation (\ref{Gaussian_sum_rate}); therefore, we loosen the constraints in order that their form resemble the deterministic channel constraints.  Beginning with the second sum, consider two cases: $\frac{\inr_k}{\snr_k}\geq 1$, or $\frac{\inr_k}{\snr_k}< 1$. In the first case we proceed as follows. By the assumed ordering on the users, $\inr_i/\snr_i\leq \inr_{i+1}/\snr_{i+1}$ for $1\leq i\leq m-1$, and hence $$\frac{\snr_k}{\inr_k}\sum_{i=2}^k \max\left(\frac{\inr_{i}}{\snr_{i}},1\right)\leq k-1.$$ In the second case,
$$\frac{\snr_k}{\inr_k}\sum_{i=2}^k \max\left(\frac{\inr_{i}}{\snr_{i}},1\right)= \frac{\snr_k}{\inr_k} (k-1).$$
The second sum can therefore be bounded as
\begin{align}
\sum_{k=1}^m \log\left(1+\frac{\snr_k    \left(  \frac{\snr_k}{\inr_k}\sum_{i=2}^k \frac{\inr_{i}}{\snr_{i}}\right)}{\snr_k+
    \left(\frac{\snr_k}{\inr_k}\sum_{i=2}^k \frac{\inr_{i}}{\snr_{i}}   \right)}\right) \nonumber
&\leq       \sum_{k=1}^m \left(\log
    k+\log\bigg(\frac{\snr_k}{\inr_k}\bigg)^+ \right)  \nonumber
\\&\leq     \int_1^{m+1} \log x dx +\sum_{k=1}^m
    \log\bigg(\frac{\snr_k}{\inr_k}\bigg)^+\nonumber
\\& =       -m + ( m + 1 ) \log( m + 1 )+\sum_{k=1}^m
    \log\bigg(\frac{\snr_k}{\inr_k}\bigg)^+.
    \label{e:manyToOneErrorTerm}
\end{align}
Next, the first and last sum in equation (\ref{Gaussian_sum_rate}) can be simplified as
\begin{align}
&\sum_{k=1}^{m-1}   \log  \left(
    \inr_k+\max\left(\frac{\inr_{k+1}}    {\snr_{k+1}},1\right)\right)
    -\sum_{k=1}^{m-1} \log\left(
    \frac{\inr_{k+1}}{\snr_{k+1}}\right)^+ \nonumber
\\&\leq
     (m-1)+\sum_{k=1}^{m-1} \max\left(\log(\inr_k),\log\left(\frac{\inr_{k+1}}    {\snr_{k+1}}\right)^+\right) -\log\left(
    \frac{\inr_{k+1}}{\snr_{k+1}}\right)^+\nonumber
\\&\leq
     (m-1) + \sum_{k=1}^{m-1} \left(\log(\inr_k)- \log\left(
    \frac{\inr_{k+1}}{\snr_{k+1}}\right)^+\right)^+.\nonumber
\end{align}
Lastly, we upper bound the second-from-last sum in equation (\ref{Gaussian_sum_rate}) as
\begin{align}
  \log\left[2\snr_0+\sum_{i=1}^{m-1}
    \max\left(\frac{\inr_{i+1}}{\snr_{i+1}},1\right)\right] \nonumber
\leq
    \log(m+1)+ \log(\snr_0),
\end{align}
resulting in a cruder, yet simpler, sum-rate bound:
\begin{align*}
r_0+ r_1 +\dots +r_m \nonumber
&\leq
    \sum_{k=1}^m
    \log\bigg(\frac{\snr_k}{\inr_k}\bigg)^+
+\sum_{k=1}^{m-1} \left(\log(\inr_k)- \log\left(
    \frac{\inr_{k+1}}{\snr_{k+1}}\right)^+\right)^+ \nonumber
\\ &\quad
    +\max\left(\log(\inr_m),\log(\snr_0)\right)
 +( m + 2 ) \log( m + 1 )+1.
\end{align*}

\bibliographystyle{ieeetr}
\bibliography{ManyToOneBIB}

\end{document}